\documentclass[pra,aps,twocolumn,superscriptaddress,english]{revtex4-2}
\usepackage[T1]{fontenc}
\usepackage[utf8]{inputenc}
\usepackage{lipsum,babel}
\usepackage{mathtools}
\usepackage{amsthm}
\usepackage{hyperref}

\usepackage{amsmath,amssymb,mathrsfs}

\usepackage{graphicx}
%
%

\usepackage[matrix,frame,arrow]{xy}
\usepackage{amsmath}

\newcommand{\qw}[1][-1]{\ar @{-} [0,#1]}



\newcommand{\gate}[1]{*{\xy *+<.6em>{#1};p\save+LU;+RU **\dir{-}\restore\save+RU;+RD **\dir{-}\restore\save+RD;+LD **\dir{-}\restore\POS+LD;+LU **\dir{-}\endxy} \qw}



\newcommand{\measureD}[1]{*{\xy*+=+<.5em>{\vphantom{\rule{0em}{.1em}#1}}*\cir{r_l};p\save*!R{#1} \restore\save+UC;+UC-<.5em,0em>*!R{\hphantom{#1}}+L **\dir{-} \restore\save+DC;+DC-<.5em,0em>*!R{\hphantom{#1}}+L **\dir{-} \restore\POS+UC-<.5em,0em>*!R{\hphantom{#1}}+L;+DC-<.5em,0em>*!R{\hphantom{#1}}+L **\dir{-} \endxy} \qw}

\newcommand{\multimeasureD}[2]{*+<1em,.9em>{\hphantom{#2}}\save[0,0].[#1,0];p\save !C *{#2},p+LU+<0em,0em>;+RU+<-.8em,0em> **\dir{-}\restore\save +LD;+LU **\dir{-}\restore\save +LD;+RD-<.8em,0em> **\dir{-} \restore\save +RD+<0em,.8em>;+RU-<0em,.8em> **\dir{-} \restore \POS !UR*!UR{\cir<.9em>{r_d}};!DR*!DR{\cir<.9em>{d_l}}\restore \qw}






\newcommand{\multigate}[2]{*+<1em,.9em>{\hphantom{#2}} \qw \POS[0,0].[#1,0];p !C *{#2},p \save+LU;+RU **\dir{-}\restore\save+RU;+RD **\dir{-}\restore\save+RD;+LD **\dir{-}\restore\save+LD;+LU **\dir{-}\restore}
\newcommand{\ghost}[1]{*+<1em,.9em>{\hphantom{#1}} \qw}
\newcommand{\ustick}[1]{*!D!<0em,-.5em>=<0em>{#1}}

\newcommand{\Qcircuit}[1][0em]{\xymatrix @*=<#1>}


\newcommand{\pureghost}[1]{*+<1em,.9em>{\hphantom{#1}}}
\newcommand{\multiprepareC}[2]{*+<1em,.9em>{\hphantom{#2}}\save[0,0].[#1,0];p\save !C
  *{#2},p+RU+<0em,0em>;+LU+<+.8em,0em> **\dir{-}\restore\save +RD;+RU **\dir{-}\restore\save
  +RD;+LD+<.8em,0em> **\dir{-} \restore\save +LD+<0em,.8em>;+LU-<0em,.8em> **\dir{-} \restore \POS
  !UL*!UL{\cir<.9em>{u_r}};!DL*!DL{\cir<.9em>{l_u}}\restore}
\newcommand{\prepareC}[1]{*{\xy*+=+<.5em>{\vphantom{#1\rule{0em}{.1em}}}*\cir{l^r};p\save*!L{#1} \restore\save+UC;+UC+<.5em,0em>*!L{\hphantom{#1}}+R **\dir{-} \restore\save+DC;+DC+<.5em,0em>*!L{\hphantom{#1}}+R **\dir{-} \restore\POS+UC+<.5em,0em>*!L{\hphantom{#1}}+R;+DC+<.5em,0em>*!L{\hphantom{#1}}+R **\dir{-} \endxy}}


\newcommand{\transf}[1]{\ensuremath{\mathscr{#1}}}
\newcommand{\tA}{\transf A}
\newcommand{\tB}{\transf B}
\newcommand{\tC}{\transf C}
\newcommand{\tD}{\transf D}
\newcommand{\tE}{\transf E}
\newcommand{\tF}{\transf F}
\newcommand{\tG}{\transf G}

\newcommand{\tU}{\transf U}
\newcommand{\tV}{\transf V}

\newcommand{\tI}{\transf I}
\newcommand{\sys}[1]{\ensuremath{\mathrm{#1}}}
\newcommand{\rA}{\sys A}
\newcommand{\rB}{\sys B}
\newcommand{\rC}{\sys C}
\newcommand{\rD}{\sys D}

\newcommand{\rI}{\sys I}

\newcommand{\rX}{\sys X}
\newcommand{\rY}{\sys Y}

\newcommand{\st}{\mathsf{St}}
\newcommand{\purst}{\mathsf{PurSt}}
\newcommand{\Supp}{\mathsf{Supp}}

\newcommand{\eff}{\mathsf{Eff}}
\newcommand{\tr}{\mathsf{Tr}}

\newcommand{\R}{\mathbb{R}}
\newcommand{\N}{\mathbb{N}}

\newcommand{\bi}{\mathbf{i}}
\newcommand{\bj}{\mathbf{j}}
\newcommand{\bp}{\mathbf{p}}
\newcommand{\bPi}{\mathbf{\Pi}}
\newcommand{\be}{\mathbf{e}}
\newcommand{\Rng}{\mathsf{Rng}}
\newcommand{\Tr}{\mathrm{Tr}}

\newtheorem{theorem}{Theorem}[section]
\newtheorem{definition}{Definition}[section]
\newtheorem{proposition}{Proposition}[section]
\newtheorem{lemma}{Lemma}[section]
\newtheorem{assumption}{Assumption}

\DeclarePairedDelimiter{\norma}{\lVert}{\rVert}
\DeclarePairedDelimiter{\normop}{\lVert}{\rVert_{\mathrm {op}}}

\DeclarePairedDelimiter{\ceil}{\lceil}{\rceil}
\DeclarePairedDelimiter{\pairing}{(}{)}
\DeclarePairedDelimiter{\state}{|}{)}
\DeclarePairedDelimiter{\effect}{(}{|}
\DeclarePairedDelimiter{\abs}{|}{|}

\newcommand{\cH}{\mathcal{H}}

\newcommand{\cL}{\mathcal{L}}

\newcommand{\sX}{\mathsf{X}}
\newcommand{\sY}{\mathsf{Y}}

\def\bvec#1{\mathbf{#1}}

\usepackage{xcolor}

\begin{document}

\title{	Shannon theory beyond quantum: information content of a source}

\author{Paolo \surname{Perinotti}}

\email{paolo.perinotti@unipv.it}

\affiliation{QUIT group, Physics Dept., Pavia University, and INFN Sezione di Pavia, via Bassi 6, 27100 Pavia, Italy}

\author{Alessandro \surname{Tosini}}

\email{alessandro.tosini@unipv.it}

\affiliation{QUIT group, Physics Dept., Pavia University, and INFN Sezione di Pavia, via Bassi 6, 27100 Pavia, Italy}

\author{Leonardo \surname{Vaglini}}

\email{leonardo.vaglini01@universitadipavia.it}

\affiliation{QUIT group, Physics Dept., Pavia University, and INFN Sezione di Pavia, via Bassi 6, 27100 Pavia, Italy}

\begin{abstract}
The information content of a source is defined in terms of the minimum number of bits needed to store the output of 
the source in a perfectly recoverable way. A similar definition can be given in the case of quantum sources, with 
qubits replacing bits. In the mentioned cases the information content can be quantified through Shannon's and von 
Neumann's entropy, respectively. Here we extend the definition of information content to operational probabilistic theories, and
prove relevant properties as the subadditivity, and the relation between purity and information content of a state. We prove the consistency of the present notion of information content when applied to the classical and the quantum case. Finally, the relation with one of the notions of entropy that can be introduced in general probabilistic theories, the maximum accessible information, is given in terms of a lower bound. 
\end{abstract}
\maketitle

\section{Introduction}
The birth of information theory, marked by Shannon's pioneering work~\cite{shannon1949communication}, is represented 
by a thorough definition of information content of an information source, along with its quantification through a 
suitable quantity---the celebrated Shannon entropy. Since its early times, the information content was identified with
the minimum number of elementary information carriers---bits---needed to encode messages from the source in a 
perfectly recoverable way.

The notion of information content was then lifted to the quantum scenario, where the elementary carrier is a qubit. 
Schumacher~\cite{PhysRevA.51.2738} proved that the quantum information content of a quantum source can be 
quantified through its von Neumann's entropy. The typical setting for the proof of the above mentioned results entails
a regularisation procedure, i.e.~considering an arbitrarily large number of uses of the source, and encoding schemes 
that are not perfect, but arbitrarily accurate. This definition makes the notion of 
information content depend on the choice of the figure of merit used to evaluate accuracy ~\cite{Koashi2001,Barnum_2001,Winter_2019}. 
While in the classical 
case one uses the (complement of) error probability, Schumacher's theorem uses the entanglement fidelity, which 
evaluates how well the encoding scheme preserves entanglement of the source with an external system. As we show here, this figure of merit is equivalent to the average input-output fidelity for an arbitrary decomposition of the state 
representing the source. One big lesson of quantum information theory is thus that preserving coherence is equivalent 
to preserving entanglement, in a motto: {\em quantum information is entanglement}~\cite{schumacher_westmoreland_2010}.

In a broader sense, the above argument teaches us to judge the action of a transformation looking not only at the
way it transforms input states, but also correlations with remote systems. While in quantum theory the two 
perspectives are somehow interchangeable, there are other information theories where the second one becomes mandatory,
such as Real Quantum Theory~\cite{hardy2012limited} or Fermionic 
Theory~\cite{2014fermionic,doi:10.1142/S0217751X14300257}. The latter two theories are examples of Operational 
Probabilistic Theories~\cite{DAriano:2017aa,PhysRevA.81.062348,PhysRevA.84.012311,PhysRevA.101.042118} (OPT), or more generally of Generalized Probabilistic 
Theories~\cite{hardy1999disentangling,Spekkens:2007aa,PhysRevA.75.032304}. The definition of OPT sets a framework of 
alternative theories of elementary information carriers (or possibly elementary physical systems) and their processes, 
that can be thought of as all the conceivable information theories. In the present work we show that it is actually 
possible to extend the notion of information content discussed above to the framework of OPTs, and compare it to some 
entropic functions introduced in this context by analogy with known entropies~\cite{Barnum:2010gy,Short:2010kt,2010RpMP...66..175K}. 

The requirement that we impose on compression schemes featuring in the definition of information content is that their
effect on any preparation of ensembles that average to the considered state must be indistinguishable from leaving the
preparation untouched. Thus, besides considering any refinement of the state under discussion, we consider the action 
of the compression scheme on decompositions of its {\em dilations}, i.e.~joint 
states of the system and an arbitrary external system such that the state that one obtains after averaging and 
discarding the external system is precisely the one of interest.

The importance of considering the effect of transformations on external systems was recently discussed in other contexts, e.g.~in
assessing information vs disturbance~\cite{DAriano2020information}.

It is worth mentioning one of the main assumptions made in the present work.
 In classical and quantum Shannon theories the amount
of information is measured in bits and qubits respectively, as we have recalled just above. For a generic theory, with no further restrictions on its structure, 
in principle, one may not be able to identify an elementary information carrier. For this reason, we will consider theories that we name "digitalizable." Roughly speaking, we assume the existence of at least one elementary system, which we call obit, such that an agent can always encode an arbitrary but finite number of copies of her/his
system into an array made of an integer number of obits. This feature is evidently satisfied by classical and quantum theory, and it does not rule out scenarios that are relevant from a foundational perspective, such as non-local boxes \cite{popescu1994,d2010testing}. The latter is the prototypical 
example of a theory where the various notions of entropy exhibit odd features, such as violation of strong subadditivity, subadditivity and concavity, and where they are
also proven to be not equivalent \cite{Barnum:2010gy,Short:2010kt,2010RpMP...66..175K}. Moreover, this assumption can also be applied to theories without local
tomography, such as real quantum theory and quantum theory with superselection (e.g. fermionic theory). 

After introducing the notion of information content in a general OPT, we prove some of its main properties and
show that the optimised accessible information~\cite{Barnum:2010gy,Short:2010kt,2010RpMP...66..175K} generally 
provides a lower bound for it. As special cases, we then analyse classical and quantum theory, where our definition 
boils down to Shannon's and von Neumann's entropy, respectively. As a consequence of the present definition, finally,
Fermionic information content can be proved to coincide with von Neumann's entropy of the Fermionic 
state~\cite{perinotti2021shannon}.

The paper is organised as follows. In sec.~\ref{sec:fram}  we give an account of the OPT framework. We set up
the basic terminology and we provide the reader with the relevant definitions and assumptions that are necessary in the present work. In sec.~\ref{sec:infocont} 
we give the formal definition of information content and we show that such a definition is well-posed. We also prove properties that do not require  assumptions on the structure of the theory: subadditivity and invariance under reversible transformations acting on the state at hand. In sec.~\ref{sec:steer} we explore the consequence of the steering assumption, namely the possibility to steer any ensemble of the state by means of one of its dilations. We show that it is possible to assess the 
reliability of a compression protocol by taking into account the dilations of the state only, instead of considering also all the possible decompositions of them. We generalize the entanglement fidelity and we prove that it can be equivalently used as a figure of merit for defining the information content.  In sec. ~\ref{sec:pur} we investigate the relation between state purity and the vanishing of the information content. The main result of this section is that all the states with vanishing information content must be necessarily pure. Moreover, we show that the converse is not generally true. Indeed, we first prove
that sufficient conditions are the atomicity of parallel composition of states and essential uniqueness of purification. Then, we show that
atomicity of parallel composition of states turns out to be a necessary condition for having a null information content on pure states. Thus, any theory 
violating this property must have pure states with a strictly positive information content. Finally, we show that the accessible information is a lower bound for the information content here introduced. In sec. \ref{sec:boilsdown} we show that the information content is simply given by Shannon and von Neumann entropies in classical and quantum information theories respectively. The point here is to check the collapse of our figure of merit and the one usually used in those theories. We conclude summarizing the results of this work and we discussing some open questions in sec. \ref{sec:concl}.

\section{OPT framework}\label{sec:fram}

In this section we briefly review the framework of \emph{Operational Probabilistic Theories}.

\subsection{General description}

The primitive notions of an operational theory are those of test, event and system. A \emph{test} $\{\tA_i\}_{i\in\sX}$
is given by a collection of \emph{events}, where $i$ labels the elements
 of the outcome space $\sX$. The \emph{systems}
allow for the connection between different tests, and are denoted by capital Roman letters $\rA,\rB,\dots$. Therefore,
a test is completely determined by its input and output systems, and the events associated with the outcome space $\sX$.
In order to represent a test and its events $\{\tA_i\}_{i\in\sX}$ we use the usual diagrammatic notation
\begin{align*}
  \begin{aligned}
    \Qcircuit @C=1em @R=.7em @! R {&\ustick{\rA} \qw&\gate{\{\tA_i\}_{i\in\sX}}& \ustick{\rB}
      \qw&\qw}
  \end{aligned}
  \quad , \quad
  \begin{aligned}
    \Qcircuit @C=1em @R=.7em @! R {&\ustick{\rA}\qw&\gate{\tA_i}&\ustick{\rB} \qw &\qw}
  \end{aligned}\ ,\\
\end{align*}
and we will call $\rA$, $\rB$ the input and the output system of the test, respectively. 
If $\{\tA_i\}_{i\in\sX}$ and $\{\tB_j\}_{j\in\sY}$ are two tests,
one can define their \emph{sequential composition} as the test $\{\tC_{i,j}\}_{(i,j)\in\sX\times\sY}$, with events 
$\tC_{i,j}$ that are diagrammaticaly represented by 
\begin{align*}
  \begin{aligned}
    \Qcircuit @C=1em @R=.7em @! R {&\ustick{\rA} \qw&\gate{\tC_{i,j}}& \ustick{\rC}
      \qw&\qw}
  \end{aligned}
  \quad&\coloneqq\quad
  \begin{aligned}
    \Qcircuit @C=1em @R=.7em @! R {&\ustick{\rA}\qw&\gate{\tA_i}&\ustick{\rB} \qw &\gate{\tB_{j}}&\ustick{\rC} \qw
      &\qw}
  \end{aligned}\ .
\end{align*}
Notice that this definition requires the output system of the events on the left to be necessarily the input system of the 
events on the right. A \emph{singleton test} is a test whose outcome space set $\sX$ is a singleton, and the unique event contained
in it is called \emph{deterministic}. For any system $\rA$ there exists a unique \emph{identity} test $\{\tI_{\rA}\}$ such that 
$\tA\tI_{\rA}=\tA$ ($\tI_{\rA}\tA=\tA$) for any event $\tA$.
Another operation that can be performed on tests for defining a new test is \emph{parallel composition}. Given two systems 
$\rA$ and $\rB$ we call $\rA\rB$ the composite system of $\rA$ and $\rB$. Then, if $\{\tA_i\}_{i\in\sX}$
and $\{\tC_j\}_{j\in\sY}$ are two tests, their parallel composition is the test 
$\{\tA_i\boxtimes\tC_j\}_{(i,j)\in\sX\times\sY}$. Diagrammatically
\begin{align*}
  \begin{aligned}
    \Qcircuit @C=1em @R=1em @! R {&\ustick{\rA} \qw&\multigate{1}{\tA_i\boxtimes\tC_j}& \ustick{\rB} \qw&\qw \\
							&\ustick{\rC} \qw&\ghost{\tA_i\boxtimes\tC_j}& \ustick{\rD} \qw&\qw }
  \end{aligned}
  \quad&\coloneqq\quad
  \begin{aligned}
    \Qcircuit @C=1em @R=1em @! R {&\ustick{\rA} \qw&\gate{\tA_i}&\ustick{\rB} \qw & \qw \\
							&\ustick{\rC} \qw&\gate{\tC_{j}}&\ustick{\rD} \qw&\qw}
  \end{aligned} \, .
\end{align*}
The parallel composition operation commutes with the sequential one, namely 
$(\tE_h\boxtimes \tF_k)(\tC_i\boxtimes \tD_j)=(\tE_h \tC_i)\boxtimes (\tF_k \tD_j)$.

There is a special kind of system, the \emph{trivial system} I, satisfying $\rA\rI=\rI\rA=\rA$ for every system $\rA$. Tests with $\rI$
as input system and $\rA$ as the output one are called \emph{preparation tests} of $\rA$, while tests with input system $\rA$ and 
$\rI$ as output are named \emph{observation tests} of $\rA$. The events of a preparation test $\{\rho_i\}_{i\in\sX}$ and of an
observation test $\{a_j\}_{j\in\sY}$ are represented through the following diagrams
\begin{align*}
  \begin{aligned}
    \Qcircuit @C=1em @R=.7em @! R {\prepareC{\rho_i}& \ustick \rA
      \qw&\qw}
  \end{aligned}
  \quad&\coloneqq\quad
  \begin{aligned}
    \Qcircuit @C=1em @R=.7em @! R {&\ustick{\rI}\qw&\gate{\rho_i}&
      \ustick \rA \qw&\qw}
  \end{aligned}\quad,\\
  \\
    \begin{aligned}
    \Qcircuit @C=1em @R=.7em @! R {& \ustick \rA\qw &\measureD {a_j}}
  \end{aligned}
  \quad&\coloneqq\quad
  \begin{aligned}
    \Qcircuit @C=1em @R=.7em @! R {&\ustick{\rA}\qw&\gate{a_j}&
      \ustick \rI \qw&\qw}
  \end{aligned}\quad.
\end{align*}
In the following we will always use Greek letters to denote preparation tests and Latin letters for the observation test. Preparation and observation events
will also be denoted by using round brackets, respectively $\state{\rho}_\rA$ and $_{\rA}\effect{a}$, and we will not make explicit the system whenever it is
clear from the context.

A circuit is a diagram representing an arbitrary test that is obtained by sequential and parallel composition of other tests. We say that a circuit is \emph{closed} when the input and output systems are both the trivial one, namely, when it starts with a preparation test 
and it ends with an observation test. An \emph{operational probabilistic theory} is an operational theory where any closed
circuit (equivalently, any test from the trivial system to istelf) is given by a joint probability distribution conditioned by the tests building the circuit. Moreover, compound tests from the trivial system to istelf are independent, namely, the joint probability 
distribution is simply given by the product of the probability distributions of the composing tests. A simple example is given by a 
preparation test $\{\rho_i\}_{i\in\sX}$ sequentially followed by an observation test $\{a_j\}_{j\in\sY}$:
\begin{align*}
\{p(i,j|\{\rho_i\},\{a_j\})\}\coloneqq
\left\{
 \begin{aligned}
    \Qcircuit @C=1em @R=.7em @! R {\prepareC{\rho_i}& \ustick \rA
      \qw&\measureD{a_j}}
  \end{aligned}
\right\}  ,
\end{align*}
 with $\sum_{i,j}p(i,j)=1$. Thus, one has a joint probability distribution, which is conditioned by the chosen tests 
$\{\rho_i\}_{i\in\sX}$ and  $\{a_j\}_{j\in\sY}$. From now on we will simply omit this dependence. 
The probability associated with the closed circuit where a preparation $\rho_i$ is followed by an observation $a_j$ will also be denoted by
a pairing, $p(i,j)=(a_j|\rho_i)$.

Given any system $\rA$ of an OPT, One can define an equivalence relation on the set of preparation events by declaring that 
$\rho\sim\sigma$ iff  $\pairing{a |\rho}\neq \pairing{a |\sigma}$ for any observation event $a$. The set of equivalence classes with 
respect to this relation is called the set of \emph{states} of system $\rA$ and it is denoted by $\st(\rA)$. Similarly, one can define the 
set of \emph{effects} as the set of equivalence classes of the observation events such that $\pairing{a|\rho}\neq\pairing{b|\rho}$ for any
preparation event $\rho$, and this is denoted by $\eff(\rA)$. The sets of deterministic states and effects will be denoted
by $\st_1({\rA})$ and $\eff_1({\rA})$ respectively.

Given the probabilistic structure, states can be seen as functionals on the set of effects and viceversa, and then one can consider
linear combinations of them, thus defining two linear spaces, $\st(\rA)_{\R}$ and $\eff(\rA)_{\R}$, which are dual to each other
assuming that they are finite-dimensional. The \emph{size} $D_{\rA}$ of a given system $\rA$ is simply the dimension of the linear space $\st_{\R}(\rA)$.
A transformation event from system $\rA$ to system $\rB$ induces a linear map from $\st_{\R}(\rA\rC)$ to $\st_{\R}(\rB\rC)$ 
for any ancillary system $\rC$. 
Also the set of transformation events can be endowed with an equivalence relation. Indeed, given 
$\tA$ and  $\tB$, we say that they are \emph{operationally equivalent} ($\tA\sim\tB$) if the following identity holds
\begin{align*}
\begin{aligned}
\Qcircuit @C=0.8em @R=1.0em @! R { \multiprepareC{1}{\Psi} &\ustick \rA \qw & \gate {\tA} &\ustick  \rB \qw &\multimeasureD{1}{A}  \\
\pureghost {\Psi} &\qw & \ustick \rC  \qw   &\qw  &\ghost{A} }
\end{aligned}
\ = \
\begin{aligned}
\Qcircuit @C=0.8em @R=1.0em @! R { \multiprepareC{1}{\Psi} &\ustick \rA \qw & \gate {\tB} &\ustick  \rB \qw &\multimeasureD{1}{A}  \\
\pureghost {\Psi} &\qw & \ustick \rC  \qw   &\qw  &\ghost{A} }
\end{aligned} \ ,
\end{align*}
for any $\Psi\in\st(\rA\rC)$, $A\in\eff(\rB\rC)$ and any ancillary system $C$. In other words, two transformation events
are operationally equivalent if they induce the same linear map for any ancillary system $\rC$.
We then denote the set of all the equivalence classes with $\tr(\rA\to\rB)$, whose elements are simply called \emph{transformations}.
As for states and effects, the set of deterministic transformations will be denoted by $\tr_1(\rA\to\rB)$.
If $\tU\in\tr(\rA\to\rB)$ and there exists $\tV\in\tr(\rB\to\rA)$ such that $\tV\tU=\tI_{\rA}$ and $\tU\tV=\tI_{\rB}$, we say that $\tU$ is \emph{reversible}.
Accordingly, two systems are called {\em operationally equivalent} if there exists a reversible 
transformation $\tU\in\tr_1(\rA\to\rB)$. A notion that will be useful in the following is that of {\em asymptotically 
equivalent systems}.

\begin{definition}[Asymptotical equivalence]\label{def:asympeq}
Two systems $\rA_1$ and $\rA_2$ are asymptotically equivalent if
\begin{enumerate}
\item\label{it:kappa1}
there exists a pair of integers $k_1,k_2<\infty$, $\tE\in\tr_1(\rA_1^{\boxtimes k_1}\to\rA_2^{\boxtimes k_2})$ and $\tD\in\tr_1(\rA_2^{\boxtimes k_2}\to\rA_1^{\boxtimes k_1})$ such that $\tD\tE=\tI_{\rA_1^{\boxtimes k_1}}$;
\item\label{it:hacca2}
there exists a pair of integers $h_1,h_2<\infty$, $\tG\in\tr_1(\rA_2^{\boxtimes h_2}\to\rA_1^{\boxtimes h_1})$ and $\tF\in\tr_1(\rA_1^{\boxtimes h_1}\to\rA_2^{\boxtimes h_2})$ such that $\tF\tG=\tI_{\rA_2^{\boxtimes h_2}}$;
\item\label{it:equiciccia} let $M_{2}^{\rm min}(k_1)$ be the smallest $k_2$ such that item~\ref{it:kappa1} is satisfied for a given $k_1$, and similarly for 
$M_1^{\rm min}(h_2)$ with reference to item~\ref{it:hacca2}. The following assumption is made:
\begin{equation}
\lim_{k_1\rightarrow \infty}\frac{M_2^{\rm min}(k_1)}{k_1}=k, \qquad \lim_{h_2\rightarrow \infty}\frac{M_1^{\rm min}(h_2)}{h_2}=k^{-1}.
\label{eq:digitalizability2}
\end{equation}
\end{enumerate}
\end{definition}

Now we set up some terminology and we introduce pure and mixed states, as well as the definition of state dilation.
\begin{definition}[Refinement of an event]
Let $\tC\in\tr(\rA\to\rB)$. A refinement of $\tC$ is given by a 
collection of events $\{\tD_{i}\}_{i\in\sY}\subseteq\tr(\rA\to\rB)$
such that there exists a test $\{\tD_{i}\}_{i\in\sX}$ with $\sY\subseteq\sX$ and
$\tC=\sum_{j\in\sY}\tD_{j}$.
We say that a refinement $\{\tD_{i}\}_{i\in\sY}$ is trivial if $\tD_{i}=\lambda_{i}\tC$, $\lambda_{i}\in[0,1]$
for every $i\in\sY$. Conversely, $\tC$ is called the coarse-graining of the events $\{\tD_{i}\}_{i\in\sY}$.
\end{definition}

\begin{definition}
Given two events $\tC,\tD\in\tr(\rA\to\rB)$ we say that $\tD$ refines $\tC$, and write $\tD\prec\tC$, if there exist a refinement 
$\{\tD_i\}_{i\in\sX}$ of $\tC$ such that $\tD\in\{ \tD_i\}_{i\in\sX}$
\end{definition}

\begin{definition}[Atomic and refinable events]
An event $\tC$ is called atomic if it admits only trivial refinements. An event is refinable if it is not atomic.
\end{definition}
The notion of refinement and refinable events give rise to the definitions of \emph{pure} and \emph{mixed} states. 
\begin{definition}[Pure and mixed states]
$\rho\in\st(\rA)$ is called pure if it is atomic, or mixed otherwise. We will denote by $\purst(\rA)$ the set of all the pure 
states of system $\rA$
\end{definition}

\begin{definition}
Let $\rho\in\st(\rA)$ and $\Psi\in\st(\rA\rB)$. We say that $\Psi$ is a dilation of $\rho$ if there exists a deterministic effect $e\in\eff(\rB)$ such that
\begin{equation*}
\begin{aligned}
 \Qcircuit @C=1em @R=.7em @! R { \prepareC{\rho} & \ustick \rA \qw & \qw}
\end{aligned}
\ = \
\begin{aligned}
 \Qcircuit @C=1em @R=.7em @! R { \multiprepareC{1}{\Psi} & \ustick \rA \qw& \qw \\
 \pureghost {\Psi} & \ustick \rB \qw&\measureD{e}}
\end{aligned}\ .
\end{equation*}
We denote by $D_{\rho}$ the set of all dilations of the state $\rho$. If $\Psi$ is also pure, then we say
that it is a purification of $\rho$ and $\rB$ is called the purifying system. Finally, we denote by $P_{\rho}$ the set of all
the purifications of $\rho$.
\end{definition}
Trivially one has that $P_{\rho}\subseteq D_{\rho}$. Moreover, if $\Omega\in D_{\rho}$, then one has $D_{\Omega}\subseteq D_{\rho}$

Two special instances of this framework are classical and quantum theory.
In classical theory, the systems are associated with real vector spaces $\R^{d_A}$, and different systems are associated with different values of $d_A$. The set of states is made of sub-stochastic vectors
in these spaces, namely by vectors ${\bvec x}$ satisfying $\norma{\bvec x}_1 \coloneqq\sum_{i=1}^{d_A}\abs{x_i}\leq 1$,
and therefore it is a simplex. The pure states are then represented by the canonical basis vectors ${\bvec e}_i$ of 
$\R^{d_A}$. The convex set of effects is given by unit-dominated positive vectors, i.e. those $\bvec x \in \R^{d_{\rA}}$ such that $0\leq x_i\leq 1$
for any $i=0,\dots,d_A$. Transformations from system $\rA$ to system $\rB$ are represented by $d^{\rB}\times d^{\rA}$ 
sub-stochastic matrices $\bvec M$ acting on the probability vectors by multiplication ${\bvec x}\rightarrow{\bvec M \bvec x}$. 
Recall that a matrix is sub-stochastic when each column is a sub-stochastic vector. Sequential and parallel composition are trivially
given by matrix multiplication and tensor product respectively.

In quantum theory systems are associated with Hilbert Spaces, and different systems are represented by spaces of different
dimension $d$ (we will assume $d<\infty$ whenever we will refer to results relative to quantum theory). The convex set of states of
a system $\rA$ is given by sub-normalized density matrices $\rho$ on the associated space $\cH_{\rA}$, i.e. matrices such that 
$\rho>0$ and $\Tr(\rho)\leq 1$. The convex set of effects is given by functionals (acting on the set of states)
of the form $\Tr(\cdot E)$, where $E$ is a positive matrix dominated by the identity, $0\leq E \leq I$. Transformations in
$\tr(\rA\to\rB)$ are mathematically represented by Completely Positive Trace non-Icreasing maps. These have a Kraus decomposition,
namely, if $\tE\in\tr(\rA\to\rB)$ there exists a set of operators $\{E_i\}\subseteq\cL(\cH_{\rB},\cH_{\rA})$ such that
$\tE(\cdot)=\sum_i E_i^\dagger \cdot E_i$. Sequential composition is simply given by composition of maps, and the parallel one
is represented by the tensor product, as in the classical case.

The linear space $\st_{\R}(\rA)$ can be endowed with a metric structure by means of the following norm,
 which has an operational
meaning related to optimal discrimination schemes~\cite{PhysRevA.81.062348}.
\begin{definition}[Operational norm]
The norm of an element $\rho\in\st(\rA)_{\R}$ is defined as
\[
\norma{\rho}_{\rm op}\coloneqq\sup_{a\in\eff(\rA)}(2a-e_{\rA}|\rho),
\]
where $e_{\rA}$ is the deterministic effect obtained by the coarse-graining of the observation test containing $a$.
\end{definition}
This norm satisfies a monotonicity property, as stated in the following lemma
\begin{lemma}[Monotonicity of the operational norm]
For any $\delta\in\st_{\R}(\rA)$ and $\tC\in\tr(\rA,\rB)$ the following inequality holds
\begin{equation}
\normop{\tC\delta}\leq\normop{\delta},
\end{equation}
with the equality holding iff $\tC$ is reversible.
\end{lemma}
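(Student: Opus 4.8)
The plan is to establish the inequality first and then deploy it to obtain both directions of the equality criterion. For the inequality, fix $\delta\in\st_{\R}(\rA)$ and an arbitrary binary observation test $\{b_0,b_1\}$ on $\rB$, whose coarse-graining $b_0+b_1=e_{\rB}$ is deterministic; the term $(2b_0-e_{\rB}|\tC\delta)$ is a generic summand defining $\normop{\tC\delta}$. Since every event sits inside some test, I complete $\tC$ to a test $\{\tC_j\}_j$ with $\tC=\tC_0$ and $\sum_j\tC_j=\tC_{\mathrm{det}}$ deterministic. Composing $\{\tC_j\}_j$ with $\{b_0,b_1\}$ and coarse-graining produces a genuine three-outcome observation test on $\rA$ with effects $a_0:=b_0\tC$, $a_1:=b_1\tC$ and $c:=e_{\rB}(\tC_{\mathrm{det}}-\tC)$, whose coarse-graining is the deterministic effect $e_{\rA}:=e_{\rB}\tC_{\mathrm{det}}$. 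Now $(2b_0-e_{\rB}|\tC\delta)=(a_0-a_1|\delta)$, and the algebraic identity
\[
(a_0-a_1|\delta)=\tfrac12\big[(2a_0-e_{\rA}|\delta)+(2(a_0+c)-e_{\rA}|\delta)\big]
\]
writes it as the average of two terms, each of the admissible form $(2a-e_{\rA}|\delta)$ coming from the binary tests $\{a_0,a_1+c\}$ and $\{a_0+c,a_1\}$ and therefore bounded by $\normop{\delta}$. Taking the supremum over $\{b_0,b_1\}$ yields $\normop{\tC\delta}\le\normop{\delta}$, with no causality assumption needed.

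The easy half of the equality criterion follows at once: if $\tC$ is reversible with inverse $\tV\in\tr(\rB\to\rA)$, applying the inequality to both transformations gives $\normop{\delta}=\normop{\tV\tC\delta}\le\normop{\tC\delta}\le\normop{\delta}$, so every inequality in the chain is saturated for all $\delta$.

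The converse, assuming $\normop{\tC\delta}=\normop{\delta}$ for every $\delta$, is where the real difficulty lies. First I would extract the structural consequences: norm preservation makes $\tC$ injective on $\st_{\R}(\rA)$, and testing it on deterministic states---where $\normop{\rho}=(e_{\rA}|\rho)=1$ while, because $\tC\rho$ is a genuine state, $\normop{\tC\rho}=(e_{\rB}|\tC\rho)=\Tr(\tC\rho)$---forces $\tC$ to be trace-preserving, i.e.\ a channel. The hard part will be to promote the resulting linear left inverse on the image $\tC(\st_{\R}(\rA))$ into a \emph{bona fide} transformation $\tV\in\tr(\rB\to\rA)$ satisfying \emph{both} $\tV\tC=\tI_{\rA}$ and, crucially, $\tC\tV=\tI_{\rB}$. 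Injectivity alone secures the left identity; the right identity $\tC\tV=\tI_{\rB}$ additionally requires $\tC$ to be onto $\st_{\R}(\rB)$, i.e.\ $D_{\rA}=D_{\rB}$, whereas norm preservation by itself only delivers injectivity (a linear isometry may land in a strictly larger space). The decisive---and hardest---step is therefore to prove that equality for \emph{all} $\delta$ forces the image of $\tC$ to exhaust $\st_{\R}(\rB)$; this is exactly where the structural features of the theory must be brought to bear, and it is the main obstacle I would expect to confront.
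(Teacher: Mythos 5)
First, a remark on the comparison itself: the paper never proves this lemma --- it is stated as a known fact, imported from the reference cited alongside the definition of the operational norm --- so your proposal can only be judged on its own merits. On those merits, your first two parts are correct. The averaging identity $(a_0-a_1|\delta)=\tfrac12\left[(2a_0-e_{\rA}|\delta)+(2(a_0+c)-e_{\rA}|\delta)\right]$, with $a_0=b_0\tC$, $a_1=b_1\tC$, $c=e_{\rB}(\tC_{\mathrm{det}}-\tC)$ and $e_{\rA}=e_{\rB}\tC_{\mathrm{det}}$, is sound, and it has the virtue of handling an arbitrary, possibly non-deterministic event $\tC\in\tr(\rA,\rB)$ while pairing each effect with an explicit test whose coarse-graining is deterministic; the standard argument (for channels in causal theories) instead uses that $b\tC$ is an effect together with $e_{\rB}\tC=e_{\rA}$. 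The sandwich $\normop{\delta}=\normop{\tV\tC\delta}\leq\normop{\tC\delta}\leq\normop{\delta}$ for reversible $\tC$ is likewise fine.

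The genuine gap is the converse, which you leave unproven --- but here you should trust your own diagnosis more strongly: the obstruction you name (norm preservation yields injectivity, not surjectivity onto $\st_{\R}(\rB)$) is not merely the hard step, it is fatal, because the ``only if'' direction is false as stated. In quantum theory the operational norm reduces to the trace norm (the paper says so explicitly), so take $\tC(\cdot)=V(\cdot)V^{\dagger}$ with $V:\cH_{\rA}\to\cH_{\rB}$ an isometry into a strictly larger Hilbert space. Conjugation by an isometry preserves eigenvalues up to appended zeros, so $\normop{\tC\delta}=\normop{\delta}$ for \emph{every} Hermitian $\delta$; yet $\tC$ is not reversible in the paper's sense, since for any $\tV\in\tr(\rB\to\rA)$ the output of $\tC\tV$ is supported on the proper subspace $V\cH_{\rA}$ of $\cH_{\rB}$, so $\tC\tV\neq\tI_{\rB}$ even though a left inverse with $\tV\tC=\tI_{\rA}$ exists. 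This is precisely your ``linear isometry landing in a strictly larger space.'' Hence no argument can close the gap; the honest resolution is to weaken the statement to ``equality holds \emph{if} $\tC$ is reversible'' --- which is the only direction the paper ever uses (e.g.\ in proving $I(\rho)=I(\tU(\rho))$ for reversible $\tU$) --- or to weaken the conclusion from reversibility to left-invertibility.
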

This notion of norm, which is valid for any OPT, reduces to the trace-norm in quantum theory.

\subsection{Restricting the class of theories}

Upon marginalization over the observation test, one can define the preparation probability conditioned by the test $\{a_j\}_{j\in\sY}$
as $p(\rho_i|\{\ a_j\})\coloneqq\sum_j p(i,j)$. Generally, the preparation probability is not one, unless the preparation test 
$\{\rho_i\}_{i\in\sX}$ is the singleton, i.e. the state is deterministic. Moreover, as it is clear by its definition, it can also depend on the observation test we are marginalizing over. 
Usually, the causality condition is expressed as a \emph{no-signalling from the future} principle, namely by saying that preparation
probabilities are actually independent of the chosen observation test, which is equivalent to state the uniqueness of the determinstic
effect. In this paper we will adopt a stronger form of causality, that is the following one
\begin{assumption}[Causal theories]
An OPT satisfies strong causality if for every test $\{\tA_{i}\}_{i\in \mathsf{X}}$ and every collection of tests $\{\tB_{j}^{i}\}_{j\in\sY}$
labelled by $j\in\sY$, the collection of events $\{\tC_{i,j}\}_{(i,j)\in\sX\times\sY}$ with
\begin{align*}
  \begin{aligned}
    \Qcircuit @C=1em @R=.7em @! R {&\ustick{\rA} \qw&\gate{\tC_{i,j}}& \ustick{\rC}
      \qw&\qw}
  \end{aligned}
  \quad&\coloneqq\quad
  \begin{aligned}
    \Qcircuit @C=1em @R=.7em @! R {&\ustick{\rA}\qw&\gate{\tA_i}&\ustick{\rB} \qw &\gate{\tB_{j}^{i}}&\ustick{\rC} \qw
      &\qw}
  \end{aligned}\ ,\\
\end{align*}
is a test of the theory.
\end{assumption}
One can show that the above statement implies uniqueness of the deterministic effect~\cite{DAriano:2017aa,PhysRevA.81.062348}. 

Another assumption that we will use in some sections of this work stems from the steering property of quantum theory.
This asserts that, given a state $\rho\in\st(\rA)$ and a purification $\Phi\in\purst(\rA\rB)$ of $\rho$, for any decomposition 
$\sum_{i\in\sX} p_i\sigma_i$ of $\rho$ there exists an observation test $\{ b_i\}_{i\in\sX}\subseteq\eff(\rB)$ such that
\begin{equation*}
\  \
\begin{aligned}
 \Qcircuit @C=1em @R=.7em @! R { \prepareC{\sigma_{i}} & \ustick \rA \qw & \qw}
\end{aligned}
\ =\
\begin{aligned}
 \Qcircuit @C=1em @R=.7em @! R { \multiprepareC{1}{\Phi} & \ustick \rA \qw& \qw \\
 \pureghost {\Phi} & \ustick \rB \qw&\measureD{b_{i}}}
\end{aligned}\ .
\qquad \forall i\in \sX.
\end{equation*}
This feature cannot be assumed as it stands: the first reason is that a generic OPT may not encompass the existence of
a purification for any state of the theory (and classical theory is a trivial example), therefore we are led to consider
dilations instead of purifications. Secondly, there is no reason why one should be able to steer any decomposition by means of
the same dilation. Thus, for a generic theory, one can state the steering as follows.
\begin{assumption}[Steering]\label{ass:steering}
Let $\rho\in\st(\rA)$ and $\{\sigma_{i}\}_{i\in \sX}\subseteq\st(\rA)$ be a refinement of $\rho$. Then there exist a system $\rB$, a state $\Psi\in\st(\rA\rB)$, and an
observation test $\{b_{i}\}_{i\in \sX}$ such that
\begin{equation*}
\  \
\begin{aligned}
 \Qcircuit @C=1em @R=.7em @! R { \prepareC{\sigma_{i}} & \ustick \rA \qw & \qw}
\end{aligned}
\ =\
\begin{aligned}
 \Qcircuit @C=1em @R=.7em @! R { \multiprepareC{1}{\Psi} & \ustick \rA \qw& \qw \\
 \pureghost {\Psi} & \ustick \rB \qw&\measureD{b_{i}}}
\end{aligned}\ ,
\qquad \forall i\in \sX.
\end{equation*}
\end{assumption}
Notice that the state $\Psi$ in the steering assumption must be a dilation of $\rho$, as one can easily verify upon
summing over $i\in\sX$. The stronger steering feature satisfied by quantum theory can actually be proven to hold
in any OPT satisfying atomicity of parallel composition of states,
existence and uniqueness (up to reversible channels) of purification and perfect discriminability as axioms. For the present purposes we choose to state
steering as a property that an OPT may or may not satisfy rather than discussing the conditions under which
it holds.

Another property that we will assume throughout the paper is digitalizability.

\begin{assumption}[Digitalizability]\label{ass:digitalizability}
We say that an OPT is digitalizable if there exists a system $\rB$ (called \emph{obit}) such that for any system $\rX$ there
exists $k<\infty$ and a pair of maps $\tC\in\tr_1(\rX\to\rB^{\boxtimes k})$ and 
$\tF\in\tr_1(\rB^{\boxtimes k}\to\rX)$ such that $\tF\circ\tC=\tI_{\rX}$. Moreover, if $\rB_1$ and $\rB_2$ are two such systems, then they are asymptotically equivalent.
\end{assumption}

The above assumption holds in quantum theory, since any qudit system can be taken as elementary. Moreover, let us consider two different qudits with
dimension $d_1$ and $d_2$ respectively. Generally, the equation $d_1^N=d_2^M$ may have no integer solutions (for instance, when both $d_1$ and $d_2$ are
prime). However, the smallest integer $M$ such that we can isometrically embed $\cH_1^{\otimes N}$ into $\cH_2^{\otimes M}$ is given by 
$\ceil{N\log_{d_2} d_1}$ which is such that $\lim_{N \rightarrow \infty}\frac{\ceil{N\log_{d_2} d_1}}{N}=\log_{d_2}d_1$. Similarly, $\ceil{M\log_{d_1} d_2}$
is needed for an isometric embedding of $\cH_2^{\otimes M}$ into $\cH_1^{\otimes N}$, and $\lim_{N \rightarrow \infty}\frac{\ceil{M\log_{d_1} d_2}}{M}=\log_{d_1}d_2(=1/\log_{d_2}d_1)$.
In any theory satisfying this assumtpion we can always encode the state of our system on the parallel composition of a sufficiently large
number of elementary systems, which we can think of as a generalization of the qubit system for quantum theory. The request of digitalizability
comes from the need of a unit for the amount of information required for storing a given source. In
classical information theory we use bits, in the quantum counterpart the qubits, and for a generic OPT satisfying digitalizability we
use obits, whose existence must then be postulated. 
We want to stress the fact that this hypothesis is not so restrictive on the class of possible theories.
The non-local boxes \cite{popescu1994} provides us with a non trivial example of a theory
satisfying the digitalizability assumption, and with a strong departure from the
quantum one. There exists a unique single system, whose state space is described by a square, and multipartite systems are obtained by using only this one, therefore it is trivially digitalizable.  It is in this scenario that the diversity of the known entropy notions is manifest. Moreover, the fact that we are only referring to the conversion of finitely many copies of the system at hand is not
so constraining from a conceptual point of view. For the present purposes, namely taking a first step
towards a Shannon theory for generic physical systems, this level of analysis is sufficient. However, the composition of a countable number of systems can be suitably defined (see \cite{Perinotti2020cellularautomatain}) opening the route to a generalization of this property in the infinite case.

\section{Information Content in OPT}\label{sec:infocont}

Let $\rho\in\st_{1}(\rA)$ and consider $N$ copies of the system on which we have prepared the same state $\rho$
and let $M$ be a positive integer. A compression scheme is then a pair of maps 
$\tE\in\tr_{1}(\rA^{\boxtimes N}\to\rB^{\boxtimes M})$, $\tD\in\tr_{1}(\rB^{\boxtimes M}\to\rA^{\boxtimes N})$.

\begin{definition}An $(\varepsilon,N)$-reliable compression scheme $(\tE,\tD)$ is such that
\[
\sup_{\rC,\{\Psi_i\}}\sum_{i\in\sX}\norma{[(\tD\circ\tC)\boxtimes\tI_\rC]\Psi_{i}-\Psi_{i}}<\varepsilon, 
\] 
where $\{\Psi_{i}\}_{i\in\sX}\subseteq\st_1(\rA^{\boxtimes N}\rC)$ denotes a refinement of any dilation of $\rho^{\boxtimes N}$.
For fixed $N$, $M$ we denote with $E_{N,M,\varepsilon}(\rho)$ the set of $\varepsilon$-reliable
compression schemes. 
\end{definition}

\begin{definition}[Information Content]
Let $\rho\in\st_1(\rA)$. We define the {\em smallest achievable compression ratio} for length $N$ to tolerance $\varepsilon$ as follows
\begin{equation}\label{eq:inforate}
R_{N,\varepsilon}(\rho)\coloneqq\frac{\min\{ M:E_{N,M,\varepsilon}(\rho)\neq\emptyset\}}{N}.
\end{equation}
The {\em information content} of the state $\rho$ is defined as
\begin{equation}\label{eq:infocontent}
I(\rho)\coloneqq\lim_{\varepsilon\rightarrow 0}\limsup_{N\rightarrow \infty}R_{N,\varepsilon}(\rho).
\end{equation}
\end{definition}
\begin{proposition}
$I(\rho)$ is well defined for every $\rho\in\st(\rA)$ and every system $\rA$.
\end{proposition}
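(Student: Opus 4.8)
The plan is to make sense of the expression defining $I(\rho)$ by establishing three facts in turn: that for each fixed $N$ and $\varepsilon>0$ the set $\{M:E_{N,M,\varepsilon}(\rho)\neq\emptyset\}$ is a nonempty subset of $\N$, so that its minimum—and hence $R_{N,\varepsilon}(\rho)$—exists and is finite; that $\limsup_{N\to\infty}R_{N,\varepsilon}(\rho)$ is finite; and that the outer limit $\varepsilon\to0$ exists. The engine for the first two facts is the digitalizability assumption, while the third is a pure monotonicity argument.

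First I would invoke Assumption~\ref{ass:digitalizability} applied to $\rX=\rA$: there exist an obit $\rB$, an integer $k_0<\infty$, and maps $\tC\in\tr_1(\rA\to\rB^{\boxtimes k_0})$, $\tF\in\tr_1(\rB^{\boxtimes k_0}\to\rA)$ with $\tF\circ\tC=\tI_\rA$. Setting $\tE\coloneqq\tC^{\boxtimes N}$ and $\tD\coloneqq\tF^{\boxtimes N}$, and using that parallel composition commutes with sequential composition, one obtains $\tD\circ\tE=(\tF\circ\tC)^{\boxtimes N}=\tI_{\rA^{\boxtimes N}}$. Then for every ancilla $\rC$ and every refinement $\{\Psi_i\}$ of any dilation of $\rho^{\boxtimes N}$ we have $[(\tD\circ\tE)\boxtimes\tI_\rC]\Psi_i=\Psi_i$, so the supremum in the reliability condition vanishes and is in particular below every $\varepsilon>0$. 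Hence $(\tE,\tD)\in E_{N,k_0N,\varepsilon}(\rho)$, which shows at once that the set of admissible $M$ is nonempty (so the minimum exists) and that $\min\{M:E_{N,M,\varepsilon}(\rho)\neq\emptyset\}\le k_0N$. Consequently $R_{N,\varepsilon}(\rho)\le k_0$ for all $N$ and all $\varepsilon>0$, so $\limsup_{N\to\infty}R_{N,\varepsilon}(\rho)\le k_0<\infty$.

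It remains to treat the limit $\varepsilon\to0$. I would observe that for $\varepsilon_1<\varepsilon_2$ the reliability constraint at $\varepsilon_1$ is stricter, giving $E_{N,M,\varepsilon_1}(\rho)\subseteq E_{N,M,\varepsilon_2}(\rho)$, hence $\{M:E_{N,M,\varepsilon_1}\neq\emptyset\}\subseteq\{M:E_{N,M,\varepsilon_2}\neq\emptyset\}$ and therefore $R_{N,\varepsilon_1}(\rho)\ge R_{N,\varepsilon_2}(\rho)$. Passing to the $\limsup$, the function $g(\varepsilon)\coloneqq\limsup_{N\to\infty}R_{N,\varepsilon}(\rho)$ is nonincreasing in $\varepsilon$ and bounded in $[0,k_0]$; being monotone and bounded, it admits a limit as $\varepsilon\to0^+$, equal to $\sup_{\varepsilon>0}g(\varepsilon)\le k_0$. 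Thus $I(\rho)$ exists and is finite.

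The main obstacle, conceptually, is the first fact: guaranteeing that the minimum in \eqref{eq:inforate} ranges over a nonempty set, i.e.\ that at least one finite compression length $M$ achieves reliability. This is exactly where digitalizability is indispensable, since without the postulated obit there need be no finite encoding of $\rA^{\boxtimes N}$ into elementary carriers at all, and $R_{N,\varepsilon}$ could fail to be defined. Everything afterwards—the linear bound $M\le k_0N$ and the monotone convergence in $\varepsilon$—is routine once the perfect scheme $\tF^{\boxtimes N}\circ\tC^{\boxtimes N}=\tI_{\rA^{\boxtimes N}}$ is in hand. The one point I would verify carefully is that the reliability figure of merit is genuinely zero for a perfect scheme uniformly over all ancillas, all dilations, and all their refinements; this holds because $(\tD\circ\tE)\boxtimes\tI_\rC$ acts as the identity on the whole extended space, so each summand $[(\tD\circ\tE)\boxtimes\tI_\rC]\Psi_i-\Psi_i$ is null.
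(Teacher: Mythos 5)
Your argument for the existence and finiteness of the two limits is correct, and it coincides with the first half of the paper's proof: a perfect scheme $(\tC^{\boxtimes N},\tF^{\boxtimes N})$ obtained from Assumption~\ref{ass:digitalizability} gives $R_{N,\varepsilon}(\rho)\le k_0$ uniformly in $N$ and $\varepsilon$ (the paper phrases the linear bound as ``$N$ repetitions of the encoding for one copy of $\rA$,'' which is exactly your parallel-composition construction), and the outer limit exists by the monotonicity of $\varepsilon\mapsto\limsup_N R_{N,\varepsilon}(\rho)$.

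However, there is a genuine gap: well-definedness here is not only about the existence of the limits. The sets $E_{N,M,\varepsilon}(\rho)$, and hence $I(\rho)$ itself, depend on an arbitrary choice hidden in the definition, namely \emph{which} elementary system plays the role of the obit $\rB$. Assumption~\ref{ass:digitalizability} does not single out a unique such system; it only states that any two admissible obits $\rB_1,\rB_2$ are \emph{asymptotically} equivalent in the sense of Definition~\ref{def:asympeq}. If two obit choices could yield incomparable quantities, ``the information content of $\rho$'' would not be a meaningful notion. The second (and longer) half of the paper's proof addresses precisely this: writing $M_{1,N}$, $M_{2,N}$ for the minimal numbers of $\rB_1$- and $\rB_2$-obits achieving an $\varepsilon$-reliable compression of $\rho^{\boxtimes N}$, it uses the conversion maps of Definition~\ref{def:asympeq} together with the rate condition~\eqref{eq:digitalizability2}, $M_2^{\rm min}(M_1)=kM_1+\delta_1(M_1)$ with $\delta_1(M_1)/M_1\to0$, to deduce $M_{2,N}\le M_2^{\rm min}(M_{1,N})$ and, after taking $\limsup_{N\to\infty}$ and $\varepsilon\to0$, the proportionality $I_2(\rho)=kI_1(\rho)$ with $k$ independent of $\rho$. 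This shows the information content is well defined up to a choice of unit, exactly as entropy is defined up to the base of the logarithm. Your proof never invokes the asymptotic-equivalence clause of the assumption, so this part of the statement remains unproved; to complete the argument you would need to add a comparison of the two minima along the lines above, being careful that the error term $\delta_1(M_{1,N})/N$ is controlled by $\delta_1(M_{1,N})/M_{1,N}\cdot M_{1,N}/N$, which is where the boundedness of $M_{1,N}/N$ (your first part) and the monotonicity $M_{1,N+1}\ge M_{1,N}$ enter.
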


\begin{proof}
Firstly, we show that for any choice of the elementary system, $I(\rho)$ is a finite number for any state $\rho$.
By the digitalizability assumption we know that for any $N$ there exists a positive integer $K<\infty$ and a pair of maps
$\tE\in\tr_1(\rA^{\boxtimes N}\to\rB^{\boxtimes K})$, $\tD\in\tr_1(\rB^{\boxtimes K}\to\rA^{\boxtimes N})$  such that 
$\tD\tE=\tI_{\rA^{\boxtimes N}}$.
Therefore, for any $N$, $\varepsilon$ the set $E_{N,K,\varepsilon}(\rho)$ is not
empty, and the minimum in eq.~\eqref{eq:inforate} is always finite. Moreover, it is immediate to realise that $K$ does not need to grow more than linearly versus $N$, just considering $N$ repetitions of the encoding for one copy $\rA$.
Thus, the ratio in eq.~\eqref{eq:inforate} is bounded and one can take the $\limsup_{N\rightarrow\infty}$ safely. The existence of $\lim_{\varepsilon\rightarrow 0}$ follows 
by the fact that $E_{N,M,\varepsilon}(\rho)\subseteq E_{N,M,\varepsilon '}(\rho)\neq\emptyset$
whenever $\varepsilon\leq\varepsilon '$, which, in turn, implies monotonicity of the function 
$\limsup_{N\rightarrow\infty}R_{N,\varepsilon}(\rho)$ versus $\varepsilon$.

What is left to prove is that using two different obits we are not led to two incomparable notions of information content. 
First of all, fix $N,\varepsilon$, let $\rho\in\st_1(\rA)$, and let
\begin{align*}
&M_{1,N}\coloneqq\min\{M:E^1_{N,M,\varepsilon}(\rho)\neq\emptyset\}, \\
&M_{2,N}\coloneqq\min\{M:E^2_{N,M,\varepsilon}(\rho)\neq\emptyset\},
\end{align*}
be the minimum number of obits $\rB_1$ and $\rB_2$ needed for an $\varepsilon$-optimal encoding, respectively.  Rephrasing the first equation in \eqref{eq:digitalizability2}, there exists a sequence $\delta_1(M_1)$ 
such that $M_2^{\rm min}(M_1)=kM_1+\delta_1(M_1)$ with $\lim_{M_1\rightarrow \infty}\frac{\delta_1(M_1)}{M_1}=0$. Given the  
encoding $\tE$ of item~\ref{it:kappa1} in definition~\ref{def:asympeq} from $M_{1,N}$ to $M_2^{\rm min}(M_{1,N})$ 
(see item~\ref{it:equiciccia} in definition~\ref{def:asympeq}), we have an $\varepsilon$-optimal encoding of $\rho^{\boxtimes N}$ onto $M_2^{\rm min}(M_{1,N})$
obits $\rB_2$, therefore $M_{2,N}\leq M_2^{\rm min}(M_{1,N})$ and this implies
\begin{align*}
\limsup_{N\rightarrow \infty}&\frac{M_{2,N}}{N}\leq \limsup_{N\rightarrow \infty}\frac{M_2^{\rm min}(M_{1,N})}{N}\\
&=\limsup_{N\rightarrow \infty}\left[\frac{kM_{1,N}}{N}+\frac{\delta_1(M_{1,N})}{N}\right] \\
&\leq\limsup_{N\rightarrow \infty}\frac{kM_{1,N}}{N}+\limsup_{N\rightarrow \infty}\left|\frac{\delta_1(M_{1,N})}{M_{1,N}}\frac{M_{1,N}}{N}\right| \\
&=k\limsup_{N\rightarrow \infty}\frac{M_{1,N}}{N}.
\end{align*}
The last line follows by $\lim_{M_1\rightarrow \infty}\frac{\delta_1(M_1)}{M_1}=0$ along with the fact that $M_{1,N+1}\geq M_{1,N}$. Taking 
$\lim_{\varepsilon \rightarrow 0}$ we end up with $I_2(\rho)\leq kI_1(\rho)$. A similar argument can be used to show the reverse inequality, and we have
that, for any $\rho\in\st(\rA)$, $I_2(\rho)=kI_1(\rho)$.
\end{proof}

\begin{proposition}[Subadditivity]\label{prop:subadditivity}
Let $\Psi\in\st_1(\rA\rB)$ and let $\rho\in\st_1(\rA)$, $\sigma\in\st_1(\rB)$ be its marginals. Then the following property holds
\[
I(\Psi)\leq I(\rho)+I(\sigma).
\]
\end{proposition}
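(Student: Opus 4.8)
The plan is to construct a compression scheme for $\Psi$ by running, \emph{in parallel}, near-optimal reliable schemes for its two marginals, and to control the combined error using the triangle inequality together with monotonicity of the operational norm. Fix $N$ and $\varepsilon$. By the previous proposition the quantities $M_\rho\coloneqq\min\{M:E_{N,M,\varepsilon}(\rho)\neq\emptyset\}$ and $M_\sigma\coloneqq\min\{M:E_{N,M,\varepsilon}(\sigma)\neq\emptyset\}$ are finite; pick corresponding $(\varepsilon,N)$-reliable schemes $(\tE_\rho,\tD_\rho)$ and $(\tE_\sigma,\tD_\sigma)$ onto $\rB^{\boxtimes M_\rho}$ and $\rB^{\boxtimes M_\sigma}$, with $\rB$ the obit. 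After a reversible reorganization $\tS$ of $(\rA\rB)^{\boxtimes N}$ into $\rA^{\boxtimes N}\boxtimes\rB^{\boxtimes N}$, I define the candidate scheme $\tE\coloneqq(\tE_\rho\boxtimes\tE_\sigma)\tS$ and $\tD\coloneqq\tS^{-1}(\tD_\rho\boxtimes\tD_\sigma)$, which uses $M_\rho+M_\sigma$ obits. Since $\tS$ is reversible it affects neither norms (by monotonicity with equality) nor the class of dilations, so it suffices to estimate the error of $\Lambda_\rho\boxtimes\Lambda_\sigma$ with $\Lambda_\rho\coloneqq\tD_\rho\tE_\rho$ and $\Lambda_\sigma\coloneqq\tD_\sigma\tE_\sigma$.

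The key structural observation is that any dilation of $\Psi^{\boxtimes N}$ is simultaneously a dilation of $\rho^{\boxtimes N}$ (viewing $\rB^{\boxtimes N}$ together with the external ancilla $\rC$ as the ancillary system) and a dilation of $\sigma^{\boxtimes N}$ (viewing $\rA^{\boxtimes N}\rC$ as ancilla); this is exactly the property $D_\Omega\subseteq D_\rho$ for $\Omega\in D_\rho$ recorded above, applied to the two marginals. Hence any refinement $\{\Theta_i\}$ of a dilation of $\Psi^{\boxtimes N}$ is a refinement of a dilation of $\rho^{\boxtimes N}$ and of $\sigma^{\boxtimes N}$. I then telescope:
\begin{align*}
\Lambda_\rho\boxtimes\Lambda_\sigma\boxtimes\tI_\rC-\tI
&=(\Lambda_\rho\boxtimes\tI\boxtimes\tI_\rC)\bigl[(\tI\boxtimes\Lambda_\sigma\boxtimes\tI_\rC)-\tI\bigr]\\
&\quad+\bigl[(\Lambda_\rho\boxtimes\tI\boxtimes\tI_\rC)-\tI\bigr].
\end{align*}

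Applying this to each $\Theta_i$, summing over $i$, and using the triangle inequality, the second group is bounded by the $\rho$-reliability (with ancilla $\rB^{\boxtimes N}\rC$) and is $<\varepsilon$; for the first group, monotonicity of the operational norm under the deterministic map $\Lambda_\rho\boxtimes\tI\boxtimes\tI_\rC$ lets me discard that channel, after which the $\sigma$-reliability (with ancilla $\rA^{\boxtimes N}\rC$) bounds the sum by $<\varepsilon$. Since a sum of two numbers each $<\varepsilon$ is $<2\varepsilon$, taking the supremum over $\rC$ and $\{\Theta_i\}$ shows $(\tE,\tD)$ is $(2\varepsilon,N)$-reliable, whence $E_{N,M_\rho+M_\sigma,2\varepsilon}(\Psi)\neq\emptyset$ and therefore $R_{N,2\varepsilon}(\Psi)\le R_{N,\varepsilon}(\rho)+R_{N,\varepsilon}(\sigma)$.

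To finish I take $\limsup_{N\to\infty}$ (using subadditivity of $\limsup$) and then $\varepsilon\to0$; since the $\varepsilon$-limit of $\limsup_N R_{N,\varepsilon}$ exists for each state by the well-definedness proposition and $2\varepsilon\to0$ as $\varepsilon\to0$, this yields $I(\Psi)\le I(\rho)+I(\sigma)$. I expect the main obstacle to be the error-control step: the telescoping must be arranged so that at each stage the residual is a refinement of a genuine dilation of the correct marginal, and one must invoke the monotonicity lemma to drop the extra channel without any error penalty — this is precisely what keeps the two tolerances merely additive rather than coupling them multiplicatively.
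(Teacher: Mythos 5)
Your proof is correct and follows essentially the same route as the paper's: compose near-optimal schemes for the two marginals in parallel, observe that any dilation of $\Psi^{\boxtimes N}$ is also a dilation of both $\rho^{\boxtimes N}$ and $\sigma^{\boxtimes N}$, telescope with the triangle inequality to obtain a $2\varepsilon$-reliable scheme on $M_\rho+M_\sigma$ obits, and then take $\limsup_{N\to\infty}$ followed by $\varepsilon\to0$. The only (immaterial) difference is in bounding the cross term: you discard the extra channel via monotonicity of the operational norm and apply $\sigma$-reliability to the original refinement, whereas the paper instead notes that the $\rho$-compressed ensemble is still a refinement of a dilation of $\sigma^{\boxtimes N}$ and applies $\sigma$-reliability to the transformed states directly.
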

\begin{proof}
Let  $(\tE^{\rho},\tD^{\rho})\in E_{N,\overline{K},\varepsilon}(\rho)$,
$(\tE^{\sigma},\tD^{\sigma})\in E_{N,\overline{L},\varepsilon}(\sigma)$ with 
$\overline{K}\coloneqq\min \{M: E_{N,M,\varepsilon}(\rho)\neq\emptyset\}$ and similarly for $\overline{L}$. Now let 
$\{ \Gamma_i \}_{i\in\sX}$ be such that $\sum_{i\in\sX}\Gamma_i \in D_{\Psi^{\boxtimes N}}$ and consider 
$\{ (\tC^{\rho}\boxtimes \tI) (\Gamma_{i})\}_{i\in\sX}$, where $\tC^{\rho}\coloneqq\tD^{\rho}\tE^{\rho}$. Since
$\tC^{\rho}$ is a channel and $D_{\Psi^{\boxtimes N}}\subseteq D_{\rho^{\boxtimes N}}$,
$D_{\Psi^{\boxtimes N}}\subseteq D_{\sigma^{\boxtimes N}}$ we have that 
$\sum_{i\in\sX}(\tC^{\rho}\boxtimes \tI) (\Gamma_{i})$ is a dilation of both  $\rho^{\boxtimes N}$ and
$\sigma^{\boxtimes N}$. Similarly for $\tC^\sigma\coloneqq\tD^\sigma\tE^\sigma$. This implies the following bound
\begin{align*}
& \sum_{i\in\sX}\norma{(\tC^{\rho}\boxtimes \tC^{\sigma}\boxtimes \tI)(\Gamma_i)-\Gamma_i}_{op}\\
=&  \sum_{i\in\sX}\norma{(\tC^{\rho}\boxtimes \tC^{\sigma}\boxtimes \tI)(\Gamma_i)-(\tC^{\rho}\boxtimes \tI)(\Gamma_i) \\
					&+(\tC^{\rho}\boxtimes \tI)(\Gamma_i)-\Gamma_i}_{op} \\
< \ & 2\varepsilon  ,
\end{align*}
where we used the triangle inequality for the operational norm. Thus $E_{N,\overline{K}+\overline{L},2\varepsilon}(\Psi)\neq\emptyset$ and this implies that
\[
\frac{\min\{ M:E_{N,M,2\varepsilon}(\Psi)\neq\emptyset \}}{N} \leq\ \frac{\overline{K}}{N}+\frac{\overline{L}}{N}.
\]
And finally, by taking the $\limsup_{N\rightarrow\infty}$ and then $\lim_{\varepsilon\rightarrow 0}$ on both sides
we get the thesis.
\end{proof}

We notice that, in order to compute the information content, one can test the compression
schemes on pure decompositions $\{p_i,\Phi_i\}$ only. More precisely, let $E_{N,M,\varepsilon}^\mathrm{pur}(\rho)$ be the set of schemes which are $(\varepsilon,N)$-reliable according to the following criterion
\begin{equation}\label{eq:pure_figure_merit}
\sup\left(
\sum_{i}p_i\normop{(\tD\tE\boxtimes\tI)(\Phi_i)-\Phi_i}\right)<\varepsilon,
\end{equation}
where the supremum is taken on all the pure decompositions $\{p_i,\Phi_i\}$ of any $\Omega\in D_{\rho^{\boxtimes N}}$. Let $I^\mathrm{pur}(\rho)$ be the information content computed restricting to such maps, i.e.
\begin{align*}
&R^\mathrm{pur}_{N,\varepsilon}(\rho)\coloneqq\frac{\min\{ M:E^\mathrm{pur}_{N,M,\varepsilon}(\rho)\neq\emptyset\}}{N}\\
&I^\mathrm{pur}(\rho)\coloneqq\lim_{\varepsilon\rightarrow 0}\limsup_{N\rightarrow \infty}R^\mathrm{pur}_{N,\varepsilon}(\rho).
\end{align*}
Then one has $I(\rho)=I^\mathrm{pur}(\rho)$.

\begin{lemma}\label{lemma:pur_infocontent}
Let $\rho \in \st_1(\rA)$, then $I(\rho)=I^\mathrm{pur}(\rho)$
\end{lemma}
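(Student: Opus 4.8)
The plan is to prove the stronger statement that, for every $N$, $M$ and $\varepsilon$, the two families of reliable schemes actually coincide, $E_{N,M,\varepsilon}(\rho)=E^\mathrm{pur}_{N,M,\varepsilon}(\rho)$; the equality $R_{N,\varepsilon}(\rho)=R^\mathrm{pur}_{N,\varepsilon}(\rho)$ then follows directly from definition~\eqref{eq:inforate}, and passing to the limits yields $I(\rho)=I^\mathrm{pur}(\rho)$. One inclusion is immediate: a pure decomposition $\{p_i,\Phi_i\}$ of a dilation $\Omega\in D_{\rho^{\boxtimes N}}$ is in particular a refinement $\{p_i\Phi_i\}$ of $\Omega$, and by homogeneity of the operational norm the associated term $\sum_i p_i\normop{(\tD\tE\boxtimes\tI)\Phi_i-\Phi_i}=\sum_i\normop{(\tD\tE\boxtimes\tI)(p_i\Phi_i)-p_i\Phi_i}$ is one of the quantities already controlled by the general figure of merit. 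Hence every scheme in $E_{N,M,\varepsilon}(\rho)$ lies in $E^\mathrm{pur}_{N,M,\varepsilon}(\rho)$.

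For the reverse inclusion I would start from a scheme $(\tE,\tD)\in E^\mathrm{pur}_{N,M,\varepsilon}(\rho)$, fix an arbitrary ancilla $\rC$, a dilation $\Omega\in D_{\rho^{\boxtimes N}}$ and an arbitrary refinement $\{\Psi_i\}_{i\in\sX}$ of $\Omega$, and refine each component further into atomic states, $\Psi_i=\sum_j\Psi_{ij}$ with $\Psi_{ij}=p_{ij}\Phi_{ij}$ and $\Phi_{ij}$ pure and normalized. Setting $\tC\coloneqq\tD\tE$ and using linearity of $\tC\boxtimes\tI$ together with the triangle inequality and homogeneity of the operational norm, I would obtain
\begin{align*}
\sum_{i}\normop{(\tC\boxtimes\tI)\Psi_i-\Psi_i}&\le\sum_{i,j}\normop{(\tC\boxtimes\tI)\Psi_{ij}-\Psi_{ij}}\\
&=\sum_{i,j}p_{ij}\normop{(\tC\boxtimes\tI)\Phi_{ij}-\Phi_{ij}}.
\end{align*}
The crucial observation is that $\{p_{ij},\Phi_{ij}\}_{(i,j)}$ is itself a pure decomposition of the same dilation $\Omega$, since $\sum_{ij}\Psi_{ij}=\sum_i\Psi_i=\Omega$. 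Therefore the right-hand side is one of the quantities appearing in the supremum~\eqref{eq:pure_figure_merit} and is thus bounded by $\varepsilon$. Taking the supremum over $\rC$ and over all refinements $\{\Psi_i\}$ of all dilations preserves the bound, so $(\tE,\tD)\in E_{N,M,\varepsilon}(\rho)$, closing the argument.

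The step that requires genuine justification---and is the \emph{main obstacle}---is the existence of the atomic refinement $\Psi_i=\sum_j p_{ij}\Phi_{ij}$ of each component: this is where I rely on the finite-dimensionality of $\st_\R(\rA^{\boxtimes N}\rC)$, which guarantees that any state is a coarse-graining of atomic (pure) states, so that every refinement of $\Omega$ can in turn be refined into a pure decomposition of $\Omega$. The remaining ingredients are routine: the identities $\normop{p\,\delta}=p\normop{\delta}$ for $p\ge0$ and the triangle inequality hold for the operational norm by its very definition, and the bookkeeping of strict versus non-strict inequalities is harmless, since a supremum that is strictly below $\varepsilon$ bounds each of its terms by the same strict bound.
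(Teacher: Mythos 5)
Your proof is correct and takes essentially the same route as the paper's: the nontrivial inclusion $E^\mathrm{pur}_{N,M,\varepsilon}(\rho)\subseteq E_{N,M,\varepsilon}(\rho)$ is established exactly as in the paper, by refining each $\Psi_i$ of an arbitrary refinement into pure states, observing that the collected family $\{p_{ij},\Phi_{ij}\}$ is a pure decomposition of the same dilation $\Omega$, and applying the triangle inequality together with homogeneity of the operational norm. Your additional remarks (spelling out the trivial inclusion and flagging the existence of atomic refinements, which the paper asserts without comment) only make explicit what the paper leaves implicit.
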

\begin{proof}
On the one hand, we trivially have $I^\mathrm{pur}(\rho)\leq I(\rho)$. On the other hand, let $\{\Psi_i\}_{i\in \sX}$ be a refinement of $\Omega\in D_{\rho^{\boxtimes N}}$. For any $i$ we can further decompose $\Psi_i$ in
terms of pure states $\{q_{j}^{i},\Phi_{i,j}\}_{j\in\sY}$, with $\sum_{i,j}q_{j}^{i}=1$. Therefore $\{q_{j}^{i},\Phi_{i,j}\}_{(i,j)\in\sX\times\sY}$ is
a pure state decomposition of $\Omega$, and by the triangle inequality one has
\begin{align*}
&\sum_{i\in\sX}\normop{[(\tD\circ\tC)\boxtimes\tI]\Psi_{i}-\Psi_{i}}\\ 
&\leq\sum_{(i,j)\in\sX\times\sY}q_{j}^{i}\normop{[(\tD\circ\tC)\boxtimes\tI]\Phi_{i,j}-\Phi_{i,j}}.
\end{align*}
This implies that $E_{N,M,\varepsilon}^\mathrm{pur}(\rho)\subseteq E_{N,M,\varepsilon}(\rho)$, and in turns that $I(\rho)\leq I^\mathrm{pur}(\rho)$ Therefore
$I(\rho)=I^\mathrm{pur}(\rho)$.
\end{proof}

\begin{proposition}Let $\rho\in\st_1(\rA)$ and $\tU \in \tr_1(\rA)$ be a reversible channel, then $I(\rho)=I(\tU(\rho))$.
\end{proposition}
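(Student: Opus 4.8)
The plan is to use reversibility to transport any compression scheme for $\rho$ into one for $\tU(\rho)$ of exactly the same size, and conversely, so that the sets $E_{N,M,\varepsilon}$ for the two states coincide. Write $\tU^{-1}\in\tr_1(\rA\to\rA)$ for the inverse channel and note $\tU(\rho)^{\boxtimes N}=\tU^{\boxtimes N}(\rho^{\boxtimes N})$. Given $(\tE,\tD)\in E_{N,M,\varepsilon}(\rho)$ with $\tE\in\tr_1(\rA^{\boxtimes N}\to\rB^{\boxtimes M})$ and $\tD\in\tr_1(\rB^{\boxtimes M}\to\rA^{\boxtimes N})$, I would define the transported scheme $\tE'\coloneqq\tE\,(\tU^{-1})^{\boxtimes N}$ and $\tD'\coloneqq\tU^{\boxtimes N}\tD$. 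These act between the same systems as $\tE,\tD$, hence use the same number $M$ of obits, and their encode--decode channel is $\tD'\tE'=\tU^{\boxtimes N}(\tD\tE)(\tU^{-1})^{\boxtimes N}$.

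The core step is the reliability estimate. Let $\Omega$ be any dilation of $\tU(\rho)^{\boxtimes N}$ on an ancilla $\rC$ and $\{\Phi_i\}$ a refinement of it. Setting $\Psi_i\coloneqq[(\tU^{-1})^{\boxtimes N}\boxtimes\tI_\rC]\Phi_i$, the family $\{\Psi_i\}$ refines $\Omega'\coloneqq[(\tU^{-1})^{\boxtimes N}\boxtimes\tI_\rC]\Omega$, and $\Omega'$ is a dilation of $\rho^{\boxtimes N}$ because closing $\rC$ with the same deterministic effect and using $\tU^{-1}\tU=\tI$ returns $\rho^{\boxtimes N}$. Distributing the parallel composition over the sequential one gives $[(\tD'\tE')\boxtimes\tI_\rC]=(\tU^{\boxtimes N}\boxtimes\tI_\rC)[(\tD\tE)\boxtimes\tI_\rC]((\tU^{-1})^{\boxtimes N}\boxtimes\tI_\rC)$, while $\Phi_i=(\tU^{\boxtimes N}\boxtimes\tI_\rC)\Psi_i$. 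Factoring out the reversible channel $\tU^{\boxtimes N}\boxtimes\tI_\rC$ then yields
\begin{align*}
\normop{[(\tD'\tE')\boxtimes\tI_\rC]\Phi_i-\Phi_i}
&=\normop{(\tU^{\boxtimes N}\boxtimes\tI_\rC)\bigl([(\tD\tE)\boxtimes\tI_\rC]\Psi_i-\Psi_i\bigr)}\\
&=\normop{[(\tD\tE)\boxtimes\tI_\rC]\Psi_i-\Psi_i},
\end{align*}
where the last equality is the isometry of reversible channels for the operational norm, i.e.\ the equality case of the monotonicity lemma applied to $\tU^{\boxtimes N}\boxtimes\tI_\rC$ (reversible as a parallel composition of reversible transformations).

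Finally, since $\tU^{\boxtimes N}\boxtimes\tI_\rC$ is invertible, the assignment $\Omega\mapsto\Omega'$ together with $\{\Phi_i\}\mapsto\{\Psi_i\}$ is a bijection between dilation--refinements of $\tU(\rho)^{\boxtimes N}$ and those of $\rho^{\boxtimes N}$: it sends tests to tests and preserves coarse-grainings, with inverse given by applying $\tU^{\boxtimes N}\boxtimes\tI_\rC$. Summing the displayed identity over $i$ and taking the supremum over all ancillae and refinements shows that the two reliability functionals coincide, so $(\tE,\tD)\in E_{N,M,\varepsilon}(\rho)$ iff $(\tE',\tD')\in E_{N,M,\varepsilon}(\tU(\rho))$. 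Hence $E_{N,M,\varepsilon}(\rho)\neq\emptyset\Leftrightarrow E_{N,M,\varepsilon}(\tU(\rho))\neq\emptyset$ for every $N,M,\varepsilon$, giving $R_{N,\varepsilon}(\rho)=R_{N,\varepsilon}(\tU(\rho))$, and taking $\limsup_{N\to\infty}$ and then $\lim_{\varepsilon\to0}$ yields $I(\rho)=I(\tU(\rho))$. The one point requiring care is checking that pre/post-composition with the reversible $\tU^{\boxtimes N}$ really is a size-preserving bijection both of reliable schemes and of dilation--refinements; once the equality case of the operational-norm monotonicity lemma is invoked, the remainder is bookkeeping.
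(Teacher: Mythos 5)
Your proof is correct and follows essentially the same route as the paper's: transport the compression scheme by pre/post-composing with $\tU^{\boxtimes N}$ (equivalently its inverse), map refinements of dilations of one state to those of the other, and invoke invariance of the operational norm under reversible channels. The only cosmetic difference is that you establish the two-sided bijection directly, whereas the paper proves the single inequality $I(\rho)\leq I(\tU(\rho))$ and obtains the converse by applying that result to the reversible channel $\tU^{-1}$.
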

\begin{proof}
We show that  $E_{N,M,\varepsilon}(\tU(\rho))\neq\emptyset\implies E_{N,M,\varepsilon}(\rho)\neq\emptyset$.
Let $(\tE,\tD)\in  E_{N,M,\varepsilon}(\tU(\rho))$ and let $\{\Psi_{i}\}_{i\in\sX}$ be such that 
$\sum_{i\in\sX}\Psi_{i}\in D_{\rho^{\boxtimes N}}$. It is clear that 
$\sum_{i\in\sX}( \tU^{\boxtimes N}\boxtimes\tI)(\Psi_{i})\in D_{\tU(\rho)^{\boxtimes N}}$ and therefore
\[
\sum_{i\in\sX}\normop{[( \tD\tE-\tI)\boxtimes\tI]( \tU^{\boxtimes N}\boxtimes\tI)(\Psi_{i})}<\varepsilon.
\]
Upon defining $\tilde{\tE}\coloneqq\tE\tU^{\boxtimes N}$ and $\tilde{\tD}\coloneqq(\tU^{-1})^{\boxtimes N}\tD$, recalling that $\tU$ is reversible and that the
operational norm is invariant under reversible transformations, the above inequality can be rewritten as follows
\begin{align*}
\varepsilon>&\sum_{i\in\sX}\normop{( \tU^{\boxtimes N}\boxtimes\tI)[(\tilde{\tD}\tilde{\tE}\boxtimes\tI)(\Psi_{i})-(\Psi_{i})]}\\
		=&\sum_{i\in\sX}\normop{[(\tilde{\tD}\tilde{\tE}\boxtimes\tI)(\Psi_{i})-(\Psi_{i})]},
\end{align*}
namely, since $\{\Psi_{i}\}_{i\in\sX}$ is arbitrary, $(\tilde{\tE},\tilde{\tD})\in E_{N,M,\varepsilon}(\rho)\neq\emptyset$.
This implies that $R_{N,\varepsilon}(\rho)\leq R_{N,\varepsilon}(\tU(\rho))$, and then $I(\rho)\leq I[\tU(\rho)]$.
The reverse inequality is now trivial
\[I(\rho)=I[\tU^{-1}\tU(\rho)]\geq I[\tU(\rho)],\]
where we have used the previous result along with the fact that $\tU^{-1}$ is also reversible.
\end{proof}
\section{Steering: Information Content from Dilations}\label{sec:steer}
In this section we show that in an OPT satisfying the steering property (assumption~\ref{ass:steering}) the information content of a state can be computed by considering the action of the compression schemes on the set $D_{\rho^{\boxtimes N}}$ only.
\begin{lemma}\label{lem:dillemma}
Let $\rho\in\st_1(\rA)$, $\tC\in\tr_{1}(\rA)$ and $\varepsilon>0$. If $\norma{\tC\boxtimes\tI(\Psi)-\Psi}<\varepsilon$
for any $\Psi\in D_{\rho}$ and assumption~\ref{ass:steering} holds, then one has
\[
\sum_{i\in\sX}\normop{\tC\boxtimes\tI(\Sigma_{i})-\Sigma_{i}}<\varepsilon,
\]
for any refinement $\{\Sigma_{i}\}_{i\in\sX}$ of an element of $D_{\rho}$.
\end{lemma}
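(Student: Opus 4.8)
The plan is to reduce the claim about a refinement $\{\Sigma_i\}$ of a dilation to the hypothesis, which constrains $\tC$ only on genuine dilations of $\rho$, by using steering to realise the whole refinement as a local measurement performed on one \emph{single} larger dilation.

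First I would set $\Omega\coloneqq\sum_{i\in\sX}\Sigma_i$, a dilation of $\rho$ on some $\rA\rB$, of which $\{\Sigma_i\}_{i\in\sX}$ is by hypothesis a refinement. Applying Assumption~\ref{ass:steering} to the state $\Omega\in\st(\rA\rB)$ and this refinement yields a system $\rE$, a state $\Psi\in\st(\rA\rB\rE)$ and an observation test $\{b_i\}_{i\in\sX}\subseteq\eff(\rE)$ with $\Sigma_i=(\tI_{\rA\rB}\boxtimes b_i)\Psi$. Summing over $i$ (and using $\sum_i b_i=e_\rE$) shows $\Psi\in D_\Omega$, and since $\Omega\in D_\rho$ forces $D_\Omega\subseteq D_\rho$, we get $\Psi\in D_\rho$. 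Hence the hypothesis applies to $\Psi$ and gives $\normop{(\tC\boxtimes\tI)\Psi-\Psi}<\varepsilon$. Writing $\delta\coloneqq(\tC\boxtimes\tI)\Psi-\Psi$, the second step is the commutation identity (valid because $\tC$ acts on $\rA$ and $b_i$ on $\rE$, and parallel and sequential composition commute)
\[
(\tC\boxtimes\tI)\Sigma_i-\Sigma_i=(\tI_{\rA\rB}\boxtimes b_i)\big[(\tC\boxtimes\tI)\Psi-\Psi\big]=(\tI_{\rA\rB}\boxtimes b_i)\delta .
\]

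The remaining and central step is the contraction estimate
\[
\sum_{i\in\sX}\normop{(\tI_{\rA\rB}\boxtimes b_i)\delta}\le\normop{\delta},
\]
from which the thesis follows by inserting the bound $\normop{\delta}<\varepsilon$. To prove it I would choose, for each $i$, an effect $a_i\in\eff(\rA\rB)$ that (nearly) attains $\normop{(\tI_{\rA\rB}\boxtimes b_i)\delta}=(2a_i-e_{\rA\rB}|(\tI_{\rA\rB}\boxtimes b_i)\delta)$. Pushing $b_i$ into the pairing and summing gives $\sum_i\normop{(\tI_{\rA\rB}\boxtimes b_i)\delta}=(2A-e_{\rA\rB\rE}|\delta)$, where $A\coloneqq\sum_i a_i\boxtimes b_i$ and we again used $\sum_i b_i=e_\rE$. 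The crux is that $A$ is a bona fide effect on $\rA\rB\rE$: by strong causality one may form the test that first measures $\{b_i\}$ on $\rE$ and, conditionally on outcome $i$, the binary test $\{a_i,\,e_{\rA\rB}-a_i\}$ on $\rA\rB$; coarse-graining its ``first'' branches over $i$ returns exactly $A$, so $0\le A\le e_{\rA\rB\rE}$ and therefore $(2A-e_{\rA\rB\rE}|\delta)\le\normop{\delta}$.

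I expect this last point to be the main obstacle: everything else (the steering reduction and the commutation of $\tC$ with $b_i$) is routine, whereas assembling the branchwise-optimal effects $a_i$ into one admissible effect $A$ is exactly what forces us to invoke strong causality, which licenses the conditional test whose coarse-graining is $A$. A small technical caveat is that the suprema defining the operational norm need not be attained; since $\sX$ is finite one picks each $a_i$ within $\eta/\abs{\sX}$ of the supremum, obtains $\sum_i\normop{(\tI_{\rA\rB}\boxtimes b_i)\delta}<\normop{\delta}+\eta$, and lets $\eta\to0$.
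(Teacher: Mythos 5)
Your proof is correct and follows essentially the same route as the paper's: you apply steering to the coarse-graining $\Omega=\sum_{i\in\sX}\Sigma_i$ to obtain a single dilation $\Psi\in D_\Omega\subseteq D_\rho$ that steers the whole refinement, and then assemble the branchwise (near-)optimal effects with the steering test into the single effect $\sum_{i\in\sX}a_i\boxtimes b_i$ via strong causality --- which is precisely the paper's Appendix~\ref{app:trivial_lemma} lemma --- so as to bound the sum by $\normop{(\tC\boxtimes\tI)\Psi-\Psi}<\varepsilon$. Your explicit $\eta/\abs{\sX}$ treatment of possibly non-attained suprema is a minor technical refinement that the paper glosses over by assuming the norm is achieved.
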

\begin{proof}
Let $\{\Sigma_{i}\}_{i\in\sX}$ be the refinement of an element $\Omega$ of $D_{\rho}$. By assumption~\ref{ass:steering} there
exists $\Gamma\in D_{\Omega}\subseteq D_{\rho}$ and an observation test $\{c_i\}_{i\in\sX}$ such that 
\begin{equation*}
\  \
\begin{aligned}
 \Qcircuit @C=1em @R=.7em @! R { \multiprepareC{1}{\Sigma_i} & \ustick \rA \qw& \qw \\
 \pureghost {\Sigma_i} & \ustick \rB \qw& \qw}
\end{aligned}
\ =\
\begin{aligned}
 \Qcircuit @C=1em @R=.7em @! R { \multiprepareC{2}{\Gamma} & \ustick \rA \qw& \qw \\
 		\pureghost {\Gamma} & \ustick \rB \qw& \qw \\ 
		\pureghost {\Gamma} & \ustick \rC \qw&\measureD{c_{i}}}
\end{aligned}\ ,
\qquad \forall i\in \sX.
\end{equation*}
For any $i\in\sX$, let $A_{i}\in\eff(\rA\rB)$ be the effect achieving the norm, namely such that
\[
\normop{\tC\boxtimes\tI(\Sigma_{i})-\Sigma_{i}}=(A_{i}|[(\tC-\tI)\boxtimes\tI]|\Sigma_{i}).
\]
Since $\sum_{i\in\sX}A_i\boxtimes c_i$ is an effect (see appendix~\ref{app:trivial_lemma}), we have that 
\begin{align*}
&  \sum_{i\in\sX} \normop{\tC\boxtimes\tI(\Sigma_{i})-\Sigma_{i}} =\\
& \sum_{i\in\sX} \
\begin{aligned}
 \Qcircuit @C=0.7em @R=.4em @! R { \multiprepareC{2}{\Gamma} & \ustick \rA \qw&\gate{\tC}& \ustick \rA \qw&\multimeasureD{1}{A_i} \\
 		\pureghost {\Gamma} & \ustick \rB \qw& \qw  &\qw&\ghost{A_i}\\ 
		\pureghost {\Gamma} & \qw& \ustick \rC \qw&\qw &\measureD{c_{i}}}
\end{aligned}
\ - \
\begin{aligned}
 \Qcircuit @C=1em @R=.7em @! R { \multiprepareC{2}{\Gamma} & \ustick \rA \qw&\multimeasureD{1}{A_i} \\
 		\pureghost {\Gamma} & \ustick \rB \qw&\ghost{A_i} \\ 
		\pureghost {\Gamma} & \ustick \rC \qw&\measureD{c_{i}}}
\end{aligned} \\
& \leq  \normop{\tC\boxtimes\tI(\Gamma)-\Gamma} < \varepsilon,
\end{align*}
which straightforwardly leads to the thesis.
\end{proof}

\begin{proposition}\label{prop:infocontent_dil}
Let $\rho\in\st_1(\rA)$ and consider
\begin{equation}
I^\mathrm{dil}(\rho)\coloneqq\lim_{\varepsilon\rightarrow 0}\limsup_{N\rightarrow \infty} \frac{\min\{M:E^\mathrm{dil}_{N,M,\varepsilon}(\rho)\neq\emptyset\}}{N},
\end{equation}
where $E^\mathrm{dil}_{N,M,\varepsilon}(\rho)$
is the set of the compression schemes $(\tE,\tD)$ such that 
$\sup_{\Psi\in D_{\rho^{\boxtimes N}}}\norma{\tC\boxtimes\tI(\Psi)-\Psi}<\varepsilon$.
Then $I(\rho)=I^\mathrm{dil}(\rho)$.
\end{proposition}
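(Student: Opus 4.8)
The plan is to prove the stronger statement that the scheme sets themselves coincide, $E^{\mathrm{dil}}_{N,M,\varepsilon}(\rho)=E_{N,M,\varepsilon}(\rho)$ for every $N,M,\varepsilon$; the equality of the minimal admissible $M$'s, then of the rates $R^{\mathrm{dil}}_{N,\varepsilon}$ and $R_{N,\varepsilon}$, and finally of $I$ and $I^{\mathrm{dil}}$ will follow by dividing by $N$ and taking $\limsup_{N\to\infty}$ and $\lim_{\varepsilon\to0}$. Throughout I write $\tC\coloneqq\tD\tE\in\tr_1(\rA^{\boxtimes N})$ for the round-trip channel of a scheme $(\tE,\tD)$.

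First I would dispose of the trivial inclusion $E_{N,M,\varepsilon}(\rho)\subseteq E^{\mathrm{dil}}_{N,M,\varepsilon}(\rho)$. Any $\Psi\in D_{\rho^{\boxtimes N}}$ is its own singleton (trivial) refinement, so the supremum defining the $E$-criterion, taken over all refinements of all dilations, in particular dominates $\normop{(\tC\boxtimes\tI)\Psi-\Psi}$. Hence a scheme satisfying the $E$-criterion a fortiori satisfies the weaker $E^{\mathrm{dil}}$-criterion, and this inclusion already yields $I^{\mathrm{dil}}(\rho)\le I(\rho)$.

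The genuine content of the statement is the reverse inclusion, and here all the work is delegated to Lemma~\ref{lem:dillemma}, applied verbatim with $\rho$ replaced by $\rho^{\boxtimes N}$ and with $\tC=\tD\tE\in\tr_1(\rA^{\boxtimes N})$. Suppose $(\tE,\tD)\in E^{\mathrm{dil}}_{N,M,\varepsilon}(\rho)$, so that $s\coloneqq\sup_{\Psi\in D_{\rho^{\boxtimes N}}}\normop{(\tC\boxtimes\tI)\Psi-\Psi}<\varepsilon$. Choosing an intermediate tolerance $\varepsilon'$ with $s<\varepsilon'<\varepsilon$, the hypothesis of Lemma~\ref{lem:dillemma} holds at level $\varepsilon'$, so for every refinement $\{\Sigma_i\}_{i\in\sX}$ of every element of $D_{\rho^{\boxtimes N}}$ one obtains $\sum_{i\in\sX}\normop{(\tC\boxtimes\tI)\Sigma_i-\Sigma_i}<\varepsilon'$. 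Since this bound is uniform in the refinement and in the choice of ancillary system, the supremum over refinements and ancillas is at most $\varepsilon'<\varepsilon$, which is precisely the $E$-criterion. Thus $(\tE,\tD)\in E_{N,M,\varepsilon}(\rho)$ and $E^{\mathrm{dil}}_{N,M,\varepsilon}(\rho)\subseteq E_{N,M,\varepsilon}(\rho)$, giving $I(\rho)\le I^{\mathrm{dil}}(\rho)$ and hence equality.

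I expect the only delicate point, and the place where assumption~\ref{ass:steering} silently enters through Lemma~\ref{lem:dillemma}, to be the passage from the \emph{pointwise} bound the lemma delivers for each individual refinement to the \emph{uniform} supremum bound demanded by the definition of $E_{N,M,\varepsilon}(\rho)$; inserting the intermediate tolerance $\varepsilon'$ is exactly what keeps the inequality strict after taking the supremum. Everything else is bookkeeping: identifying which inclusion is immediate, checking that the ancillary systems range over the same class in the two definitions, and pushing the resulting set equality through the definitions of the compression rate and of the information content.
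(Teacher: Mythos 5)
Your proof is correct and follows essentially the same route as the paper's: the trivial inclusion $E_{N,M,\varepsilon}(\rho)\subseteq E^{\mathrm{dil}}_{N,M,\varepsilon}(\rho)$ because every dilation is its own (trivial) refinement, and the reverse inclusion delegated entirely to Lemma~\ref{lem:dillemma} applied to $\rho^{\boxtimes N}$ with $\tC=\tD\tE$. Your insertion of the intermediate tolerance $\varepsilon'$ to preserve the strict inequality after taking the supremum is a small point of rigor that the paper glosses over, but it does not change the argument.
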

\begin{proof}
This is a straightforward consequence of the lemma. Indeed, all the dilations are a refinement of themselves, so that
$E_{N,M,\varepsilon}(\rho)\subseteq E^\mathrm{dil}_{N,M,\varepsilon}(\rho)$ and therefore $I^\mathrm{dil}(\rho)\leq I(\rho)$. The inclusion
$E_{N,M,\varepsilon}(\rho)\supseteq E^\mathrm{dil}_{N,M,\varepsilon}(\rho)$ follows by lemma~\ref{lem:dillemma}, and this implies the thesis.
\end{proof}

We now prove some bounds that involve a quantity generalising the classical and quantum fidelity. For this purpose, we consider a definition of fidelity \cite{2010RpMP...66..175K} that can be adopted in the OPT framework, which reduces to the classical or quantum one in the in the respective theories.

\begin{definition} \label{def:fidelity_OPT}
Let $\rho,\sigma\in\st_1(\rA)$. For any observation test $\{a_{i}\}_{i\in \sX}\subseteq \eff{(\rA)}$ denote by ${\bf p}\coloneqq p_{i}$
and ${\bf q}\coloneqq q_{i}$ the probability distributions defined by
\begin{align*}
p_{i}\coloneqq& \
  \begin{aligned}
    \Qcircuit @C=1em @R=.7em @! R {\prepareC{\rho}& \ustick \rA
      \qw& \measureD{a_{i}}}
  \end{aligned}\ ,
\\
q_{i}\coloneqq& \
  \begin{aligned}
    \Qcircuit @C=1em @R=.7em @! R {\prepareC{\rho}& \ustick \rA
      \qw&\measureD{a_{i}}}
  \end{aligned}\ .
\end{align*}
Then one can define the Fidelity between $\rho$ and $\sigma$ as
\begin{equation}
F(\rho,\sigma)\coloneqq\inf_{\{a_{i}\}_{i\in\sX}}F_{c}({\bf p},{\bf q}), \label{eq:fidelity_OPT}
\end{equation}
where $F_{c}({\bf p},{\bf q})=\sum_{i}\sqrt{p_{i}q_{i}}$.
\end{definition}

Since the classical fidelity is bounded by 1, and is equal to 1 only for $\bf p=\bf q$, one clearly has $F(\rho,\sigma)\leq1$, with equality if and only if $\rho=\sigma$. Fidelity satisfies the following property, that generalises the Fuchs-van de Graaf inequality~\cite{761271} and is relevant for the present work.

\begin{proposition}\label{prop:gen_fuchs}
Let $\rho,\sigma\in\st_1(\rA)$. The following inequalities hold
\begin{equation}\label{eq:gen_fuchs}
1-F(\rho,\sigma)\leq\frac{1}{2}\normop{\rho-\sigma}\leq\sqrt{1-F(\rho,\sigma)^2}.
\end{equation}
\end{proposition}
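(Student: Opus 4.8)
The plan is to reduce both inequalities to the classical Fuchs--van de Graaf inequality applied to the outcome distributions of observation tests, the bridge being that for deterministic states the operational norm of $\rho-\sigma$ coincides with the optimal classical distinguishing advantage. Concretely, since $\rho,\sigma\in\st_1(\rA)$ give $(e_\rA|\rho)=(e_\rA|\sigma)=1$, every effect $a$ satisfies $(2a-e_\rA|\rho-\sigma)=2[(a|\rho)-(a|\sigma)]$, so that
\[
\tfrac12\normop{\rho-\sigma}=\sup_{a\in\eff(\rA)}\bigl[(a|\rho)-(a|\sigma)\bigr].
\]

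First I would prove a bridge lemma: for any observation test $\{a_i\}_{i\in\sX}$ with $p_i\coloneqq(a_i|\rho)$ and $q_i\coloneqq(a_i|\sigma)$, the classical total variation distance is dominated by the operational norm. Setting $a_+\coloneqq\sum_{i:\,(a_i|\rho)\ge(a_i|\sigma)}a_i$, which is a legitimate effect as a partial coarse-graining of the test, and using $\sum_i(a_i|\rho)=\sum_i(a_i|\sigma)=1$ to make the positive and negative contributions to $\sum_i[(a_i|\rho)-(a_i|\sigma)]$ cancel, one finds
\[
\tfrac12\norma{\mathbf p-\mathbf q}_1=(a_+|\rho)-(a_+|\sigma)\le\tfrac12\normop{\rho-\sigma}.
\]

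For the left inequality I would combine this with the classical lower bound $1-F_c(\mathbf p,\mathbf q)\le\tfrac12\norma{\mathbf p-\mathbf q}_1$, valid for every test. Together these give $F_c(\mathbf p,\mathbf q)\ge 1-\tfrac12\normop{\rho-\sigma}$ for \emph{every} test; recalling $F(\rho,\sigma)=\inf_{\{a_i\}}F_c(\mathbf p,\mathbf q)$ from~\eqref{eq:fidelity_OPT}, taking the infimum over tests yields $F(\rho,\sigma)\ge 1-\tfrac12\normop{\rho-\sigma}$, which is exactly $1-F(\rho,\sigma)\le\tfrac12\normop{\rho-\sigma}$.

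For the right inequality I would instead use a single optimal binary test. Let $a^\star$ attain the supremum in the norm formula, and form the two-outcome test $\{a^\star,e_\rA-a^\star\}$ with distributions $\mathbf p^\star,\mathbf q^\star$; for a binary test one checks $\tfrac12\norma{\mathbf p^\star-\mathbf q^\star}_1=(a^\star|\rho)-(a^\star|\sigma)=\tfrac12\normop{\rho-\sigma}$, and the classical upper bound gives $\tfrac12\norma{\mathbf p^\star-\mathbf q^\star}_1\le\sqrt{1-F_c(\mathbf p^\star,\mathbf q^\star)^2}$. Since $F(\rho,\sigma)\le F_c(\mathbf p^\star,\mathbf q^\star)$ (an infimum over all tests) and $x\mapsto\sqrt{1-x^2}$ is decreasing on $[0,1]$, this upgrades to $\tfrac12\normop{\rho-\sigma}\le\sqrt{1-F(\rho,\sigma)^2}$. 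The only genuinely OPT-specific step is the bridge lemma; everything else is inherited from the classical statement, the left bound through an infimum over tests and the right bound through one optimal effect. I expect the sole technical obstacle to be the attainment of the supremum defining the operational norm, which is settled by compactness of the effect set in finite dimension, or otherwise by passing to a near-optimal sequence of effects and taking limits.
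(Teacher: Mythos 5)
Your proof is correct, but there is nothing in the paper to compare it against: the paper states Proposition~\ref{prop:gen_fuchs} without proof, treating it as a known generalisation of the Fuchs--van de Graaf inequality imported from the cited literature on distinguishability measures in general probabilistic theories. Your argument is the standard self-contained route, and every OPT-specific ingredient you use is legitimately available in the paper's framework: the identity $\tfrac12\normop{\rho-\sigma}=\sup_{a\in\eff(\rA)}\bigl[(a|\rho)-(a|\sigma)\bigr]$ relies on uniqueness of the deterministic effect, which follows from the strong causality assumed throughout; the effect $a_+$ and the binary test $\{a^\star,e_\rA-a^\star\}$ are obtained by partial coarse-graining of an observation test, an operation the framework permits and which the paper itself uses (appendix~\ref{app:trivial_lemma} invokes coarse-graining of effects from the same test); and the normalisation $\sum_i p_i=\sum_i q_i=1$ needed for the classical Fuchs--van de Graaf bounds holds precisely because $\rho,\sigma\in\st_1(\rA)$ are deterministic, so the outcome statistics of any observation test are genuine probability distributions. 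Your two-sided use of the infimum is also handled correctly: the left bound survives the infimum over all tests because the estimate $1-F_c(\mathbf p,\mathbf q)\leq\tfrac12\normop{\rho-\sigma}$ holds test by test, while the right bound needs only one (near-)optimal binary test together with the fact that $x\mapsto\sqrt{1-x^2}$ is decreasing on $[0,1]$ and $F(\rho,\sigma)\leq F_c(\mathbf p^\star,\mathbf q^\star)$. Finally, your caution about attainment of the supremum is well placed: compactness of $\eff(\rA)$ is not among the paper's stated assumptions, so the fallback via a near-optimal sequence $a_\epsilon$ (giving $\tfrac12\normop{\rho-\sigma}-\epsilon\leq\sqrt{1-F(\rho,\sigma)^2}$ for all $\epsilon>0$) is the safer formulation and closes the argument without any additional hypothesis.
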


\begin{proposition}[Monotonicity]\label{prop:fid_monot}
Let $\rho,\sigma\in\st_1(\rA)$ and $\tC\in\tr_{1}(\rA,\rB)$. The following inequality holds
\begin{equation}
F(\tC(\rho),\tC(\sigma))\geq F(\rho,\sigma).
\end{equation}
\end{proposition}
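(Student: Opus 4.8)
The plan is to exploit the variational definition of the fidelity in Eq.~\eqref{eq:fidelity_OPT}, together with the fact that an observation test on the output system $\rB$ can be pulled back through the channel $\tC$ to an observation test on the input system $\rA$. Since $F(\tC(\rho),\tC(\sigma))$ is an infimum of $F_{c}$ over observation tests on $\rB$, while $F(\rho,\sigma)$ is an infimum of the same classical fidelity over the strictly larger family of \emph{all} observation tests on $\rA$, the former cannot be smaller than the latter. Note first that $\tC\in\tr_1(\rA,\rB)$ being a channel guarantees $\tC(\rho),\tC(\sigma)\in\st_1(\rB)$, so the left-hand side is well defined.

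First I would fix an arbitrary observation test $\{b_{j}\}_{j\in\sY}\subseteq\eff(\rB)$ and set $a_{j}\coloneqq b_{j}\tC$ for every $j$. The key structural step is to verify that $\{a_{j}\}_{j\in\sY}$ is a legitimate observation test on $\rA$. Because $\{b_{j}\}$ is a test, its coarse-graining is the deterministic effect, $\sum_{j}b_{j}=e_{\rB}$; composing with the channel $\tC$ yields a deterministic effect on $\rA$, and by causality (uniqueness of the deterministic effect) one gets $\sum_{j}a_{j}=e_{\rB}\tC=e_{\rA}$. Hence $\{a_{j}\}_{j\in\sY}$ is indeed an observation test of $\rA$. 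This is the only point where the assumptions on the theory (causality and the preservation of the deterministic effect by channels) are used, and I expect it to be the main thing to nail down carefully.

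Next I would observe that the two probability distributions produced by $\{b_{j}\}$ on the transformed states coincide with those produced by $\{a_{j}\}$ on the original states: for every $j$,
\[
(b_{j}|\tC(\rho))=(b_{j}\tC|\rho)=(a_{j}|\rho),\qquad (b_{j}|\tC(\sigma))=(a_{j}|\sigma).
\]
Consequently the classical fidelity $F_{c}(\mathbf{p},\mathbf{q})$ evaluated for $\{b_{j}\}$ acting on $(\tC(\rho),\tC(\sigma))$ equals exactly the value of $F_{c}$ for $\{a_{j}\}$ acting on $(\rho,\sigma)$.

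Finally I would take infima. The assignment $\{b_{j}\}\mapsto\{b_{j}\tC\}$ maps observation tests on $\rB$ into observation tests on $\rA$, so the family of distribution pairs over which $F(\tC(\rho),\tC(\sigma))$ is optimised is contained in the family over which $F(\rho,\sigma)$ is optimised. Since an infimum over a subset is at least the infimum over the whole set,
\[
F(\tC(\rho),\tC(\sigma))=\inf_{\{b_{j}\}}F_{c}(\mathbf{p},\mathbf{q})\geq\inf_{\{a_{i}\}}F_{c}(\mathbf{p},\mathbf{q})=F(\rho,\sigma),
\]
which is the claim. It is worth stressing that no separate monotonicity property of $F_{c}$ under stochastic maps is needed, precisely because pulling the $\rB$-test back through $\tC$ reproduces the same distributions rather than coarse-graining them.
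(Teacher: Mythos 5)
Your proof is correct. Note, however, that the paper itself states Proposition~\ref{prop:fid_monot} \emph{without} proof (it is taken as a known property of the fidelity of Definition~\ref{def:fidelity_OPT}, imported from the general-probabilistic-theories literature), so there is no in-paper argument to compare against; the pullback argument you give is the standard one and does establish the claim. One refinement: the ``key structural step'' does not actually require causality or uniqueness of the deterministic effect. That $\{b_j\tC\}_{j\in\sY}$ is an observation test of $\rA$ follows directly from sequential composition of tests: the singleton test $\{\tC\}$ followed by the observation test $\{b_j\}_{j\in\sY}$ is, by the very definition of an operational theory, a test from $\rA$ to the trivial system, i.e.\ an observation test of $\rA$. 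Since the infimum defining $F(\rho,\sigma)$ ranges over all observation tests of $\rA$, membership in that family is all you need; whether the coarse-graining of the pulled-back test equals $e_{\rA}$ plays no role in the comparison of infima. With that simplification, the remainder of your argument---the equality $(b_j|\tC(\rho))=(b_j\tC|\rho)$ of the induced distributions, and the observation that an infimum over a subfamily dominates the infimum over the full family---is exactly right, and you are also correct that no separate monotonicity of $F_c$ under stochastic maps is needed.
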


In QT we have a notion, the so called \emph{entanglement fidelity}, which measures how well correlations 
with an environment are preserved by a given channel acting on our local system. If $\rho\in\st_1(\rA)$ is the state of our local system,
 $\Phi\in\purst(\rA\rB)$ is a purification of $\rho$ and $\tC\in\tr_1(\rA\to\rC)$ the channel locally applied to $\rA$,
then the entanglement fidelity is defined as the square of the Uhlmann one between input and output, i.e.~$F[\Phi,\tC\otimes\tI(\Phi)]^2$. This is a well defined quantity since it is independent of the chosen purification. By means of 
the generalized notion of fidelity in~\eqref{eq:fidelity_OPT} we can define an analogous of the entanglement fidelity in the OPT framework.
Again, we must be aware of the fact that in a generic OPT, there may be states that cannot be purified (mixed states in classical theories are a rather trivial example). In order to encompass the most general situation, we refer to dilations rather than focusing on purifications. Moreover, we want to define a quantity that is independent of the particular dilation, and we are thus led to the following definition.
\begin{definition}\label{def:ent_fid}
Let $\rho\in\st(\rA)$ and $\tC\in\tr_{1}(\rA\to\rC)$. We define the correlation fidelity as follows
\begin{equation}
F(\rho,\tC)=\inf_{\Psi\in D_{\rho}}F[\Psi,\tC\boxtimes\tI(\Psi)]^{2}.
\end{equation}
\end{definition}
By means of the generalized Fuchs-van de Graaf inequality~\eqref{eq:gen_fuchs} we can see that the correlation fidelity can be used as
an equivalent figure of merit on $D_{\rho^{\boxtimes N}}$. More precisely, the following proposition holds.
\begin{proposition}\label{prop:infocontent_fidelity}
Let $\rho\in\st_1(\rA)$ and define
\[
I^{F}(\rho)\coloneqq\lim_{\varepsilon\rightarrow 0}\limsup_{N\rightarrow \infty} \frac{\min\{M:E^{F}_{N,M,\varepsilon}(\rho)\neq\emptyset\}}{N},
\]
where
\[
E_{N,M,\varepsilon}^{F}(\rho)\coloneqq\{(\tE,\tD)|F(\rho^{\boxtimes N},\tD\tE)>1-\varepsilon \},
\]
then $I^\mathrm{dil}(\rho)=I^{F}(\rho)$.
\end{proposition}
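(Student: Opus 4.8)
The plan is to show that the two families of compression schemes defining $I^\mathrm{dil}$ and $I^F$ coincide up to a harmless reparametrisation of the tolerance, so that the two information contents agree. The only real ingredient is the generalised Fuchs--van de Graaf inequality of Proposition~\ref{prop:gen_fuchs}, but invoked not on $\rA$ itself but on the enlarged system $\rA^{\boxtimes N}\rC$ carrying a dilation. First I would fix $N$ and a scheme $(\tE,\tD)$, set $\tC\coloneqq\tD\tE$, and note that for every dilation $\Psi\in D_{\rho^{\boxtimes N}}$ both $\Psi$ and $(\tC\boxtimes\tI)(\Psi)$ are deterministic states (the latter because $\tC$ is a channel). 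Hence Proposition~\ref{prop:gen_fuchs} applies on $\rA^{\boxtimes N}\rC$ and gives, for each $\Psi$,
\[
1-F[\Psi,(\tC\boxtimes\tI)(\Psi)]\leq\tfrac12\normop{(\tC\boxtimes\tI)(\Psi)-\Psi}\leq\sqrt{1-F[\Psi,(\tC\boxtimes\tI)(\Psi)]^2}.
\]

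Writing $d\coloneqq\sup_{\Psi\in D_{\rho^{\boxtimes N}}}\normop{(\tC\boxtimes\tI)(\Psi)-\Psi}$ for the $I^\mathrm{dil}$ figure of merit and recalling from Definition~\ref{def:ent_fid} that $F(\rho^{\boxtimes N},\tC)=\inf_{\Psi}F[\Psi,(\tC\boxtimes\tI)(\Psi)]^2$, I would extract two uniform estimates. From the left inequality, $F[\Psi,(\tC\boxtimes\tI)(\Psi)]\geq 1-\tfrac12 d$ for every $\Psi$; since $d\leq 2$ the right-hand side is nonnegative, so squaring and taking the infimum over $\Psi$ yields $F(\rho^{\boxtimes N},\tC)\geq(1-\tfrac12 d)^2\geq 1-d$. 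From the right inequality, $\tfrac12\normop{(\tC\boxtimes\tI)(\Psi)-\Psi}\leq\sqrt{1-F[\Psi,(\tC\boxtimes\tI)(\Psi)]^2}\leq\sqrt{1-F(\rho^{\boxtimes N},\tC)}$ for every $\Psi$ (the last step by monotonicity of $\sqrt{1-(\cdot)}$ and the definition of the infimum); taking the supremum over $\Psi$ gives $d\leq 2\sqrt{1-F(\rho^{\boxtimes N},\tC)}$.

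These two estimates translate directly into set inclusions. If $(\tE,\tD)\in E^\mathrm{dil}_{N,M,\varepsilon}(\rho)$ then $d<\varepsilon$, whence $F(\rho^{\boxtimes N},\tC)\geq 1-d>1-\varepsilon$, so $(\tE,\tD)\in E^F_{N,M,\varepsilon}(\rho)$ and $E^\mathrm{dil}_{N,M,\varepsilon}(\rho)\subseteq E^F_{N,M,\varepsilon}(\rho)$; passing to the minimal $M$ this gives $I^F(\rho)\leq I^\mathrm{dil}(\rho)$. Conversely, if $(\tE,\tD)\in E^F_{N,M,\varepsilon}(\rho)$ then $F(\rho^{\boxtimes N},\tC)>1-\varepsilon$, whence $d\leq 2\sqrt{1-F(\rho^{\boxtimes N},\tC)}<2\sqrt\varepsilon$, so $(\tE,\tD)\in E^\mathrm{dil}_{N,M,2\sqrt\varepsilon}(\rho)$ and $E^F_{N,M,\varepsilon}(\rho)\subseteq E^\mathrm{dil}_{N,M,2\sqrt\varepsilon}(\rho)$; passing to the minimal $M$, dividing by $N$, taking $\limsup_{N\to\infty}$ and then $\lim_{\varepsilon\to 0}$ gives $I^\mathrm{dil}(\rho)\leq I^F(\rho)$. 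Combining the two inequalities proves $I^\mathrm{dil}(\rho)=I^F(\rho)$.

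The genuinely routine part is the pair of inequalities; the points that need care are (i) that Proposition~\ref{prop:gen_fuchs} is used on the composite system $\rA^{\boxtimes N}\rC$ rather than on $\rA$, which is legitimate precisely because a dilation and its image under $\tC\boxtimes\tI$ are deterministic states there, and (ii) the handling of the $\sup/\inf$ interchange, where the right-hand bound must be made uniform in $\Psi$ before taking the supremum and the left-hand one before taking the infimum. The asymmetric tolerance $\varepsilon\mapsto 2\sqrt\varepsilon$ in the second inclusion is the only mildly delicate feature, but it is harmless since $2\sqrt\varepsilon\to 0$ as $\varepsilon\to 0$, so it leaves the value of the information content unchanged.
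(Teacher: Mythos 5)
Your proposal is correct and follows essentially the same route as the paper's own proof: both directions rest on the generalised Fuchs--van de Graaf inequality of Proposition~\ref{prop:gen_fuchs} applied to dilations on $\rA^{\boxtimes N}\rC$, yielding the inclusions $E^\mathrm{dil}_{N,M,\varepsilon}(\rho)\subseteq E^F_{N,M,\varepsilon}(\rho)$ and $E^F_{N,M,\varepsilon}(\rho)\subseteq E^\mathrm{dil}_{N,M,2\sqrt\varepsilon}(\rho)$, with the reparametrised tolerance $2\sqrt\varepsilon$ absorbed in the limit $\varepsilon\to0$. Your treatment is in fact slightly more careful than the paper's on the minor points (nonnegativity of $1-\tfrac12 d$ before squaring, and the uniformity needed for the $\sup/\inf$ interchange), but these are refinements of the same argument, not a different one.
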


\begin{proof}
This is a straightforward consequence of proposition~\ref{prop:gen_fuchs}. Let $(\tE,\tD)\in E^\mathrm{dil}_{N,M,\varepsilon}(\rho)$.
By the first inequality in~\eqref{eq:gen_fuchs} we have that 
\[
F[\Psi,(\tC\boxtimes\tI) (\Psi)]^{2}\geq 1 -\varepsilon + \frac{\varepsilon^2}{4}\geq 1 - \varepsilon,
\]
for any $\Psi\in D_{\rho^{\boxtimes N}}$, and this implies
\[
F(\rho^{\boxtimes N},\tC)\geq 1 - \varepsilon.
\]
Therefore $(\tE,\tD)\in E_{N,M,\varepsilon}^{F}(\rho)$, namely 
$E_{N,M,\varepsilon}^\mathrm{dil}(\rho) \subseteq E_{N,M,\varepsilon}^{F}(\rho)$, whence $I^{F}(\rho)\leq I^\mathrm{dil}(\rho)$.

Now let $(\tE,\tD)\in E^{F}_{N,M,\varepsilon}$, then by definition
\[
F[\Psi,(\tC\boxtimes\tI) (\Psi)]^{2}>1-\varepsilon, \quad \forall \Psi\in D_{\rho^{\boxtimes N}}.
\]
By the second inequality in proposition~\ref{prop:gen_fuchs} we have that
\begin{align*}
&\normop{(\tC\boxtimes\tI)(\Psi)-\Psi}\\
&\leq 2 \sqrt{1-F[\Psi,(\tC\boxtimes\tI) (\Psi)]^{2}}\\
	& <2 \sqrt{\varepsilon},
\end{align*}
for any $\Psi\in D_{\rho^{\boxtimes N}}$, which means that $(\tE,\tD)\in E_{N,M,2\sqrt{\varepsilon}}^\mathrm{dil}(\rho)$,
and the reverse inequality $I^{F}(\rho)\geq I^\mathrm{dil}(\rho)$ follows.
\end{proof}
\section{Info content and states purity}\label{sec:pur}

In the following we will assume that
\begin{align*}
\tE:\rA^{\boxtimes N}\to\rB^{\boxtimes M},\quad\tD:\rB^{\boxtimes M}\to\rA^{\boxtimes N}.
\end{align*}
Moreover, we will denote by $\{\Psi_i\}$ any preparation test of $\rA^{\boxtimes N}\rC$ such that
\begin{align*}
\sum_ie_\rC\circ\Psi_i=\rho^{\boxtimes N}.
\end{align*}
Finally, for every observation-test $\{a_j\}$ of $\rA^{\boxtimes N}\rC$, and for any pair 
$(\tE,\tD)$, let us define the two probability distributions
\begin{align}
p_{i,j}\coloneqq  
\begin{aligned}
   	\Qcircuit @C=1em @R=.7em @! R {\multiprepareC{1}{\Psi_i}&\qw&\ustick{\rA^{\boxtimes N}}\qw&\qw&\multimeasureD{1}{a_j}\\
	\pureghost{\Psi_i}&\qw&\ustick{\rC}\qw&\qw&\ghost{a_j}}
   \end{aligned}\ ,
\end{align}
and
\begin{align}
q_{i,j}\coloneqq
   \begin{aligned}
   	\Qcircuit @C=1em @R=.7em @! R {\multiprepareC{1}{\Psi_i}&\qw&\ustick{\rA^{\boxtimes N}}\qw&\qw&\gate{\tE^N}&\qw&\ustick{\rB^{\boxtimes M}}\qw&\qw&\gate{\tD^N}&\qw&\ustick{\rA^{\boxtimes N}}\qw&\qw&\multimeasureD{1}{a_j}\\
	\pureghost{\Psi_i}&\qw&\qw&\qw&\qw&\qw&\ustick{\rC}\qw&\qw&\qw&\qw&\qw&\qw&\ghost{a_j}}
   \end{aligned}\ .
\end{align}
We can then introduce the following functions that represent the Shannon mutual information between classical random variables
$X$ and $Y$, distributed according to $P(X=x_i,Y=y_j)=p_{i,j}$ or $X$ and $\tilde Y$, distributed according to $P(X=x_i,\tilde Y=y_j)=q_{i,j}$
\begin{align*}
I(X:Y)\coloneqq\sum_{i,j}p_{i,j}\log_2\frac{p_{i,j}}{p^X_ip^Y_j}\\
I(X:\tilde Y)\coloneqq\sum_{i,j}q_{i,j}\log_2\frac{q_{i,j}}{q^X_iq^{\tilde Y}_j},
\end{align*}
where $p^Y_j$, $q^{\tilde Y}_j$ and $q^X_i=p^X_i$ denote the elements of the marginal distributions.

\begin{definition}\label{def:classinfoc}
Let $\{\Psi_i\}$ denote a preparation test such that $\sum_i\Psi_i\in D_{\rho^{\boxtimes N}}$. We denote by 
$E_{N,M,\delta}(\rho)$ the set of those compression schemes such that
\[
\sup_{\rC,\{\Psi_i\},\{a_j\}}L^{-1}|I(X:Y)-I(X,\tilde{Y})|<\delta,
\]
with $L=\log_2(mn-1)$ where $n$ is the cardinality of the preparation test $\{\Psi_i\}$ and $m$ that of the test $\{a_j\}$.
We then define the following quantities
\begin{align}
&R^C_{\delta,N}(\rho)\coloneqq\frac{\min\{M\mid E^C_{N,M,\delta}(\rho)\neq\emptyset\}}N,\\
&R^C_\delta(\rho)\coloneqq\limsup_{N\to\infty}R^C_{\delta,N}(\rho),\\
&I^C(\rho)\coloneqq\lim_{\delta\to0}R^C_\delta(\rho).
\end{align}
\end{definition}
The last quantity above satisfies the following lemmas.
\begin{lemma}\label{lemma:bound_infocontent}
Let $\rho\in\st_1(\rA)$. Then $I(\rho)\geq I^C(\rho)$.
\end{lemma}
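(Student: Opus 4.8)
The plan is to show that every compression scheme that is reliable in operational norm is also reliable in the mutual-information sense of Definition~\ref{def:classinfoc}, with a matched tolerance: I would exhibit a function $\delta(\varepsilon)$ with $\delta(\varepsilon)\to0$ as $\varepsilon\to0$ such that $E_{N,M,\varepsilon}(\rho)\subseteq E^C_{N,M,\delta(\varepsilon)}(\rho)$. Granting this inclusion one obtains $R^C_{\delta(\varepsilon),N}(\rho)\le R_{N,\varepsilon}(\rho)$ for every $N$, and taking $\limsup_{N\to\infty}$ followed by $\varepsilon\to0$ (using that $R^C_\delta$ is monotone in $\delta$ and that $\delta(\varepsilon)\to0$) yields $I^C(\rho)\le I(\rho)$.

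First I would reduce the two mutual informations to a single $\ell_1$-distance between the distributions $(p_{i,j})$ and $(q_{i,j})$ attached to a fixed refinement $\{\Psi_i\}$ and a fixed observation test $\{a_j\}$. Writing $\tC\coloneqq\tD\tE$, one has $q_{i,j}-p_{i,j}=(a_j|[(\tC-\tI)\boxtimes\tI]\Psi_i)$. The key elementary fact is that for any signed state $\delta$ annihilated by the deterministic effect, $(e|\delta)=0$, and any observation test $\{a_j\}$, one has $\sum_j|(a_j|\delta)|\le\normop{\delta}$: grouping the outcomes with positive and negative value of $(a_j|\delta)$ and coarse-graining the positive ones into an effect $a_+$, the traceless condition gives $(a_+|\delta)=\tfrac12\sum_j|(a_j|\delta)|$, while $\normop{\delta}=2\sup_a(a|\delta)\ge2(a_+|\delta)$. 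Applying this to $\delta_i=[\tC\boxtimes\tI]\Psi_i-\Psi_i$, which is annihilated by $e$ because the channel $\tC$ preserves the deterministic effect, and summing over $i$ gives, for every $\{a_j\}$,
\[
\sum_{i,j}|p_{i,j}-q_{i,j}|\le\sum_i\normop{[\tC\boxtimes\tI]\Psi_i-\Psi_i}.
\]
The same preservation of $e$ gives $p^X_i=(e|\Psi_i)=q^X_i$, so both distributions share the $X$-marginal. Hence, if $(\tE,\tD)\in E_{N,M,\varepsilon}(\rho)$, the right-hand side is $<\varepsilon$ uniformly in $\rC,\{\Psi_i\},\{a_j\}$, so the total variation obeys $T_{XY}\coloneqq\tfrac12\sum_{i,j}|p_{i,j}-q_{i,j}|<\varepsilon/2$, and likewise $T_Y<\varepsilon/2$ for the $Y$-marginal.

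Next I would invoke a continuity bound for the Shannon entropy (Fannes--Audenaert): for distributions on $d$ symbols at total variation $T$, $|H-H'|\le T\log_2(d-1)+h(T)$ with $h(t)=-t\log_2t-(1-t)\log_2(1-t)$. Decomposing $I(X:Y)=H(X)+H(Y)-H(XY)$ and likewise for $\tilde Y$, the shared term $H(X)$ cancels, so
\[
|I(X:Y)-I(X:\tilde Y)|\le T_{XY}\log_2(mn-1)+T_Y\log_2(m-1)+h(T_{XY})+h(T_Y).
\]
Dividing by $L=\log_2(mn-1)$, the leading term becomes exactly $T_{XY}$ — this is precisely the role of the normalization in Definition~\ref{def:classinfoc}, cancelling the dimension factor of the dominant Fannes term — while $T_Y\log_2(m-1)/L\le T_Y$.

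The main obstacle, and the last step, is to control the residual terms $[h(T_{XY})+h(T_Y)]/L$ uniformly over all cardinalities $m,n$, since these terms carry no dimension factor for $L$ to cancel. The degenerate cases $m=1$ or $n=1$ give vanishing mutual information on both sides and are discarded; for $mn\ge3$ one has $L=\log_2(mn-1)\ge1$, so these terms are at most $h(T_{XY})+h(T_Y)\le2h(\varepsilon/2)$, using that $h$ is increasing on $[0,\tfrac12]$ and $T_{XY},T_Y<\varepsilon/2$. Collecting the pieces,
\[
\sup_{\rC,\{\Psi_i\},\{a_j\}}L^{-1}|I(X:Y)-I(X:\tilde Y)|<\varepsilon+2h(\varepsilon/2)\eqqcolon\delta(\varepsilon),
\]
which tends to $0$ as $\varepsilon\to0$. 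This establishes $E_{N,M,\varepsilon}(\rho)\subseteq E^C_{N,M,\delta(\varepsilon)}(\rho)$ and, via the argument of the first paragraph, the claimed bound $I(\rho)\ge I^C(\rho)$.
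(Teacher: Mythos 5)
Your proof is correct, and it takes a genuinely different---and in one respect more self-contained---route than the paper's. The paper does not construct an explicit modulus $\delta(\varepsilon)$: it defines $\zeta(N,\delta)\coloneqq\sup\{\varepsilon \mid E_{N,M,\varepsilon}(\rho)\subseteq E^C_{N,M,\delta}(\rho)\}$ and splits into two cases according to whether $\liminf_{N\to\infty}\zeta(N,\delta)$ is strictly positive (in which case a limit manipulation analogous to your first paragraph yields the bound) or vanishes for all small $\delta$ (which it rules out by contradiction, invoking a cited continuity theorem for mutual information in $\ell_1$/Kolmogorov distance, $|I(X:Y)-I(X':Y')|\le 3\gamma L+3H_2(\gamma)$, together with the same exclusion of the degenerate cases $m=1$ or $n=1$ that you make). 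Your uniform inclusion $E_{N,M,\varepsilon}(\rho)\subseteq E^C_{N,M,\delta(\varepsilon)}(\rho)$ is precisely the statement that $\zeta(N,\delta)$ is bounded away from zero uniformly in $N$, so it collapses the paper's case analysis into a direct argument; moreover, you derive the continuity estimate by hand (decomposing $I=H(X)+H(Y)-H(XY)$, cancelling the shared $X$-marginal, and applying Fannes--Audenaert) instead of citing it, and---importantly---you supply a step the paper leaves implicit: the bridge $\sum_{i,j}|p_{i,j}-q_{i,j}|\le\sum_i\normop{[(\tD\tE-\tI)\boxtimes\tI]\Psi_i}$ between the operational-norm figure of merit and the $\ell_1$ distance of the induced joint distributions, obtained via the coarse-graining argument for vectors annihilated by the deterministic effect (which is where strong causality enters, both to make $a_+$ an effect and to guarantee that the channel $\tD\tE$ preserves $e$). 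Two cosmetic repairs are needed: since each individual term is bounded strictly by $\varepsilon+2h(\varepsilon/2)$ but the supremum over all $\rC,\{\Psi_i\},\{a_j\}$ is then only guaranteed to satisfy $\le$, take e.g.\ $\delta(\varepsilon)=2\varepsilon+2h(\varepsilon/2)$ (or any strictly larger continuous modulus vanishing at $0$) so that membership in $E^C_{N,M,\delta(\varepsilon)}(\rho)$, which requires a strict inequality, is guaranteed; and restrict to $\varepsilon\le 1$ so that $h$ is evaluated on $[0,\tfrac12]$ where it is increasing---harmless, since only the regime $\varepsilon\to0$ matters for $I(\rho)$ and $I^C(\rho)$.
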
 
The proof can be found in appendix~\ref{app:purity_lemmas}.

In proving the following lemma and the subsequent proposition we assume that when we compose systems, the size does not increase
more than exponentially. More precisely, we formulate the following assumption that will hold in the remainder.
\begin{assumption}[Regular scaling]
For every type of system $\rA$, there exists a constant $k_\rA>0$ such that 
the size $D(N)\coloneqq D_{\rA^{\boxtimes N}}$ of the compound system $\rA^{\boxtimes N}$ satisfies $D(N)\leq k_\rA D(1)^N$.
\end{assumption}

\begin{lemma}\label{lemma:posbound_classinfocontent}
Let $\rho\in\st_1(\rA)$ be a mixed state, then $I^C(\rho)>0$.
\end{lemma}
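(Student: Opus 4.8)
The plan is to exhibit one explicit ``hard'' test configuration that every admissible scheme must handle, and to show that handling it costs a number of obits growing linearly in $N$. Since $E^C_{N,M,\delta}\subseteq E^C_{N,M,\delta'}$ for $\delta\le\delta'$, the quantity $R^C_\delta(\rho)$ is non-increasing in $\delta$, so that $I^C(\rho)=\lim_{\delta\to 0}R^C_\delta(\rho)=\sup_{\delta>0}R^C_\delta(\rho)$. It therefore suffices to produce a single $\delta_0>0$ with $R^C_{\delta_0}(\rho)>0$, which I obtain by lower-bounding the minimal $M$ compatible with $(\delta_0,N)$-reliability.

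First I build the configuration. Because $\rho$ is mixed it admits a nontrivial decomposition $\rho=\sum_{a}p_a\sigma_a$ into $k$ components that are not all the same state; choosing an effect $e$ separating two of them gives a measurement for which the single-copy mutual information is $I_1>0$. Taking the ancilla $\rC$ trivial and using the product preparation test $\{\prod_l p_{a_l},\,\bigotimes_l\sigma_{a_l}\}$ with the corresponding product measurement yields a legitimate refinement of $\rho^{\boxtimes N}$ (viewed as a trivial dilation of itself), with $n=k^N$ preparations and $m=m_1^N$ outcomes, so that $L=\log_2(mn-1)$ satisfies $N\log_2(km_1)-1\le L\le N\log_2(km_1)$. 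Additivity of Shannon mutual information over independent copies gives $I(X:Y)=NI_1$, whence $I(X:Y)/L\ge c_0:=I_1/\log_2(km_1)>0$, a constant independent of $N$.

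The core step is an upper bound on the compressed quantity $I(X:\tilde Y)$ that is linear in $M$. With $\rC$ trivial, the label $X$ is encoded into the state $\tE(\bigotimes_l\sigma_{a_l})$ of $\rB^{\boxtimes M}$, and composing $\tD$ with the final effects $\{a_j\}$ is again just a measurement on $\rB^{\boxtimes M}$; thus $\tilde Y$ is the outcome of a measurement performed on an ensemble of states of $\rB^{\boxtimes M}$. By the general bound that the mutual information of such a prepare-and-measure scheme cannot exceed the logarithm of the maximal number of perfectly distinguishable states, hence cannot exceed $\log_2 D_{\rB^{\boxtimes M}}$ (distinguishable states being linearly independent in $\st_\R$), the regular scaling assumption $D_{\rB^{\boxtimes M}}\le k_\rB D_\rB^{M}$ gives $I(X:\tilde Y)\le \log_2 k_\rB+M\log_2 D_\rB$.

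Combining the two estimates, for this configuration $|I(X:Y)-I(X:\tilde Y)|/L\ge c_0-(\log_2 k_\rB+M\log_2 D_\rB)/L$. Setting $\delta_0:=c_0/2$, membership in $E^C_{N,M,\delta_0}(\rho)$ forces $(\log_2 k_\rB+M\log_2 D_\rB)/L>c_0/2$, and inserting $L\le N\log_2(km_1)$ yields $M\ge rN-O(1)$ with $r:=(c_0/2)\log_2(km_1)/\log_2 D_\rB>0$. Hence the minimal admissible $M$ obeys $M_{N}\ge (r/2)N$ for all large $N$, so $R^C_{\delta_0}(\rho)=\limsup_N M_{N}/N\ge r/2>0$ and therefore $I^C(\rho)\ge R^C_{\delta_0}(\rho)>0$. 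The only delicate ingredient is the accessible-information bound invoked in the third paragraph; the remainder is routine bookkeeping of the product ensemble and of the normalization by $L$.
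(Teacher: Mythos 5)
Your proof has the same architecture as the paper's: feed the scheme an i.i.d.\ product ensemble with trivial ancilla so that $I(X:Y)$ grows linearly in $N$, bound $I(X:\tilde Y)$ by a quantity linear in $M$, and conclude from the definition of $E^{C}_{N,M,\delta}$ that the minimal admissible $M$ grows linearly in $N$. Two of your choices even streamline the paper's version slightly: by fixing an explicit product measurement, instead of the mutual-information-maximising observation test on $N$ copies used in the paper, you avoid both the compactness argument (Krein--Milman plus Caratheodory) and the use of regular scaling for system $\rA$ (only the obit's scaling enters your bound); your handling of the absolute value and of the normalisation $L$ is also correct.

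The gap is exactly the ingredient you yourself flag as ``delicate''. The inequality $I(X:\tilde Y)\leq \log_2 D_{\rB^{\boxtimes M}}$ is what the entire argument hinges on, and you justify it by asserting, as a ``general bound'', that the mutual information of a prepare-and-measure scheme cannot exceed the logarithm of the maximal number of perfectly distinguishable states. No such general theorem is available in the OPT framework under the assumptions in force here (strong causality, digitalizability, regular scaling): that statement is an information-theoretic form of the no-hypersignaling principle, a substantive constraint that a generic theory is not guaranteed to obey --- even in quantum theory the analogous bound is a consequence of the Holevo bound, not of the operational framework itself. Your second step, that perfectly distinguishable states are linearly independent and hence number at most $D_{\rB^{\boxtimes M}}$, is correct but cannot repair the first. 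The paper sidesteps the issue by invoking an external result (theorem 2 of the reference cited in its Appendix C) that bounds the mutual information of any ensemble-plus-measurement scheme directly by $\log_2$ of the \emph{linear dimension} of the state space, $I(X:\tilde Y)\leq\log_2 D(M)$, which combined with regular scaling of the obit yields the linear-in-$M$ bound. If you replace your distinguishability claim by that dimension bound, your proof becomes correct and essentially coincides with the paper's; as written, its central step rests on an unproven, theory-dependent principle.
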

The proof can be found in appendix~\ref{app:proofincubo}.

\begin{proposition}\label{prop:null_infocontent}
Let $\rho\in\st_1(\rA)$. If $I(\rho)=0$ then $\rho$ is a pure state.
\end{proposition}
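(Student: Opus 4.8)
The plan is to prove the contrapositive: if $\rho$ is mixed, then $I(\rho)>0$. This reduces the proposition directly to the two lemmas already established immediately above it, Lemma~\ref{lemma:bound_infocontent} and Lemma~\ref{lemma:posbound_classinfocontent}, so the main work is simply chaining them together.

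First I would assume that $\rho\in\st_1(\rA)$ is a mixed state. By Lemma~\ref{lemma:posbound_classinfocontent}, which asserts that the classical information content $I^C(\rho)$ is strictly positive for any mixed state, we immediately obtain $I^C(\rho)>0$. Then I would invoke Lemma~\ref{lemma:bound_infocontent}, which gives the lower bound $I(\rho)\geq I^C(\rho)$. Combining these two facts yields
\[
I(\rho)\geq I^C(\rho)>0,
\]
so $I(\rho)>0$, contradicting the hypothesis $I(\rho)=0$. Hence if $I(\rho)=0$ the state $\rho$ cannot be mixed, i.e.\ $\rho$ must be pure, which is precisely the thesis.

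The conceptual substance of this argument lives entirely in the two supporting lemmas, whose proofs are deferred to the appendices. Lemma~\ref{lemma:posbound_classinfocontent} is the genuinely hard ingredient: it requires showing that the mutual-information distinguishability criterion encoded in $I^C$ cannot be satisfied asymptotically at vanishing rate for a mixed state, which is where the \emph{regular scaling} assumption enters to control how the system size grows under parallel composition. Lemma~\ref{lemma:bound_infocontent} establishes that any operationally reliable compression scheme in the sense of the operational-norm figure of merit also controls the classical mutual-information figure of merit, so that reliability in the stronger norm sense forces reliability in the $I^C$ sense and hence $I^C\leq I$. Given both, the proposition itself is a one-line deduction.

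The only point requiring a little care in the write-up is the logical form: the proposition is stated as the implication $I(\rho)=0\implies\rho\text{ pure}$, whereas the lemmas naturally yield the contrapositive $\rho\text{ mixed}\implies I(\rho)>0$. Since every state is either pure or mixed by the dichotomy fixed in the definition of pure and mixed states, the contrapositive is logically equivalent to the stated implication, and I would make this equivalence explicit so that no gap is perceived between the lemmas and the conclusion. I expect no real obstacle at this level; the entire difficulty has been pushed into Lemma~\ref{lemma:posbound_classinfocontent}, and the proposition is essentially a corollary.
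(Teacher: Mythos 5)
Your proposal is correct and is essentially identical to the paper's own proof: the paper likewise assumes $\rho$ is mixed and chains Lemma~\ref{lemma:bound_infocontent} with Lemma~\ref{lemma:posbound_classinfocontent} to get $I(\rho)\geq I^C(\rho)>0$, contradicting $I(\rho)=0$. Your extra remarks on the pure/mixed dichotomy and on where the real difficulty lies are accurate but not needed beyond what the paper states.
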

\begin{proof}
Let $\rho$ be mixed. By~\ref{lemma:bound_infocontent} and~\ref{lemma:posbound_classinfocontent}
\[
I(\rho)\geq I^C(\rho)>0,
\]
whence the thesis.
\end{proof}

We now use the above results to prove some general facts about theories with essentially unique purification and 
atomicity of parallel composition for states. For this purpose, let us start reminding the precise meaning of the 
above requirements.

\begin{definition}[Existence of purification]\label{ass:exist_pur}
We say that an OPT satisfies purification if for any $\rho\in\st(\rA)$ one has $P_{\rho}\neq\emptyset$.
\end{definition}

\begin{definition}[Essential uniqueness of purification]\label{ass:unique_pur}
We say that an OPT satisfies essential uniqueness of purification if, for any $\rho\in\st(\rA)$ such that 
$P_{\rho}\neq\emptyset$,  $\forall\Phi,\Psi\in P_{\rho}$ with $\Psi,\Phi\in\st(\rA\rB)$, there exists a reversible transformation $\tU$ such that

\begin{equation}
\begin{aligned}
 \Qcircuit @C=1em @R=.7em @! R { \multiprepareC{1}{\Psi} & \ustick \rA \qw& \qw \\
 \pureghost {\Psi} & \ustick \rB \qw &\qw}
\end{aligned}
\ = \
\begin{aligned}
 \Qcircuit @C=1em @R=.7em @! R { \multiprepareC{1}{\Phi} & \qw& \ustick \rA \qw& \qw \\
 \pureghost {\Phi} & \ustick \rB \qw &\gate{\tU} & \ustick \rB\qw}
\end{aligned}\ \ .
\end{equation}
\end{definition}

\begin{definition}[Atomicity of parallel composition of states]\label{ass:atom_parcomp_st}
We say that an OPT satisfies atomicity of parallel composition of states if for any pair $\phi\in\purst(\rA)$ 
and $\psi\in\purst(\rB)$ we also have $\phi\boxtimes\psi\in\purst(\rA\rB)$.
\end{definition}

We now prove that, in every strongly causal theory that satisfies regular scaling and uniqueness of purification, null 
information content of pure states is equivalent to atomicity of parallel composition of states. We stress that the 
requirement of {\em existence} of purification is not needed, but only its uniqueness. In other  words, {\em if} a 
state has a purification, then it is unique, even though it needs not have one. The following result is particularly 
interesting because it provides an alternative way of understanding the operational content of atomicity of parallel 
composition.

\begin{proposition} Let us consider strongly causal OPT satisfying regular scaling. 
Then the requirements of essential uniqueness of purification and atomicity of parallel composition of states imply 
that $I(\phi)=0$ for any $\phi \in \purst(\rA)$. Conversely, if $I(\phi)=0$ for any $\phi \in \purst(\rA)$, atomicity 
of parallel composition of states holds.
\end{proposition}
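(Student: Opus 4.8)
The plan is to treat the two implications separately, starting with the converse, which follows quickly from results already in hand. Assume $I(\phi)=0$ for every pure state $\phi$ and take $\phi\in\purst(\rA)$, $\psi\in\purst(\rB)$. Since atomicity is a property of the ray generated by a state, it is enough to argue for normalized representatives, so that $\phi\boxtimes\psi\in\st_1(\rA\rB)$ and subadditivity (Proposition~\ref{prop:subadditivity}) applies: $I(\phi\boxtimes\psi)\le I(\phi)+I(\psi)=0$. By Proposition~\ref{prop:null_infocontent}, which relies on regular scaling, a deterministic state with vanishing information content is pure; hence $\phi\boxtimes\psi$ is atomic and atomicity of parallel composition of states holds.

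For the forward implication the plan is to produce a compression scheme whose number of obits stays bounded as $N$ grows, forcing the ratio in~\eqref{eq:inforate}, and hence $I(\phi)$, to vanish. The natural candidate is a measure-and-reprepare map: let $\tE$ discard $\rA^{\boxtimes N}$ and load one fixed obit state (so that $M=1$ respects the constraint $M\ge1$), and let $\tD$ discard that obit and reprepare the fixed state $\phi^{\boxtimes N}$, so that $\tC\coloneqq\tD\tE=\state{\phi^{\boxtimes N}}\effect{e}$ with $e$ the deterministic effect (unique by strong causality) and the ratio equal to $1/N\to0$. Everything then reduces to showing that this scheme is \emph{exactly} reliable, i.e.\ $(\tC\boxtimes\tI)\Psi_i=\Psi_i$ for every component $\Psi_i$ of every dilation of $\phi^{\boxtimes N}$.

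The crux, and what I expect to be the main obstacle, is the following \emph{product lemma}: any $\Psi\in\st(\rA\rC)$ whose marginal on $\rA$ is proportional to a pure state $\phi$ factors as $\Psi=\phi\boxtimes\tau$, with $\tau$ the marginal on $\rC$. Granting it, the forward direction closes at once: $\phi^{\boxtimes N}$ is atomic by iterating Definition~\ref{ass:atom_parcomp_st}, and since $\{\Psi_i\}$ refines a dilation of $\phi^{\boxtimes N}$ the marginals of the $\Psi_i$ on $\rA^{\boxtimes N}$ refine the atomic state $\phi^{\boxtimes N}$, hence are of the form $\lambda_i\phi^{\boxtimes N}$; the lemma then gives $\Psi_i=\phi^{\boxtimes N}\boxtimes\tau_i$, so that $(\tC\boxtimes\tI)\Psi_i=\phi^{\boxtimes N}\boxtimes\tau_i=\Psi_i$ and the reliability error vanishes identically, yielding $I(\phi)=0$.

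To prove the product lemma I would first decompose $\Psi=\sum_k\Phi_k$ into pure states (pure decompositions are used freely elsewhere, cf.\ the proof of Lemma~\ref{lemma:pur_infocontent}). Each $\Phi_k$ has $\rA$-marginal refining the atomic $\phi$, hence proportional to it, so its normalization $\widehat\Phi_k$ is a purification of $\phi$ with purifying system $\rC$. Now $\phi\boxtimes\chi_0$, for $\chi_0$ any pure deterministic state of $\rC$ (pure by atomicity of parallel composition), is a second purification of $\phi$ on the \emph{same} system $\rC$, so essential uniqueness of purification (Definition~\ref{ass:unique_pur}) furnishes a reversible $\tU$ on $\rC$ with $\widehat\Phi_k=(\tI_\rA\boxtimes\tU)(\phi\boxtimes\chi_0)=\phi\boxtimes\tU(\chi_0)$; thus each $\Phi_k$ is a product with first factor $\phi$, and summing over $k$ yields $\Psi=\phi\boxtimes\tau$. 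The two points I would check carefully are that $\rC$ admits a pure deterministic state $\chi_0$ (guaranteed in finite dimension by compactness of $\st_1(\rC)$, whose extreme points are atomic in a causal theory) and that essential uniqueness is invoked only between purifications sharing the same purifying system, which is precisely the configuration arising here.
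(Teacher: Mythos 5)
Your proof is correct and follows essentially the same route as the paper's: the same measure-and-prepare scheme $\tD\tE=\state{\phi^{\boxtimes N}}\effect{e}$, made exactly reliable by factorizing the components of any refinement of a dilation via atomicity of parallel composition together with essential uniqueness of purification, and the same converse via subadditivity (Proposition~\ref{prop:subadditivity}) and Proposition~\ref{prop:null_infocontent}. The only differences are organizational: the paper first reduces to pure decompositions through Lemma~\ref{lemma:pur_infocontent} and applies essential uniqueness directly to the pure components, whereas you inline that reduction into a ``product lemma,'' and you explicitly justify the existence of a pure deterministic ancilla state $\chi_0$, a point the paper leaves implicit when it introduces $\eta\in\purst(\rB)$.
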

\begin{proof}
By lemma~\ref{lemma:pur_infocontent} we have $I(\phi)=I^\mathrm{pur}(\phi)$. Now, let us fix $N$ and consider a dilation $\Omega$
of $\phi^{\boxtimes N}$. Let $\{\Psi_{i}\}_{i\in\sX}$ be a pure decomposition of 
$\Omega$, then by purity of $\phi^{\boxtimes N}$
we must have
\begin{equation}
p_i\begin{aligned}
 \Qcircuit @C=1em @R=.7em @! R { \prepareC{\phi^{\boxtimes N}} & \ustick {\rA^N} \qw & \qw}
\end{aligned}
\ = \
\begin{aligned}
 \Qcircuit @C=1em @R=.7em @! R { \multiprepareC{1}{\Psi_i} & \ustick {\rA^N} \qw& \qw \\
 \pureghost {\Psi_i} & \ustick \rB \qw&\measureD{e}}
\end{aligned}\ , \quad \forall i\in\sX.
\end{equation}
Now, let $\eta\in\purst(\rB)$ and consider $\phi^{\boxtimes N}\boxtimes \eta$. This is still a pure state,
hence a purification of $\phi^{\boxtimes N}$. Therefore, by essential uniqueness of purifications

\begin{equation}
\begin{aligned}
 \Qcircuit @C=1.2em @R=.7em @! R { \multiprepareC{1}{\Psi_{i}} & \ustick {\rA^{\boxtimes N}} \qw& \qw \\
 \pureghost {\Psi_{i}} & \ustick \rB \qw &\qw}
\end{aligned}
\ = \ p_i
\begin{aligned}
 \Qcircuit @C=1em @R=.7em @! R { \prepareC{\phi^{\boxtimes N}} & \qw& \ustick {\rA^{\boxtimes N}} \qw& \qw \\
\prepareC{\eta} & \ustick \rB \qw &\gate{\tU_{i}} & \ustick \rB\qw}
\end{aligned}\ \ , \quad \forall i\in\sX.
\end{equation}
where $\tU_{i}$ are reversible channels on $\rB$. 

Now let us consider a compression scheme defined by a measure and prepare one, as follows
\begin{equation}
\tE\coloneqq
\begin{aligned}
\Qcircuit @C=1em @R=.7em @! R {& \ustick {\rA^{\boxtimes N}}\qw &\measureD {e}}
\end{aligned}\ , \quad
\tD\coloneqq
\begin{aligned}
\Qcircuit @C=1em @R=.7em @! R {&\prepareC {\phi^{\boxtimes N}}& \qw& \ustick {\rA^{\boxtimes N}}\qw  }
\end{aligned}\ .
\end{equation}
It is clear that for any dilation $\Omega$ of $\phi^{\boxtimes N}$, the above scheme is such that
$(\tD\tE\boxtimes\tI)(\Psi_i)-\Psi_i = 0$ and this implies that for any $N$ and $\varepsilon$ we have 
$E_{N,0,\varepsilon}^\mathrm{pur}(\rho)\neq\emptyset$ and then $I(\rho)=I^\mathrm{pur}(\rho)=0$.

Now, let us assume that for any A and for any $\phi\in\purst(\rA)$ we have $I(\phi)=0$. Let $\rho\in\purst(\rA)$ and
$\sigma\in\purst(\rB)$. By proposition~\ref{prop:subadditivity} we have that
\[
I(\rho\boxtimes\sigma)\leq I(\rho)+I(\sigma)=0.
\]
Thus $I(\rho\boxtimes\sigma)=0$, and by proposition~\ref{prop:null_infocontent} $\rho\boxtimes\sigma$ is pure, namely,
assumption~\ref{ass:atom_parcomp_st} holds.
\end{proof}
It is interesting to observe that, thanks to the above proposition, one can exhibit operational probabilistic theories having pure states with non-vanishing information content. In Ref.~\cite{PhysRevA.102.052216} the authors construct a theory, named \emph{bilocal classical theory}, where all systems are classical (the set of states is a simplex), but with a parallel composition rule that differs from the one of classical information theory, thereby violating assumption~\ref{ass:atom_parcomp_st}. Accordingly, bilocal classical theory must have pure states with non-null information content.

\section{Information content in Quantum and Classical Information Theory}\label{sec:boilsdown}

Before restricting to the quantum case, we prove the following lemma concerning
the correlation fidelity defined in~\ref{def:ent_fid}.

\begin{lemma}
Let $\rho\in\st_1(\rA)$ and $\tC\in\tr_{1}(\rA)$. If every state has a purification (definition~\ref{ass:exist_pur}), then one has
\begin{equation}
F(\rho,\tC)=\inf_{\Phi\in P_{\rho}}F[\Phi,\tC\boxtimes\tI(\Phi)]^{2}.
\end{equation}
Moreover, in an OPT with essential uniqueness of purification (def.~\ref{ass:atom_parcomp_st}) and atomicity of parallel composition of states 
(def.~\ref{ass:unique_pur}), for any $\Phi\in P_{\rho}$ one has
\begin{equation}
F(\rho,\tC)=F[\Phi,\tC\boxtimes\tI(\Phi)]^{2}.
\end{equation}
\end{lemma}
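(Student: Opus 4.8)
The plan is to prove the two claims separately, both by exploiting the definition of the correlation fidelity $F(\rho,\tC)=\inf_{\Psi\in D_{\rho}}F[\Psi,\tC\boxtimes\tI(\Psi)]^{2}$ together with the monotonicity of fidelity under channels (Proposition~\ref{prop:fid_monot}). For the first claim, since $P_{\rho}\subseteq D_{\rho}$ always holds, the infimum over the larger set $D_{\rho}$ is automatically no larger than the infimum over purifications, giving $F(\rho,\tC)\leq\inf_{\Phi\in P_{\rho}}F[\Phi,\tC\boxtimes\tI(\Phi)]^{2}$. The reverse inequality is the substantive part: I would show that \emph{every} dilation can be dominated by a purification in the sense of fidelity. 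Concretely, if every state has a purification, then given an arbitrary dilation $\Psi\in\st(\rA\rB)$ of $\rho$, I would take a purification $\Phi\in\st(\rA\rB\rE)$ of $\Psi$ (which is then also a purification of $\rho$). Discarding $\rE$ with its deterministic effect is a channel that sends $\Phi$ to $\Psi$ and sends $(\tC\boxtimes\tI_{\rB\rE})(\Phi)$ to $(\tC\boxtimes\tI_{\rB})(\Psi)$; applying Proposition~\ref{prop:fid_monot} gives $F[\Phi,\tC\boxtimes\tI(\Phi)]\leq F[\Psi,\tC\boxtimes\tI(\Psi)]$. Hence every term in the infimum over $D_{\rho}$ is bounded below by some term in the infimum over $P_{\rho}$, yielding $\inf_{\Phi\in P_{\rho}}\leq\inf_{\Psi\in D_{\rho}}$ and thus equality.

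For the second claim I need to show that, under essential uniqueness of purification and atomicity of parallel composition of states, the infimum over $P_{\rho}$ is actually attained at \emph{every} purification, i.e.~the quantity $F[\Phi,\tC\boxtimes\tI(\Phi)]^{2}$ is independent of the choice of $\Phi\in P_{\rho}$. The key step is to relate two arbitrary purifications $\Phi,\Phi'\in P_{\rho}$. First I would reduce to the case where they live on the same purifying system: if $\Phi\in\st(\rA\rB)$ and $\Phi'\in\st(\rA\rB')$, I can pad each with a fixed pure ancilla state $\eta$ on the other purifying system, and atomicity of parallel composition guarantees that $\Phi\boxtimes\eta$ and $\Phi'\boxtimes\eta'$ remain pure, hence are purifications on the common system $\rB\rB'$. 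Then essential uniqueness of purification supplies a reversible channel $\tU$ on the purifying system with $\Phi'=(\tI_{\rA}\boxtimes\tU)\Phi$.

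Given this reversible relation, I would compute $F[\Phi',\tC\boxtimes\tI(\Phi')]=F[(\tI\boxtimes\tU)\Phi,(\tC\boxtimes\tU)\Phi]$, where $\tU$ acts only on the purifying system and therefore commutes with $\tC$ acting on $\rA$. Since $\tI_{\rC}\boxtimes\tU$ is a reversible channel, and fidelity is monotone under channels in both directions for a reversible map (Proposition~\ref{prop:fid_monot} applied to $\tU$ and to $\tU^{-1}$), I obtain $F[\Phi',\tC\boxtimes\tI(\Phi')]=F[\Phi,\tC\boxtimes\tI(\Phi)]$. This shows the fidelity is the same for all purifications, so the infimum is trivially attained at each one, and combined with the first claim this gives $F(\rho,\tC)=F[\Phi,\tC\boxtimes\tI(\Phi)]^{2}$ for every $\Phi\in P_{\rho}$.

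The main obstacle I anticipate is the careful bookkeeping of which system each transformation and ancilla lives on, and ensuring that the reversible $\tU$ from essential uniqueness genuinely commutes with the local channel $\tC$ on $\rA$ — this relies on $\tC$ and $\tU$ acting on disjoint factors so that $(\tC\boxtimes\tI)(\tI\boxtimes\tU)=(\tI\boxtimes\tU)(\tC\boxtimes\tI)$. The role of atomicity is precisely to make the padding argument legitimate (keeping padded states pure so uniqueness of purification applies), while essential uniqueness delivers the reversible intertwiner; both hypotheses are thus used in an essential way in the second claim, whereas only existence of purification is needed for the first.
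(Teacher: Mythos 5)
Your proof is correct and follows essentially the same route as the paper's: the first identity is obtained by purifying an arbitrary dilation and applying monotonicity of the fidelity under the discarding channel, and the second by padding with pure ancillas (atomicity) and invoking essential uniqueness to relate purifications, again concluding via monotonicity (prop.~\ref{prop:fid_monot}). The only cosmetic difference is that the paper packages the relation between a fixed purification and any other as a single deterministic channel $\tA$ on the purifying system and applies monotonicity once (the reverse inequality being trivial), whereas you keep the reversible transformation $\tU$ explicit and apply monotonicity in both directions to get constancy of the fidelity over $P_\rho$; the two arguments are equivalent.
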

\begin{proof}
If $P_{\tau}\neq\emptyset$ for every state $\tau$, then for any $\Psi\in D_{\rho}$ one has that there exists
$\Gamma\in P_{\Psi}\subseteq P_{\rho}$. Therefore, by monotonicity of the fidelity (prop.~\ref{prop:fid_monot})
we have
\begin{align*}
F[\Psi,\tC\boxtimes\tI(\Psi)]^{2} & \geq F[\Gamma,\tC\boxtimes\tI(\Gamma)]^{2}\geq \\
						& \geq \inf_{\Phi\in P_{\rho}}F[\Phi,\tC\boxtimes\tI(\Phi)]^{2}.
\end{align*}
Since this holds for any $\Psi\in D_{\rho}$, it implies $F(\rho,\tC)\geq \inf_{\Phi\in P_{\rho}}F[\Phi,\tC\boxtimes\tI(\Phi)]^{2}$. 
The reverse inequality is trivial, since $P_{\rho}\subseteq D_{\rho}$.

If all the purifications of $\rho$ with the same purifying system are connected through a reversible transformation $\tU$
and atomicity of parallel composition of pure states also hold (see def.~\ref{ass:atom_parcomp_st} and~\ref{ass:unique_pur}), then, for any fixed purification $\Phi$ in $P_\rho$, and any other
$\Gamma$ in $P_{\rho}$, there exists a channel $\tA\in\tr_{1}(\rB,\rC)$ such that

\begin{equation}
\begin{aligned}
 \Qcircuit @C=1em @R=.7em @! R { \multiprepareC{1}{\Gamma} & \ustick \rA \qw& \qw \\
 \pureghost {\Gamma} & \ustick \rC \qw &\qw}
\end{aligned}
\ = \
\begin{aligned}
 \Qcircuit @C=1em @R=.7em @! R { \multiprepareC{1}{\Phi} & \qw& \ustick \rA \qw& \qw \\
 \pureghost {\Phi} & \ustick \rB \qw &\gate{\tA} & \ustick \rC\qw}
\end{aligned}\ \ .
\end{equation}
By monotonicity one has $F(\rho,\tC)\geq F[\Phi,\tC\boxtimes\tI(\Phi)]^{2}$ and the reverse inequality is trivial, as $\Phi \in P_{\rho}$.
\end{proof}

We recall the statement of the Schumacher theorem. Let $\rho\in\st_{1}(\rho)$ and $(\tE,\tD)$ be a compression scheme
\begin{theorem}[Schumacher]
Let $\rho\in\st_1(\rA)$ with $\cH_{\rA}$ the Hilbert space corresponding to the quantum system $\rA$,
let $(\tE,\tD)$ be a compression scheme and define its ratio $R$ as
\[
R\coloneqq\frac{\log[{\rm dim}[\Supp(\tE(\rho^{\otimes N}))]]}{N}.
\]
For every $\varepsilon > 0 $ and $R>S(\rho)$ there exists $N_0$ such that $\forall N\geq N_0$ there exists a
compression scheme with ratio $R$ such that $F(\rho^{\otimes N},\tD\tE)>1-\varepsilon$.
Conversely, for every $R<S(\rho)$ there is $\varepsilon>0$ such that  for every compression scheme $(\tE,\tD)$
with ratio $R$ one has $F(\rho^{\otimes N},\tD\tE)\leq\varepsilon$.
\end{theorem}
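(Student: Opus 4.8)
The plan is to prove both the direct and converse parts by the method of typical subspaces, reducing everything to the classical asymptotic equipartition property (AEP) applied to the eigenvalue distribution of $\rho$. First I would diagonalise the source as $\rho=\sum_x p_x\ket{x}\bra{x}$, so that $\rho^{\otimes N}=\sum_{\vec x}p_{\vec x}\ket{\vec x}\bra{\vec x}$ with $p_{\vec x}=\prod_k p_{x_k}$ and $S(\rho)=H(\{p_x\})$. The AEP applied to this distribution gives, for any $\delta>0$, a set of $\delta$-typical strings $\vec x$ (those with $2^{-N(S(\rho)+\delta)}\leq p_{\vec x}\leq 2^{-N(S(\rho)-\delta)}$) whose weight satisfies $\Tr[\Pi_\delta\rho^{\otimes N}]\to1$ as $N\to\infty$, where $\Pi_\delta$ projects onto the typical subspace $\mathcal T_\delta=\Span\{\ket{\vec x}:\vec x\text{ typical}\}$ and $\dim\mathcal T_\delta\leq 2^{N(S(\rho)+\delta)}$. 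Since quantum theory satisfies both essential uniqueness of purification and atomicity of parallel composition, the preceding lemma lets me identify the correlation fidelity $F(\rho^{\otimes N},\tD\tE)$ with the entanglement fidelity $F[\Phi,(\tD\tE)\otimes\tI(\Phi)]^2$ for any fixed purification $\Phi$ of $\rho^{\otimes N}$, so all estimates can be phrased in terms of entanglement fidelity.

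For the direct part I would fix $R>S(\rho)$, pick $\delta$ with $S(\rho)+\delta<R$, and choose $N_0$ large enough that $\dim\mathcal T_\delta\leq 2^{N(S(\rho)+\delta)}\leq 2^{NR}$ and $\Tr[\Pi_\delta\rho^{\otimes N}]>1-\varepsilon$ for $N\geq N_0$. The encoder $\tE$ performs the two-outcome measurement $\{\Pi_\delta,\,I-\Pi_\delta\}$ and, on the typical outcome, isometrically identifies $\mathcal T_\delta$ with $M=\lceil NR\rceil$ obits; the decoder $\tD$ applies the inverse isometry, outputting a fixed pure state on the atypical branch. The standard lower bound for the entanglement fidelity of a projection channel then yields $F[\Phi,(\tD\tE)\otimes\tI(\Phi)]^2\geq(\Tr[\Pi_\delta\rho^{\otimes N}])^2>1-\varepsilon$, which by the identification above is exactly $F(\rho^{\otimes N},\tD\tE)>1-\varepsilon$, giving a scheme of ratio $R$ as required.

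For the converse I would fix $R<S(\rho)$, choose $\delta$ with $R<S(\rho)-\delta$, and let $Q$ be the projector onto $\Supp(\tE(\rho^{\otimes N}))$, whose rank is $2^{NR}$ by definition of the ratio. The key estimate is the converse entanglement-fidelity bound: any channel factoring through a space of dimension $2^{NR}$ satisfies $F[\Phi,(\tD\tE)\otimes\tI(\Phi)]^2\leq\Tr[Q'\rho^{\otimes N}]$ for the optimal rank-$2^{NR}$ projector $Q'$, namely the projector onto the top $2^{NR}$ eigenvalues. By the AEP this top-eigenvalue weight tends to $0$ whenever $R<S(\rho)-\delta$, since covering weight bounded away from $0$ requires at least $2^{N(S(\rho)-\delta)}$ of the (nearly uniform) typical eigenvalues. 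Hence $F(\rho^{\otimes N},\tD\tE)\to0$, and in particular it drops below any fixed $\varepsilon$ for all large $N$.

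The main obstacle I expect is the fidelity bookkeeping rather than the typicality counting. On the achievability side one must check that it is the \emph{entanglement} fidelity, and not merely the input–output fidelity, that is controlled by $\Tr[\Pi_\delta\rho^{\otimes N}]$; this requires tracking how the measure-and-keep encoder acts on a purification and invoking the preceding lemma together with the quantum validity of Definition~\ref{def:ent_fid}. On the converse side the delicate point is establishing the upper bound $F_e\leq\Tr[Q'\rho^{\otimes N}]$ relating the entanglement fidelity through an $M$-obit bottleneck to the optimal rank-$2^M$ projection; this rests on the concavity/operator-monotonicity properties of fidelity and on the observation that no channel whose output support has rank $2^{NR}$ can outperform the optimal top-$2^{NR}$ eigenprojection.
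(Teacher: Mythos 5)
The paper does not prove this statement at all: it is recalled verbatim as Schumacher's quantum noiseless coding theorem, imported from the literature~\cite{PhysRevA.51.2738}, and then used as a black box in the proof of Proposition~\ref{prop:quantum_infocontent}, where the paper's actual work lies (translating the theorem's fidelity guarantee into the OPT quantities $I^F(\rho)$, $I^{\mathrm{dil}}(\rho)$ and finally $I(\rho)$). So there is no in-paper proof to compare yours against; what can be judged is whether your argument is a correct free-standing proof of the quoted theorem, and whether it interfaces correctly with the paper's definitions.

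It essentially is both: your argument is the standard typical-subspace proof. Diagonalising $\rho$, applying the classical AEP to the eigenvalue distribution, the measure-and-embed encoder for the direct part, and the rank-bound argument for the converse are exactly the textbook route, and invoking the paper's preceding lemma (correlation fidelity equals entanglement fidelity in quantum theory, which satisfies purification, essential uniqueness and atomicity of parallel composition) is the right way to connect the OPT fidelity of Definitions~\ref{def:fidelity_OPT} and~\ref{def:ent_fid} to the entanglement fidelity in which Schumacher-type estimates are phrased; note you also implicitly use the Fuchs--Caves fact that the measurement-minimised fidelity of Definition~\ref{def:fidelity_OPT} coincides with the Uhlmann fidelity in quantum theory, which deserves an explicit mention. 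Two bookkeeping gaps remain, both standard and fixable. First, in the direct part, $\Tr[\Pi_\delta\rho^{\otimes N}]>1-\varepsilon$ gives $F_e\geq(1-\varepsilon)^2=1-2\varepsilon+\varepsilon^2$, which is \emph{smaller} than $1-\varepsilon$; you must run the argument with $\varepsilon/2$ (or use the bound $F_e\geq 1-2\Tr[(I-\Pi_\delta)\rho^{\otimes N}]$). Second, in the converse, the inequality $F_e\leq\Tr[Q'\rho^{\otimes N}]$ with $Q'$ the top-eigenvalue projector does not follow from concavity alone: one writes $F_e=\sum_j\abs{\Tr(\rho^{\otimes N}D_jC_j)}^2$ over Kraus elements, inserts the support projector of the encoding, and applies Cauchy--Schwarz so each term is dominated by a quantity of the form $\Tr(\rho^{\otimes N}\Pi_j)$ with $\mathrm{rank}\,\Pi_j\leq 2^{NR}$; you flag this step as delicate but do not supply it. Neither issue is a conceptual gap, so the proposal stands as a correct proof sketch of a theorem the paper deliberately leaves unproved.
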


\begin{proposition}\label{prop:quantum_infocontent}
Let $\rho\in\st_1(\rA)$ be a quantum state and denote with $S(\rho)$ its von Neumann entropy. Then $I(\rho)=S(\rho)$.
\end{proposition}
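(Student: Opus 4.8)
The plan is to reduce the claim to Schumacher's theorem by routing the information content through the fidelity-based quantity $I^F$, and then identifying the correlation fidelity of Definition~\ref{def:ent_fid} with the entanglement fidelity that Schumacher controls. Since quantum theory satisfies the steering assumption~\ref{ass:steering}, Proposition~\ref{prop:infocontent_dil} gives $I(\rho)=I^\mathrm{dil}(\rho)$, and Proposition~\ref{prop:infocontent_fidelity} then gives $I^\mathrm{dil}(\rho)=I^F(\rho)$; hence $I(\rho)=I^F(\rho)$, where the latter is governed by the criterion $F(\rho^{\boxtimes N},\tD\tE)>1-\varepsilon$.

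Next I would invoke the preceding lemma. Quantum theory admits purification, essential uniqueness of purification, and atomicity of parallel composition of pure states, so that lemma yields $F(\rho^{\boxtimes N},\tD\tE)=F[\Phi,(\tD\tE)\boxtimes\tI(\Phi)]^{2}$ for any purification $\Phi$ of $\rho^{\boxtimes N}$: the infimum over dilations collapses onto a single purification. Because the generalized fidelity of Definition~\ref{def:fidelity_OPT} reduces to the Uhlmann fidelity in quantum theory, this is precisely the entanglement fidelity $F(\rho^{\otimes N},\tD\tE)$ appearing in the statement of Schumacher's theorem. Thus the figure of merit defining $I^F$ coincides exactly with Schumacher's.

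Taking the obit to be a qubit, an encoding onto $\rB^{\boxtimes M}$ lives in a space of dimension $2^M$, so the rate $M/N$ of our schemes matches Schumacher's ratio $R=\log_2[\dim\Supp(\tE(\rho^{\otimes N}))]/N$ up to the rounding $M=\lceil\log_2\dim\Supp(\tE(\rho^{\otimes N}))\rceil$, whose per-copy contribution vanishes as $N\to\infty$. The direct part of Schumacher's theorem then says that for every $R>S(\rho)$ and every $\varepsilon>0$ there is, for large $N$, a scheme of ratio $R$ with $F(\rho^{\otimes N},\tD\tE)>1-\varepsilon$; hence $E^F_{N,M,\varepsilon}(\rho)\neq\emptyset$ with $M/N\to R$, so $\limsup_{N\to\infty}R^F_{N,\varepsilon}(\rho)\leq R$ for every $\varepsilon$, and therefore $I^F(\rho)\leq R$ for all $R>S(\rho)$, i.e. $I^F(\rho)\leq S(\rho)$. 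The converse part says that for $R<S(\rho)$ there is a fixed $\varepsilon_0>0$ with $F(\rho^{\otimes N},\tD\tE)\leq\varepsilon_0$ for every scheme of ratio $R$; thus for $\varepsilon<1-\varepsilon_0$ no scheme of rate below $S(\rho)$ lies in $E^F_{N,M,\varepsilon}(\rho)$, forcing $R^F_{N,\varepsilon}(\rho)\geq S(\rho)$ in the limit and hence $I^F(\rho)\geq S(\rho)$. Combining the two bounds yields $I(\rho)=I^F(\rho)=S(\rho)$.

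I expect the main obstacle to be bookkeeping rather than conceptual: verifying that the correlation fidelity truly equals the squared Uhlmann entanglement fidelity through the preceding lemma (so that the $\inf$ over dilations collapses to a single purification), and aligning the two rate conventions—the $\lceil\cdot\rceil$ rounding and the base-$2$ logarithm—so that the $\varepsilon\to0$ and $N\to\infty$ limits in the definition of $I$ match exactly the $\forall\varepsilon$ achievability and $\exists\varepsilon_0$ converse quantifiers of Schumacher's theorem.
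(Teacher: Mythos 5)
Your proposal is correct and takes essentially the same route as the paper's proof: reduce $I$ to $I^F$ via Propositions~\ref{prop:infocontent_dil} and~\ref{prop:infocontent_fidelity}, identify the correlation fidelity with Schumacher's entanglement fidelity through the lemma preceding the theorem, and then apply the direct and converse parts of Schumacher's theorem with the $\lceil NR\rceil$-qubit rounding. The one subtlety you flag as ``bookkeeping'' is resolved in the paper by fixing a rational ratio $M/\overline{N}\in[S(\rho)-\delta,S(\rho))$ and running the converse along the subsequence $N=k\overline{N}$, so that a single $\overline{\varepsilon}$ from Schumacher's converse suffices; your phrase ``no scheme of rate below $S(\rho)$'' implicitly assumes a uniformity of $\varepsilon_0$ over all such rates, which that subsequence trick avoids having to justify.
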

\begin{proof}
We start by showing that $I^F(\rho)\leq S(\rho)$. Let $\delta>0$, $R\in (S(\rho),S(\rho)+\delta]$ and $\varepsilon>0$.
By the direct part of the Schumacher theorem there exists a $N_{0}$ such that for any $N\geq N_0$ there is a 
$(N,\varepsilon)$-reliable compression scheme with rate $R$. Thus, upon embedding $\Supp[\tE(\rho^{\otimes N})]$
in $\ceil{NR}$ qubits, by using a suitable isometry, we have $E^{F}_{N,\ceil{NR},\varepsilon}(\rho)\neq \emptyset$ for any
$N\geq N_0$. This implies
\begin{align*}
&\limsup_{N\rightarrow \infty}\frac{\min\{M:E^{F}_{N,M,\varepsilon}(\rho)\neq \emptyset\}}{N} \\
		\leq& \lim_{N\rightarrow \infty}\frac{\ceil{NR}}{N} = R\leq S(\rho)+\delta.
\end{align*}
Since the argument holds for any $\varepsilon >0$, we get $I^F(\rho)\leq S(\rho)+\delta$, and being delta arbitrary, we find
$I^F(\rho)\leq S(\rho)$.

Now let $\delta > 0$ and consider $M/\overline{N}$ such that $S(\rho)-\delta\leq M/\overline{N}<S(\rho)$. By the converse part
of Schumacher theorem there exists $\overline{\varepsilon}>0$ such that for any compression scheme with ratio 
$M/\overline{N}$ one has $F(\rho^{\otimes N},\tD\tE)\leq 1- \overline{\varepsilon}$. In particular, since any compression
scheme from $k\overline{N}$ copies of the system to $kM$ qubits has ratio $M/\overline{N}$, one has
\[
E_{k\overline{N},kM,\overline{\varepsilon}}^{F}(\rho)=\emptyset,\quad \forall k \in \N.
\]
Therefore, for any $0<\varepsilon<\overline{\varepsilon}$ and $k$
\begin{align*}
S(\rho)-\delta\leq\frac{M}{\overline{N}}&<\frac{\min \{L:E^F_{k\overline{N},L,\overline{\varepsilon}}(\rho)\neq\emptyset\}}{k\overline{N}}\leq \\
&\leq \frac{\min \{L:E^F_{k\overline{N},L,\varepsilon}(\rho)\neq\emptyset\}}{k\overline{N}}.
\end{align*}
Thus, by taking the $\limsup_{k\rightarrow \infty}$ we find that
\begin{align*}
S(\rho)-\delta &\leq \limsup_{k\rightarrow \infty}\frac{\min \{L:E^F_{k\overline{N},L,\varepsilon}(\rho)\neq\emptyset\}}{k\overline{N}}\\
	&\leq \limsup_{N\rightarrow \infty}\frac{\min \{L:E^F_{N,L,\varepsilon}(\rho)\neq\emptyset\}}{N},
\end{align*}
for any $0<\varepsilon<\overline{\varepsilon}$. By taking the $\lim_{\varepsilon\rightarrow 0 }$ and the arbitrariness
of $\delta$ we finally get $S(\rho)\leq I^F(\rho)$. The statement then follows by the fact that in quantum theory one has $I^F(\rho)=I(\rho)$ 
(prop.~\ref{prop:infocontent_dil} and~\ref{prop:infocontent_fidelity}).
\end{proof}

Let us now turn our focus to the classical case. In this setting, the input and the output of the compression scheme
are given by strings of $N$ letters drawn from an alphabet $\chi$. Each letter $x_{i}$ appears with a given 
probability 
$p_{i}$ and the probability that the overall string $x_{i_{1}}\dots x_{i_{N}}$ is emitted is given by the joint probability 
$p_{i_{1},\dots,i_{N}}$. If we assume that each symbol is independently and identically distributed, then 
$p_{i_{1},\dots,i_{N}}=p_{i_{1}}\dots p_{i_{N}}$. The probability $p(e)$
of emitting an output string which is different form the input one is often considered in the literature as a figure 
of merit.  More formally, let $C$ be the 
Markov matrix representing the composition
of the compression and decompression maps, and ${\bf i}\coloneqq i_{1}\dots i_{N}$ define the input string,
then the error probability is defined as 
\[
p^{C}(e)\coloneqq\sum_{\bi}\sum_{\bj\neq\bi}p(\bi\neq\bj|\bi)=1-\sum_{\bi}C_{\bi,\bi}p_{\bi}.
\]
The set of states of classical theory is given by a simplex, and any probability vector representing a state can be uniquely
decomposed in terms of pure states $\be_{i}$, corresponding to vectors with all zero components except the one
in the $i$-th position: $(e_i)_j=\delta_{i,j}$. Since $C$ is a stochastic matrix we have the following chain of 
equalities
\begin{align*}
&\sum_{\bi}p_{\bi}\norma{C\be_{\bi}-\be_{\bi}}\\
&=\sum_{\bi}p_{\bi}\sum_{\bj}|C_{\bj,\bi}-\delta_{\bi\bj}|  \\
&=\sum_{\bi}p_{\bi}(\sum_{\bj\neq\bi}C_{\bj,\bi}+1-C_{\bi,\bi}) \\
&=\sum_{\bi}2p_{\bi}(1-C_{\bi,\bi}) = \\
&=2\sum_{\bi}p_{\bi}(1-C_{\bi,\bi}) = 2(1-\sum_{\bi}p_\bi C_{\bi,\bi})=2p^{C}(e),
\end{align*}
namely
\[
p^C(e)=\sum_{\bi}p_{\bi}\frac{1}{2}\norma{C\be_{\bi}-\be_{\bi}}_{1}.
\]
Now consider \emph{the unique} pure decomposition of some dilation $\bPi\in D_{\bp^{\otimes N}}$. This is given by $\bPi=\sum_{\bi,j}\Pi_{ij}\be_{\bi}\otimes \be_{j}$ with $\sum_j \Pi_{\bi j}=p_{\bi}$(the pure states of the composite system are the tensor product vectors
of the pure ones of the composing systems). Then we find 
\begin{align*}
&\sum_{\bi,j}\Pi_{\bi j}\norma{(C\otimes I)\be_{\bi}\otimes\be_j - \be_{\bi}\otimes\be_j}_{1} \\
&=\sum_{\bi,j}\Pi_{\bi j}\norma{(C - I)\be_{\bi}\otimes\be_j }_{1} \\
&=\sum_{\bi,j}\Pi_{\bi j}\norma{(C - I)\be_{\bi} }_{1} \\
&=\sum_{\bi}p_{\bi}\norma{(C - I)\be_{\bi}}_{1} =2 p^C(e),
\end{align*}
having used the fact that for any $j$ 
\[
\norma{(C - I)\be_{\bi}\otimes\be_j }_{1}=\norma{(C - I)\be_{\bi} }_{1}.
\]
Summarizing, we have proved that for any dilation $\bPi\in D_{\bp^{\otimes N}}$
\[
p^{C}(e)=\frac{1}{2}\sum_{\bi,j}\Pi_{\bi j}\norma{(C\otimes I)\be_{\bi}\otimes\be_j - \be_{\bi}\otimes\be_j}_{1}
\]
Namely, in the classical case, the error probability is exactly our figure of merit in~\eqref{eq:pure_figure_merit}. 

Now, one can use the first Shannon theorem in order to prove that the information content of a classical state is exactly
its Shannon entropy

\begin{theorem}[{Shannon}]
Let $\bp\in\st_1(\rA)$ be a classical state,
let $(\tE,\tD)$ be a compression scheme and define its ratio $R$ as
\[
R\coloneqq\frac{\log_{2}|\tE(\Rng ({\rm X}^{N}))|}{N}.
\]
For every $\varepsilon > 0 $ and $R>H(X)$ there exists $N_0$ such that $\forall N\geq N_0$ there exists a
compression scheme with ratio $R$ such that $p^{C}(e)<\varepsilon$.
Conversely, for every $R<H(X)$ there is $\varepsilon>0$ such that  for every compression scheme $(\tE,\tD)$
with ratio $R$ one has $p^{C}(e)\geq\varepsilon$.
\end{theorem}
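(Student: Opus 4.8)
The plan is to establish both halves through the asymptotic equipartition property, i.e.\ by isolating the \emph{typical} strings on which the empirical self-information concentrates around $H(X)$. For a slack $\epsilon>0$ I would define the typical set $A_\epsilon^{(N)}$ as the collection of strings $\bi=i_1\dots i_N$ with $\abs{-\tfrac1N\log_2 p_{\bi}-H(X)}\le\epsilon$, equivalently $2^{-N(H(X)+\epsilon)}\le p_{\bi}\le 2^{-N(H(X)-\epsilon)}$. Since the source is i.i.d., $-\tfrac1N\log_2 p_{\bi}=\tfrac1N\sum_k(-\log_2 p_{i_k})$ is an empirical average of i.i.d.\ variables with mean $H(X)$, so the weak law of large numbers gives $P(A_\epsilon^{(N)})\to1$; summing the two pointwise bounds over $A_\epsilon^{(N)}$ then yields the cardinality sandwich $\abs{A_\epsilon^{(N)}}\le 2^{N(H(X)+\epsilon)}$ and, for large $N$, $\abs{A_\epsilon^{(N)}}\ge(1-\delta)2^{N(H(X)-\epsilon)}$. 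These three facts are the only probabilistic input; the rest is bookkeeping against the ratio definition.

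For the direct part, fix $R>H(X)$ and pick $\epsilon$ with $H(X)+\epsilon<R$. I would let $\tE$ enumerate the typical strings, mapping $A_\epsilon^{(N)}$ injectively into an index set and collapsing every atypical string onto one reserved index; $\tD$ inverts the enumeration on the typical indices and emits a fixed string on the reserved one. For large $N$ the image has size at most $\abs{A_\epsilon^{(N)}}+1\le 2^{N(H(X)+\epsilon)}+1\le 2^{NR}$, so the scheme has ratio at most $R$ (one may pad the image up to $\lceil 2^{NR}\rceil$ codewords to hit ratio $R$ exactly), while a string is mis-decoded only if it is atypical, whence $p^C(e)\le 1-P(A_\epsilon^{(N)})\to0$. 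Thus for every $\varepsilon>0$ there is $N_0$ with $p^C(e)<\varepsilon$ for all $N\ge N_0$.

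For the converse, fix $R<H(X)$ and choose $\epsilon>0$ with $R<H(X)-\epsilon$. Any scheme of ratio $R$ has encoder image of size $2^{NR}$, so the decoder $\tD$ can target at most $2^{NR}$ distinct output strings; hence the success probability $\sum_{\bi}p_{\bi}C_{\bi,\bi}$ is bounded by the total weight $P(B)$ of a set $B$ with $\abs{B}\le 2^{NR}$. Splitting $B$ into typical and atypical parts, the typical contribution is at most $\abs{B}\,2^{-N(H(X)-\epsilon)}\le 2^{-N(H(X)-\epsilon-R)}\to0$ by the choice of $\epsilon$, and the atypical contribution is at most $1-P(A_\epsilon^{(N)})\to0$. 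Therefore $P(B)\to0$ and $p^C(e)=1-\sum_{\bi}p_{\bi}C_{\bi,\bi}\ge 1-P(B)\to1$; in particular there is $\varepsilon>0$ with $p^C(e)\ge\varepsilon$ for all large $N$, which is exactly the form invoked for the scalable family of ratio-$R$ schemes on $k\overline N$ copies in the quantum converse.

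The main obstacle is not any single step but the coordinated choice of the slack parameters together with the matching of the combinatorial counts to the ratio $R=\log_2\abs{\tE(\Rng({\rm X}^{N}))}/N$: the direct part needs $H(X)+\epsilon<R$ whereas the converse needs $R<H(X)-\epsilon$, and in both cases the bound on $\abs{A_\epsilon^{(N)}}$ must be cashed against $2^{NR}$ through the reserved symbol and the ceiling implicit in passing from $2^{N(H(X)\pm\epsilon)}$ to $\lceil 2^{NR}\rceil$. The one genuinely probabilistic ingredient is the AEP itself---the concentration of $-\tfrac1N\log_2 p_{\bi}$ via the weak law of large numbers and the resulting sandwich on $\abs{A_\epsilon^{(N)}}$---after which both the achievability enumeration and the converse counting are elementary.
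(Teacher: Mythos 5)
Your proposal cannot be checked against the paper's own argument, because the paper gives none: this theorem is quoted as Shannon's classical source-coding theorem (exactly as Schumacher's theorem is quoted in the quantum section) and is then used as a black box, via the identity $p^C(e)=\frac12\sum_{\bi,j}\Pi_{\bi j}\norma{(C\otimes I)\be_{\bi}\otimes\be_j-\be_{\bi}\otimes\be_j}_1$, to conclude that the information content of a classical state equals its Shannon entropy. So the relevant question is whether your proof stands on its own, and it essentially does: it is the standard typical-set/AEP proof, with the slacks correctly coordinated ($H(X)+\epsilon<R$ for achievability, $R<H(X)-\epsilon$ for the converse), the correct cardinality sandwich on $A_\epsilon^{(N)}$, and the correct bookkeeping of the reserved symbol against the ratio definition $R=\log_2\abs{\tE(\Rng({\rm X}^N))}/N$.

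One step in the converse needs tightening. You argue that ``the decoder can target at most $2^{NR}$ distinct output strings,'' so that the success probability is $P(B)$ for some set $B$ with $\abs{B}\le 2^{NR}$. That is literally true only for deterministic decoders; in the paper's framework classical transformations are sub-stochastic matrices, so $\tE$ and $\tD$ may be randomized, and then no such set $B$ exists. The repair is short, by either of two routes: (i) the success probability $\sum_{\bi}p_{\bi}C_{\bi,\bi}=\sum_{j}\sum_{\bi}p_{\bi}E_{j,\bi}D_{\bi,j}$ is bilinear in the pair of matrices $(E,D)$, hence its maximum over the compact convex sets of admissible encoders and decoders is attained at deterministic extreme points, reducing to your case; or (ii) split the double sum by typicality directly: the atypical contribution is at most $1-P(A_\epsilon^{(N)})$, while the typical contribution is at most $2^{-N(H(X)-\epsilon)}\sum_j\sum_{\bi}E_{j,\bi}D_{\bi,j}\le 2^{-N(H(X)-\epsilon)}\cdot 2^{NR}$, since $E_{j,\bi}\le1$ and each column of $D$ sums to at most one. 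Either way your conclusion $p^C(e)\to1$ stands. Finally, note that the theorem's converse quantifies over \emph{all} $N$, not just large ones; this follows from your asymptotic statement because only finitely many $N$ can have minimal error below any fixed threshold, and for each such $N$ the minimal error is a positive minimum of a continuous function over a compact set (positive because $2^{NR}<s^N$, with $s$ the support size of the source, so some positive-probability string must be mis-decoded). With these two small patches the proof is complete and matches the role the theorem plays in the paper.
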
 

A propoisition analogous to~\ref{prop:quantum_infocontent} holds for the classical case, whose proof is essentially the same.
The main issue in both cases is to correctly identify the figure of merit that must be adopted in order to define the information content.

\section{Conclusion}\label{sec:concl}
We have defined the information content for a source of information of an arbitrary operational probabilistic theory. 
The only assumption needed is that of  digitalizability: a theory is digitalizable if any system of the theory can be perfectly 
mapped into finitely many copies of a reference system, called ``obit'', playing the role that ``bit'' and ``qubit'' play in classical and quantum theory, respectively. The information content of a source is defined as the minimum number of obits needed to store the output of 
the source in a such a way that it can be recovered with arbitrary accuracy. The figure of merit for establishing accuracy, 
independently of the features of the theory, is robust against any distortion effect that a compression scheme could induce on the 
state of the source, on its admissible preparations and on the correlations with external systems. Accordingly, the figure of merit 
meets the following two criteria: i) any preparation of ensembles that average to the considered state must be indistinguishable 
from leaving the preparation untouched, ii) the compression scheme must preserve decompositions of dilations of the state 
of interest, namely joint states of the system and arbitrary external systems such that the state that one obtains after averaging and discarding the external system is  the one at hand. 

We first proved that the information content is always a well defined quantity. Moreover, in the hypothesis of steering of 
ensembles, we show that the information content can be computed using simple figures of merit, e.g.~a generalisation of entanglement fidelity here denoted by correlation fidelity. Then we show that the present notion of 
information content coincides with the Shannon and von Neumann entropies in the classical and quantum case, respectively. 
In quantum theory both the entanglement fidelity and the average input-output fidelity for an arbitrary decomposition of the state representing the source identify the von Neumann entropy as the minimal compression rate. This opens a relevant question: 
which are the minimal assumptions behind the collapse of ``global'' and ``local'' figures of merit, namely quantifiers of the ability to recover the correlations of the source, and the local preparations of the source, respectively?


Like Shannon's and von Neumann's entropy, we proved that the information content is subadditive, and can be used to measure the purity of a state. Indeed both Shannon's and von Neumann's entropy vanish if and only if the state is pure, and here we show that the information content has this feature as well. While it is always true that a source with null information content corresponds to a pure state, the opposite implication is satisfied in the presence of atomicity of parallel composition (the parallel composition of any two pure states is pure) and unique purification (if a state has a purification, then the latter is unique up to reversible channels on the remote system).

In the light of the above results we propose the information content as a candidate entropic quantity generalising Shannon entropy of classical systems and von Neumann entropy of quantum systems. A basic message across the literature on general probabilistic theories~\cite{Barnum:2010gy,Short:2010kt,2010RpMP...66..175K} is that a theory is usually not \emph{monoentropic}, namely multiple entropic quantities can be defined, each one reducing to Shannon's and von Neumann's entropy in classical and quantum theory, respectively. The notions of entropy usually considered are defined in terms of classical information quantities as follows. i) The \emph{measurement entropy} of a system, namely the infimum Shannon entropy
of any possible measurement on the system, measures the minimum
uncertainty of measurement among indecomposable measurements under a state. 
ii) The \emph{decomposition (or mixing)
entropy}, namely the infimum of the Shannon entropies over
all possible ways of preparing the system's state as a mixture
of pure states, measures the minimum uncertainty for a preparation of a state with respect to pure states. iii) The supremum of the Shannon mutual information between two random variables related respectively to preparations and measurements on the system, measures the maximum \emph{accessible information}. 

For a general theory the above quantities can be very different and violate some of the typical features of Shannon and von Neumann entropies, such as \emph{concavity} and \emph{strong subadditivity}.  It is known \cite{Barnum:2010gy,Short:2010kt,2010RpMP...66..175K} that measurement entropy is both subadditive and concave but in general it is not strongly subadditive and does not provide a measure of purity, while decomposition entropy is generally neither subadditive nor concave. For example, in Ref.~\cite{Barnum:2010gy} it is shown that, for systems whose state space is a non-simplicial polytope, the decomposition entropy is not concave. Less is known about the third entropic quantity given in terms of the Shannon mutual information, which still could satisfy all features of Shannon and von Neumann entropies. 

One of the main outcome of this manuscript are a series of results that can be used to clarify which of the possible entropies of a general probabilistic theory has operational meaning in terms of optimal compression ratio. We started here the analysis of the relation between information content and other entropies of a general probabilistic theory focusing on the accessible information. On one hand, both quantities provide a measure of purity of a state, and on the other hand we proved that the accessible information is a lower bound for information content. 
An important open question is under what conditions the information content, which by definition is the optimal compression ratio, coincides with the accessible information in general.

\acknowledgments
A. T. acknowledges financial support from Elvia and Federico Faggin foundation through Silicon Valley Community Foundation, Grant No. 2020-214365.

\bibliography{biblio-ic}

\appendix
\section{A simple Lemma}\label{app:trivial_lemma}
\begin{lemma}
Let $\{c\}_{i\in\sX}\subseteq\eff(\rB)$ be an observation test and $\{A\}_{i\in\sX}\subseteq\eff(\rA)$ a collection of effects. If causality holds, then $\sum_{i\in\sX}A_{i}\boxtimes c_{i}\in\eff(\rA\rB)$.
\end{lemma}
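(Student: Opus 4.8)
The plan is to exhibit $\sum_{i\in\sX} A_i\boxtimes c_i$ as the coarse-graining of a single observation test on $\rA\rB$, built with the help of strong causality (the \emph{Causal theories} assumption, which is the form of causality adopted here and the one that legitimizes conditional tests). The key conceptual point is that the effects $A_i$ vary with the outcome $i$, so I cannot obtain the claim by composing $\{c_i\}_{i\in\sX}$ in parallel with one fixed measurement on $\rA$. Instead, the measurement performed on $\rA$ must be \emph{conditioned} on the outcome $i$ read out on $\rB$, and it is precisely such conditional tests that strong causality declares to be admissible.

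Concretely, I first observe that each $A_i\in\eff(\rA)$, being an observation event, belongs to some observation test on $\rA$; I fix one such test $\{A_i^{(k)}\}_{k\in K_i}$ and label it so that $A_i=A_i^{(0)}$. Next I form the test
\[
\{\tI_\rA\boxtimes c_i\}_{i\in\sX},
\]
namely the parallel composition of the identity (singleton) test on $\rA$ with the observation test $\{c_i\}_{i\in\sX}$ on $\rB$; this is a legitimate test from $\rA\rB$ to $\rA$ whose outcomes are labelled by $i$. I then invoke strong causality with first test $\tA_i\coloneqq\tI_\rA\boxtimes c_i$ and, for each outcome $i$, the conditional second test $\{A_i^{(k)}\}_{k\in K_i}$ on $\rA$. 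The assumption guarantees that the family of sequential compositions $\{A_i^{(k)}(\tI_\rA\boxtimes c_i)\}_{(i,k)}$ is an observation test of the theory on $\rA\rB$. By the interchange law between sequential and parallel composition one has $A_i^{(k)}(\tI_\rA\boxtimes c_i)=A_i^{(k)}\boxtimes c_i$, so this test is exactly $\{A_i^{(k)}\boxtimes c_i\}_{(i,k)}$.

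Finally, I coarse-grain over the subset of outcomes $\{(i,0):i\in\sX\}$: grouping these into a single outcome yields a valid test whose associated event is $\sum_{i\in\sX}A_i^{(0)}\boxtimes c_i=\sum_{i\in\sX}A_i\boxtimes c_i$. As an event of a test this is, by definition, an effect of $\rA\rB$, which is the thesis. I expect the only genuinely delicate point to be recognizing that the hypothesis must be applied in the conditional form (the $\rA$-measurement depending on the $\rB$-outcome), so that strong causality is really what is needed; the parallel-composition identity and the coarse-graining argument are then routine.
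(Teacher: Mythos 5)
Your proof is correct and follows essentially the same route as the paper's: embed each $A_i$ in an observation test, apply strong causality to the conditional composition of $\{\tI_\rA\boxtimes c_i\}_{i\in\sX}$ with the outcome-dependent tests on $\rA$, identify the resulting events as $\{A_i^{(k)}\boxtimes c_i\}$ via the interchange law, and coarse-grain over the outcomes selecting $A_i$. The only difference is cosmetic: you make the final coarse-graining step (summing over the outcomes $(i,0)$) slightly more explicit than the paper does.
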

\begin{proof}
This is a straightforward consequence of causality. For any $i\in\sX$, there exists 
an observation test $\{\tilde{A}^{(i)}_{j}\}_{j\in\sY_{i}}$ such that  $A_{i}\in\{\tilde{A}^{(i)}_{j}\}_{j\in\sY_{i}}$.
Thus, if we consider the test $\{\tI_{\rB} \boxtimes c_{i}\}_{i\in\sX}$ and the collection of effects
$\{\tilde{A}^{(i)}_{j}\boxtimes c_{i}\}_{(i,j)\in\sX\times\sY}$ we have
\begin{equation}
  \begin{aligned}
    \Qcircuit @C=0.9em @R=.8em @! R {&\ustick{\rA} \qw&\measureD{\tilde{A}^{(i)}_{j}} \\
	& \ustick{\rB} \qw & \measureD{c_{i}}}
  \end{aligned}
  \quad=\quad
  \begin{aligned}
    \Qcircuit @C=0.9em @R=.6em @! R {&\ustick{\rA}\qw&\multigate{1}{\tA_i}&\ustick{\rA} \qw &\gate{\tB_{j}^{i}}&\ustick \rI 
 								 \qw&\qw \\
							& \ustick{\rB} \qw & \ghost{\tA_i} }
  \end{aligned}\ ,
\end{equation}
with $\tA_i\coloneqq\tI_{\rA} \boxtimes c_{i}$ and $\tB_{j}^{i}\coloneqq\tilde{A}^{(i)}_{j}$. Therefore, causality implies that 
$\{\tilde{A}^{(i)}_{j}\boxtimes c_{i}\}_{(i,j)\in\sX\times\sY}$ is an observation test, and 
$\sum_{i\in\sX}A_{i}\boxtimes c_{i}\in\eff(\rA\rB)$, being a coarse graining of effects from the same test.
\end{proof}

\section{Proof of Lemma~\ref{lemma:bound_infocontent}}\label{app:purity_lemmas}
We start by defining the following number
\begin{equation}
\zeta(N,\delta)\coloneqq\sup\{\varepsilon\mid \,E_{N,M,\varepsilon}(\rho)\subseteq E^C_{N,M,\delta}(\rho)\}.
\label{eq:zeta}
\end{equation}
Firstly, we can observe that in the above definition we can safely take the maximum, since the following inclusion holds
\[
E_{N,M,\zeta(N,\delta)}(\rho)\subseteq E^{C}_{N,M,\delta}(\rho).
\]
Indeed, let $(\tE,\tD)\in E_{N,M,\zeta(N,\delta)}$. By definition of $E_{N,M,\zeta(N,\delta)}$ we of have
\[
\sup_{\rC,\{\Psi_i\}}\sum_i\normop{[(\tD\tE-\tI)\boxtimes\tI_{\rC}]\Psi_i}<\zeta(N,\delta)
\]
then there exists $\varepsilon'<\zeta(N,\delta)$ such that $\sup_{\rC,\{\Psi_i\}}\sum_i\normop{[(\tD\tE-\tI)\boxtimes\tI_{\rC}]\Psi_i}<\varepsilon'$. Thus, by definition of $\zeta(N,\delta)$, one has $(\tE,\tD)\in E_{N,M,\varepsilon''}(\rho)$ with $\varepsilon'<\varepsilon''<\zeta(N,\delta)$ and $E_{N,M,\varepsilon''}(\rho)\subseteq E^{C}_{N,M,\delta}(\rho)$. Finally, since $E_{N,M,\varepsilon'}(\rho)\subseteq E_{N,M,\varepsilon''}(\rho)$, we have $(\tE,\tD)\in E^{C}_{N,M,\delta}(\rho)$, and consequently $E_{N,M,\zeta(N,\delta)}(\rho)\subseteq E^{C}_{N,M,\delta}(\rho)$.

This inclusion has another consequence, which is our starting point for proving the lemma. Indeed, by definition one has 
\[
\limsup_{N\rightarrow\infty}R_{\zeta(N,\delta),N}(\rho)\geq R_{\delta}^{C}(\rho)
\]
We now have the following two possibilities
\begin{enumerate}
\item $\exists\delta_{0}>0$ such that, $\forall0<\delta<\delta_0$, $\liminf_{N\rightarrow\infty}\zeta(N,\delta)=0$;
\item $\forall\delta>0$ one has $\liminf_{N\rightarrow\infty}\zeta(N,\delta)=:\overline{\zeta}(\delta)>0$.
\end{enumerate}
Let us start analysing case 2. In this case, by definition of limit inferior, one has
\begin{align*}
&\forall\delta,\gamma>0\\
&\left\{
\begin{aligned}
&\exists N_0,\quad\forall N\geq N_0,\quad \zeta(N,\delta)>\bar\zeta(\delta)-\gamma,\\
&\forall N_0,\quad\exists N\geq N_0,\quad \zeta(N,\delta)<\bar\zeta(\delta)+\gamma.
\end{aligned}
\right.
\end{align*}
This implies that for every $\delta>0$ and every positive $\gamma$, for suitably large $N$ it is $R_{\overline{\zeta}(\delta)-\gamma,N}(\rho)\geq R_{\zeta(N,\delta),N}(\rho)$, and consequently, for suitably large $N$ it is $R_{\overline{\zeta}(\delta)/2,N}(\rho)\geq R_{\zeta(N,\delta),N}(\rho)$. In turn, this implies 
\[
R_{\overline{\zeta}(\delta)/2}(\rho)\geq\limsup_{N\rightarrow\infty}R_{\zeta(N,\delta),N}(\rho)\geq R_{\delta}^{C}(\rho),
\]
and finally, being $\overline{\zeta}(\delta)$ increasing as a function of $\delta$, taking the limit for $\delta\rightarrow 0$ one has some value $\varepsilon\geq 0$ such that
\[
I(\rho)\geq\lim_{\zeta\rightarrow\varepsilon}R_{\zeta}(\rho)=\lim_{\delta\rightarrow 0}R_{\overline{\zeta}(\delta)}(\rho)\geq I^{C}(\rho).
\]

We now turn to case 1, and show that this is not possible. The hypotheses imply indeed that there exists $\delta_0>0$ such that 
$\liminf_{N\rightarrow\infty}\zeta(N,\delta_0)=0$, and the same is then true of every $0<\delta\leq\delta_0$. This means that for 
every $\gamma>0$ and every $N_0$ there exists $N\geq N_0$ such that $\zeta(N,\delta)<\gamma$ for all $0<\delta\leq\delta_0$.
By definition, this means that for every $\gamma$ there exists a scheme $(\tE,\tD)\in E_{N,M,\gamma}(\rho)$ such that 
$(\tE,\tD)\notin E_{N,M,\delta}^{C}(\rho)$. More explicitly
\begin{align*}
&\sup_{\rC,\{\psi_i\}}\,\sum_i\normop{[(\tD\tE-\tI)\boxtimes\tI_\rC]\psi_i}<\gamma,\\
&\sup_{\rC,\{\psi_i\},\{a_j\}}\,L^{-1}|I(\rX:\rY)-I(\rX:\tilde \rY)|>\delta,
\end{align*}
where $L$ has been introduced in def.~\ref{def:classinfoc}. First of all we remark that if $m=1$ or $n=1$, then $H(\rX)=0$ or
$H(\rY)=H(\tilde{\rY})=0$, respectively, and thus $I(\rX:\rY)=I(\rX:\tilde{\rY})=0$, since $I(\rA:\rB)\leq\min\{H(\rA),H(\rB)\}$. The minimum relevant value of
$L$ is thus $\log_2 3$. Now according to theorem 2 in~\cite{4294175}, for $\norma{\mathbf p-\mathbf q}<\gamma<1-1/mn$ one has
\begin{align*}
&L^{-1}|I(\rX:\rY)-I(\rX':\rY')| \\
\leq &3\gamma+3L^{-1}H_2(\gamma) \\
\leq & 3\gamma+\frac3{\log_23}H_2(\gamma)
\end{align*}
where $\rX,\rY$ and $\rX',\rY'$ are distributed according to $p_{i,j}$ and $q_{i,j}$, respectively. We can then conclude that for every $\gamma>0$ one has
\[
\delta<3\gamma+\frac3{\log_23}H(\gamma).
\]
However, our hypotheses imply that the latter condition must hold for some $\delta>0$, which is absurd.
\section{Proof of Lemma~\ref{lemma:posbound_classinfocontent}}\label{app:proofincubo}
Let us take $\delta>0$, and consider $(\tE,\tD)\in E^C_{N,M,\delta}(\rho)$. Let us consider first a single use of the source associated
with $\rho$ corresponding to the decomposition $\{\Psi_i\}$, and let $\{ a_j\}$ be the observation test such that $I(\rX_0:\rY_0)$ is maximum, where $\rX_0$ is the classical variable corresponding to the outcome $i$ of the preparation test, and $\rY_0$ that of the 
observation test. Notice that by Krein-Millman's theorem and Caratheodory's theorem one can always find the supremum of 
mutual information considering atomic decompositions and observation-tests with a bounded number of elements, and thus the
optimisation problem has a compact domain. Let now $\{\Psi_{\bvec i}\}$ be the decomposition of $\rho^{\boxtimes N}$ defined
by
\[
\Psi_{\bvec i}\coloneqq\Psi_{i_1}\boxtimes\Psi_{i_2}\boxtimes\dots\boxtimes\Psi_{i_N},
\]
and $\Psi_i$ be the decomposition that maximises $I(\rX_0:\rY_0)$, with $m_0$ outcomes.
Let now $\{ b_j \}$ be the observation test on N copies of the system that maximizes $I(\rX:\rY)$ where $\rX$ is the i.i.d.~classical 
variable given by the preparation event $\bvec i$ and $\rY$ by the outcome $j$. Since $\{b_j\}$ maximizes the mutual information
it is clear that the test $\{(b_j|\tD\tE\}$ will provide a mutual information $I(\rX:\tilde{\rY})$ no larger than $I(\rX:\rY)$. Thus we can write
\begin{align*}
\delta&>\frac{I(\rX:\rY)-I(\rX:\tilde{\rY})}{\log_2 m_0^ND(N)-1}\\
&\geq \frac{I(\rX:\rY)-I(\rX:\tilde{\rY})}{N\log_2 m_0D_0+\log_2k},
\end{align*}
where in the first bound we used the fact that the number of outcomes for the observation test maximising the mutual information
does not exceed the dimension of the space of effects $D(N)$, while in the second bound we used the hypothesis that there exist $k,D_0$ such that $D(N)\leq k D_0^N$. Now, by definition of $I(\rX:\rY)$ we have $I(\rX:\rY)\geq NI(\rX_0 : \rY_0)$, while by the result of theorem 2 in~\cite{2015JPhA...48b5302F} we have 
\[
I(X:\tilde{Y})\leq \log_2 D(M)\leq log_2 k' + M\log_2 D_1,
\]
where we think of the scheme given by the decomposition $\{ \tE\state{\Psi_{\bvec i}} \}$ and the observation test given by
$\{ \effect{b_j}\tD \}$, involving $M$ obits. We can then write the following inequality
\[
\delta > \frac{NI(X_0:Y_0) -M\log_2 D_1 - \log_2 k'}{N\log_2 m_0 D_0 + \log_2 k},
\]
and consequently 
\begin{align*}
& \frac{M}{N}\frac{\log_2 D_1 + \log_2 k' / M}{\log_2 m_0 D_0 +\log_2 k/N}  +\delta\\
& >\frac{I(X_0:Y_0)}{\log_2 m_0D_0 + \log_2 k/N}.
\end{align*}
In particular, if the scheme $(\tE,\tD)$ has the minimum M for fixed $N,\delta$ we can then conclude that
\begin{align*}
R_{\delta,N}^C&\frac{\log_2 D_1}{\log_2 m_0 D_0 + \frac{\log_2 k}{N}} +\delta\\
&+ \frac{1}{N}\frac{\log_2 k'}{\log_2 m_0 D_0 + \frac{\log_2 k}{N}}\\
>&\frac{I(X_0:Y_0)}{\log_2 m_0 D_0 + \frac{\log_2 k}{N}}.
\end{align*}
Taking te limit superior for $N\to\infty$ on both sides we have
\[
R_{\delta}^C\frac{\log_2 D_1}{\log_2 m_0D_0}+\delta \geq \frac{I (X_0:Y_0)}{\log_2 m_0D_0},
\]
and finally, in the limit $\delta\rightarrow 0$ we obtain
\[
I^C(\rho)\frac{\log_2 D_1}{\log_2 m_0D_0}\geq\frac{I(X_0:Y_0)}{\log_2 m_0D_0},
\]
namely
\[
I^C(\rho)\geq\frac{I(X_0:Y_0)}{\log_2 D_1}.
\]
For a mixed state, $I(X_0:Y_0)>0$ and this implies the thesis.

\end{document}